\journal{ }
\newtheorem*{remark}{Remark}
\theoremstyle{plain} \newtheorem{lemma}{Lemma}
\numberwithin{equation}{section}
\theoremstyle{plain} \newtheorem{thm}{Theorem}[section]
\newtheorem{cor}{Corollary}[section]
\begin{document}

\begin{frontmatter}

\title{Predictive Criteria for Prior Selection Using Shrinkage in Linear Models}

\author{Dean Dustin \fnref{addr1}},
\author{ Bertrand Clarke\fnref{addr2}},
\author{ Jennifer Clarke\fnref{addr3}}


\address[addr1]{Department of Statistics, University of Nebraska-Lincoln,  340 Hardin Hall North, PO Box 830963, Lincoln, NE, 68583-0963.
   ddustin8@huskers.unl.edu 
}

\address[addr2]{Same;
    bclarke3@unl.edu
}

\address[addr3]{Same;
    jclarke3@unl.edu
}

\begin{abstract}

Choosing a shrinkage method can be done by selecting a penalty from a list of pre-specified
penalties or by constructing a penalty based on the data.   If a list of penalties
for a class of linear models is given, we provide comparisons based on sample size
and number of non-zero parameters under a predictive stability criterion based on
data perturbation.  These comparisons provide recommendations for penalty selection
in a variety of settings. If the preference is to construct a penalty customized for a given
problem, then we propose a technique based on genetic algorithms, again using
a predictive criterion.  We find that, in general, a custom penalty never performs
worse than any commonly used penalties but that there are cases the custom
penalty reduces to a recognizable penalty.  
Since penalty selection is
mathematically equivalent to prior selection, our method also constructs priors.

The techniques and recommendations we offer are intended for finite
sample cases.   In this context, we argue that predictive stability under perturbation is one
of the few relevant properties that can be invoked when the true model is not
known. Nevertheless, we study variable inclusion in simulations and,
as part of our shrinkage selection strategy, we include oracle property considerations.
In particular, we see that the oracle property typically holds for penalties that satisfy
basic regularity conditions and therefore is not restrictive enough to play a direct role
in penalty selection.  In addition, our real data example also includes
considerations merging from model mis-specification.

\end{abstract}

\begin{keyword}[class=MSC]
\texttt[Primary ]{62F15}
\texttt[; secondary ]{62C10, 60G25}
\end{keyword}

\begin{keyword}
prediction \sep
penalized methods \sep
shrinkage \sep
oracle property \sep 
prior selection \sep
genetic algorithm \sep
evolutionary computation
\end{keyword}
\end{frontmatter}

\section{Shrinkage and Prediction}
\label{intro}

 In the context of linear models, inference problems in which the number of parameters $p$
is bigger than the sample size $n$, i.e., with
$p>n$, are ill-posed and require some 
form of regularization to be solved.  That is, some
extra information must be added to ensure the existence of a reasonable solution.
The earliest form of this is called Tikhonov regularization and was used initially for
matrix inversion.   In Statistics,
ridge regression is probably the first occurrence of Tikhonov regularization,
see \cite{Hoerl:1962}.     By the early 1990's, $L^2$ regularization was common
in a neural networks context, see \cite{Sjoburg:Ljung:1992}, not just to ensure
that a solution existed but also to reduce variance.   An important step forward was replacing
the $L^2$ penalty with an $L^1$ penalty, see \cite{Tibshirani:1996}.  This shrinkage method is 
called the least absolute shrinkage and selection operator (LASSO).  It provides a form
of regularization that does variable selection as well as variance
reduction while ensuring solutions exist.  Another insight was the concept of 
an `oracle property' (OP)
first proved for a penalty called the smoothly clipped absolute deviation (SCAD) in the context
of linear models;
see \cite{Fan:Li:2001}.  The OP meant that asymptotically, as
$n \rightarrow \infty$, the parameter estimates from the SCAD penalty
behaved asymptotically as if the correct $p$ were known, i.e., the parameter estimates 
either went to zero or were consistent, asymptotically normal, and efficient according to
whether the variables they represented were not or were in the true model.

Over the last two decades, numerous shrinkage methods have been
proposed and studied as individual methods, representing individual priors
and penalties, for the purposes of parameter inference.   Here, we want to choose
a penalty or prior from a class of penalties or priors.  One way to do this is
to fix a list of shrinkage methods, find a basis for comparing them, and choose
the best.  Alternatively, we can take a `build-your-own' approach and
construct a shrinkage method from part of the data and use it on the rest of the data.
In this way we can choose the optimal shrinage method adaptively.  Here, we provide
both a comparison of `off the shelf' methods and a build-your-own technique (based
on genetic algorithms) under a predictive optimality criterion.  The build-your-own
shrinkage methods never perform worse that the off-the-shelf methods,
but may return an off-the shelf method as optimal.  We give an example
of each case.

More formally,  write a linear model (LM) of the form 
$Y = X\beta + \epsilon$ where $Y= (Y_1, \ldots , Y_n)^T$,
$X$ is an $n \times p$ matrix of design values, and $\epsilon = (\epsilon_1, \ldots, \epsilon_n)^T$
is random error $\epsilon_i \sim N(0, \sigma^2)$ for some $\sigma > 0$
and assume we have a data set ${\cal{D}} = \{(y_i, x_i) \mid i =1, \ldots , n\}$.
A nonadaptive shrinkage method gives parameter estimates 
\begin{eqnarray}
\hat{\beta} = \arg \min_{\lambda, \beta} \sum_{i=1}^n L_1(y_i - x_i^T\beta) + \lambda \sum_{j=1}^p L_2(\beta_j)
\label{nonadaptive}
\end{eqnarray}
where $L_1$ and $L_2$ are loss functions, $x_i$ is the $i$-th row of $X$,
and $\lambda$ is the decay parameter.  The term shrinkage arises from the
fact that as $\lambda \rightarrow \infty$,  each $\beta_j \rightarrow 0$.
Sometimes the first term on the right hand side of (\ref{nonadaptive})is replaced by a log-likelihood; in this case and if the likelihood is normal, $L_1$ corresponds to squared error.  If $L_2$ is also
squared error we get the optimization that gives ridge regression.

By contrast an adaptive shrinkage method typically gives estimates of the form
\begin{eqnarray}
\hat{\beta} = \arg \min_{\lambda, w^p, \beta} \sum_{i=1}^n L_1(y_i - x_i^T\beta) + \lambda \sum_{j=1}^p w_j L_2(\beta_j)
\label{adaptive}
\end{eqnarray}
where $w^p = (w_1, \ldots , w_p)^T$ and the $w_j$'s are weights on the individual $\beta_j$'s.
Again, the first term on the right hand side of (\ref{adaptive}) may be replaced by a log-likelihood.  Also, the dependence on $w_j$ in the
second term may be more complicated; we have represented the adaptivity of the constraint 
to the data as multiplicative in the penalty term 
for the sake of convenience.  
We regard the SCAD and minimax concave (MCV) penalties 
(\cite{Zhang:2010}) as 
adaptive because their penalty terms or constraints are data driven
even though they only introduce one extra non-multiplicative 
parameter (and have the OP).   This is not simply parameter counting. Note that the elastic net (EN), \cite{Zou:Hastie:2005}, introduces two
parameters and is not adaptive while the adaptive EN (AEN) is adaptive and has the OP;
see  \cite{Zou:Zhang:2009}.

As written, expression \eqref{adaptive} introduces $p$ new parameters, the $w_j$'s.   When $n$
is not large, it is unclear how to estimate the $w_j$'s.  Two main techniques have been
proposed.  One, due to \cite{Zou:2006} (see also \cite{Wang:etal:2007}), is to set $\hat{w}_j = 1/|\hat{\beta}_j|^\gamma$
where $\hat{\beta}_j$ is any $\sqrt{n}$-consistent estimator of $\beta_j$ (e.g., from SCAD
that only adds one extra parameter or from linear models when $n$ is large enough) 
and choose $\gamma$ by a cross-validation criterion.  Another method, due to \cite{Gian:Wang:2013}, sets
$\hat{w}_j = SE(\beta_{j, OLS})^\gamma/ |\hat{\beta}_{j, OLS}|^\gamma$,  using the ordinary
least squares (OLS) estimator.   This requires $n>p$ unless some
auxiliary technique is used to find the SE.  \cite{Gian:Wang:2013} argued that their
method works well in cases where there is high collinearity.  In practice, for both methods, $\gamma=1$ is used to avoid extra computation. Here, we have exclusively used
the Zou method since it does not require $n>p$ and has a nice interpretation:
As $\hat{\beta}_j \rightarrow 0$, $\hat{w}_j \rightarrow \infty$ forcing $\beta_j=0$ in
\eqref{adaptive}.

Formally, the OP is the following.  Write $\beta = (\beta_1, \beta_2)$ where $\beta_2$ represents the
zero components of $\beta$.  Under various
conditions, $\sqrt{n}$ consistent local minimizers 
$\hat{\beta} = (\hat{\beta}_1,  \hat{\beta}_2)$ from certain shrinkage criteria
(such as SCAD) satisfy
\begin{eqnarray}
\hat{\beta}_2 \rightarrow 0 \quad \hbox{and} \quad \sqrt{n}I(\beta_1, 0) (\hat{\beta}_1 - \beta_1) \rightarrow
N(0,1).
\nonumber
\end{eqnarray}
Roughly, shrinkage methods segregate into those that are
nonadaptive, i.e., introduce exactly one `decay' parameter and usually do not 
satisfy the OP, and those that introduce two or more decay parameters
and often satisfy the OP.  We will see that, contrary to initial impressions, 
the oracle property is not rare. 

We note that \eqref{nonadaptive} corresponds to a joint density $\rho$ on the
data and $\beta$.   Indeed, exponentiating gives that $\hat{\beta}$ corresponds to
the mode of the posterior
\begin{eqnarray}
\rho(\beta \mid Y^n) \propto e^{-\lambda \sum_{j=1}^p L_2(\beta_j)}
e^{- \sum_{i=1}^n L_1(y_i - x_i^T\beta) }
\label{jointdensity}
\nonumber
\end{eqnarray}
in which $L_2$ defines a prior on $\beta$ with hyperparameter $\lambda$.
A similar manipulation can be applied to \eqref{adaptive}.   This means that penalty
selection is mathematically equivalent to prior selection.
However, the equivalence is only mathematical because the class of reasonable
penalties is a proper subset of the class of reasonable priors.    In particular, most reasonable
penalties are convex -- our method in Sec. \ref{usingGAs} assumes convecity,
for instance -- but priors do not have to be log-convex.

Even though shrinkage methods were originally introduced as a way to solve the $n<p$ problem,
the OP uses $n \rightarrow \infty$.  This is partially ameliorated by results that
give analogs of the OP where $p$ increases much faster than $n$ does; see \cite{Fan:Lv:2013}.
However, the cost of amerlioration is often artificial conditions on the 
parameter space and/or design matrix.
Moreover, many of the original examples given to verify that shrinkage methods
were effective actually had $n > p$ and took $p=8$, see \cite{Fan:Li:2001}, \cite{Zou:2006},
and \cite{Wang:etal:2007}.   To the best of our knowledge
no systematic comparison of shrinkage methods for different relative sizes of $n$ and $p$
has been done. This paper seeks to fill this gap and provide recommendations
for when different methods work well.

Our comparison is in terms of predictive stability and accuracy of model selection. 
That is, after finding $\hat{\beta}$ for a given
shrinkage method we define the predictor
\begin{eqnarray}
\hat{Y}(x) = x^T \hat{\beta}
\label{predictorlm}
\end{eqnarray}
for $Y(x)$ at some new value $x$.    Then we evaluate how well $\hat{Y}$ predicts
when the data are perturbed.   We perturb the data using the technique of 
\cite{Luo:etal:2006}.  The idea is to add $N(0, \tau^2)$ noise to the $y_i$'s in
${\cal{D}}$ and then use part of the data to form a predictor and the rest of the data to 
evaluate the predictor.  We do this by generating `instability' curves, basically $L^2$ errors
as a function of $\tau$.   A good predictor will have instability curves with low values,  that 
are smooth, and increase slowly with $\tau$.  

We also use more conventional accuracy measures for variable selection. We argue that
 pairing the two provides an assessment that captures analogues of 
both variance (from the instability
curves) and bias (from the accuracy measures). We regard instability as 
more important because even if a variable
is incorrectly included its contribution may be small if its coefficient is near zero
and if it is incorrectly excluded the bias should show up in the instability curve.

A first contribution of this paper is the use of predictive stability to compare 
ten shrinkage methods for 
four values of $n$ when $p=100$.   Very roughly,  for $n << p$ we find that
ridge regression (RR) works comparatively better but that, as
sparsity increased,  EN was often best.   As $n$ increased but remained less than $p$,
EN improved relative to RR and was often the best.   When $n > p$,  LM's were usually
best with little or no perturbation of the data but  performance 
rapidly deteriorated as perturbations
or sparsity increased.  Also, a version
of SCAD often performed best.    For penalty selection in practice, however,
these conclusions must be tempered by the
more detailed analyses given in Sec.  \ref{results1} where we include considerations
from variable selection and the intuition developed in Sec. \ref{realdata} to address model
mis-specification

A second contribution of this paper is to observe that the OP is actually relatively common;
see Sec. \ref{theory1}.
So,  we needn't limit ourselves to the specific shrinkage methods that have been studied.
When $n > p$, we search a class of shrinkage methods that have the OP; when
$n < p$ we search a class of shrinkage methods that do not necessarily have
the OP; see Subsec. \ref{usingGAs}.
In either case, we generate priors (or penalties) that are optimal for a given data set,
again in a predictive instability sense. 
In practice, this means we can use a technique such as genetic algorithms (GAs) to optimize a
fitness function corresponding to good prediction to find optimal priors.  
We find that GA optimization for use in prior selection benefits from choosing
point estimates for the $\beta_j$'s when $p < n$.  These estimates
can always be assumed to exist and our theory from Sec. \ref{theory1}
carries over to this setting.  That is, we `shrink' to estimated
values in the James-Stein sense to get shrinkage estimates in the penalized likelihood
sense.

Thus, a third contribution of this paper
is  the use of GAs and part of the data to form a prior that we then
apply predictively.   As seen in Sec. \ref{GAstuff}, using both GAs and shrinkage 
(when we can) together gives better predictive performance.
By optimizing over the prior in this way we are effectively optimizing over the penalty
and hence choosing the optimal shrinkage method.   This optimum may or may not coincide
with an established shrinkage method but will still have the OP when $n$ is large enough.
Thus, we have chosen our shrinkage method to be predictively optimal for our data.
In fact, we are de facto using GA's to obtain an approxiamtion to the
posterior based on the first part of the data for use
with the second part of the data.  We verify 
in examples that
this is predictively better than simply using all the data to make predictions.

The structure of this paper is as follows.  Sec. \ref{theory1} gives general conditions
for the OP to hold for a wide range of adaptive penalties.
We present our comparisons of existing shrinkage methods in Sec. \ref{results1}
and provide general recommendations on which to use in various settings.
We compare our recommendations to the results for a benchmark data in Sec. \ref{realdata}.
In Sec.  \ref{GAstuff}, we present our GA optimization verifying that the theory in
Sec. \ref{theory1} holds and the result of the GA is optimal given the data.
We summarize our overall findings and intuition in Sec. \ref{conclusions}.

\section{Theoretical Results}
\label{theory1}

Our proofs are motivated by the techniques in \cite{Fan:Li:2001} 
and \cite{Wang:etal:2007}.  We assume an adaptive setting
and allow different penalty functions on different parameters.
This is different from \cite{Fan:Lv:2013} who did not treat different penalty functions
on different parameters or adaptivity.   Their main result concerned
the asymptotic equivalence of penalized methods and permitted
$p$ to increase with $n$.  Like other papers that allow $p$ to increase with
$n$,  some of the conditions appear artificial.   For example, aside from
truncations of  the parameter space, one must assume that there is a sequence
of explanatory variables such that if one of the early variables is correct
and accidentally not included it can be reconstructed from later explanatory variables
in the sequence, thereby sacrificing identifiability.  Our result, like many others, assumes
either fixed $p$ or $p$ increasing so slowly with $n$ that the required convergences hold.

\subsection{General Penalized Log Likelihood}
\label{genloglike}

Write the linear model as
 \begin{equation}
\label{reg} 
 Y_i = x_i ' \beta + \epsilon_i, \hspace{.25 in }   i=1, \ldots, n,
 \end{equation}
 where $x_i = (x_{i1}, \ldots , x_{ip})'$ is the p-dimensional covariate, $\beta = (\beta_{i1}, \ldots , \beta_{ip})'$ is the vector of associated regression coefficients, and $\epsilon_i$ are IID random errors with median 0. Assume that $\beta_j \neq 0$ for $j \leq p_0$ and $\beta_j =0 $ for $j > p_0$ for some $p_0 \geq 0 $. Here, we regard the $x_i's$ as deterministic. When needed we write $\beta \in \Omega \subset \mathbb{R}^p$ where $\Omega$ is open and $\Omega = \bar{\Omega}^0$. 
  
Let $ (x_i, Y_i) $ for $i=1, \ldots , n$ each have density $\rho(Y_i |x_i, \beta)$ (with respect to a fixed dominating measure)  such that the six regularity conditions stated below are satisfied.  Let $L(\beta|x^n)$ be the log-likelihood function of the observations $(x_1, Y_1) , \ldots , (x_n, Y_n) $ and denote the penalized log-likelihood objective function as
$$
Q(\beta) = L(\beta|x^n) + n\sum^n_{i=1}\lambda_j f_j(\beta_j).
$$

Here we write $x^n$ to mean $x_1, \ldots, x_n$ for ease of notation. Let \begin{align*}
&a_n = \text{max} \{ \lambda_j (\mathcal{D}_n): 1 \leq j \leq p_0 \},\\
&b_n = \text{min} \{ \lambda_j (\mathcal{D}_n): p_0 < j \leq p \}.
\end{align*}
Below we state our six regularity conditions.

\begin{description}

\item[Condition 1]\label{cond1}
Each Fisher information matrix $I(\beta | x_i)= -E\left(\frac{\partial^2 }{\partial^2 \beta} ln\; \rho(Y_i | x_i, \beta)\right)$ exists and is positive definite uniformly in $i$, i.e. is bounded above and below. Also for some $B, b > 0$,  we have $BI_{p\times p} \geq I(\beta |x_i ) \geq bI_{p\times p}$. 

\item[Condition 2]\label{cond2} The log likelihood of $\rho(Y_i | x_i, \beta)$ has a convergent second order Taylor expansion. That is, for all $j, \ell = 1, \ldots, p$, we have that $\forall \beta \in \Omega$, $\forall x_i$ and as $\eta \rightarrow 0$,
$$ 
E\left[\sup_{\beta \in \cal{B}(\beta_0,\eta)} \left| \frac{\partial^2}{\partial \beta_j \partial \beta_{\ell}}ln\; \rho(Y_i| x_i, \beta) \right| \right] \rightarrow I(\beta_0|x^n),
$$
where $\cal{B}(\beta_0,\eta)$ is the ball centered at $\beta_0$ with radius $\eta>0$
in Euclidean distance.

\item[Condition 3]\label{cond3} For any $\epsilon$, $\exists N $ such that $\forall n \geq N$ 
$$  
\sup \left| \frac{1}{n} \sum^n_{i=1} I(\beta | x_i) - I(\beta| x^\infty) \right| < \epsilon
$$ 
where $I(\beta | x^\infty)$ is positive definite.

\item[Condition 4]\label{cond4}
There exists an increasing sequence of compact sets $C=C_n$ in the parameter space and constants $M=M_n \in \mathbb{R}^+$ such that for all $n$, $\sup_{\beta_j \in C_n} |f_j'(\beta_j)| \leq M_{n}$. That is, the penalty term has a uniformly bounded first derivative in $j$. 

\item[Condition 5]\label{cond5}
For $\epsilon $ in a neighborhood around zero, $\sup_j \sup_{|\epsilon| \rightarrow 0} f'_j(\epsilon) =0$. Consequently, if $\beta_j=0$, then we have $f'_j(0)= \sup_{|\epsilon| \rightarrow 0} f'_j(\epsilon) =0$ uniformly in $j$. 

\item[Condition 6]\label{cond6}
There exists uniform Taylor expandability for 
$f_j(\beta_j)$. That is, for $h\in \mathbb{R}$, $f_j(\beta_0 +h) - f_j(\beta_0) = f_j'(\beta_0)h +o_j(1)$ uniformly in $j$ where $\sup_j o_j(1) \rightarrow 0$. 

\end{description}

Our main result for penalized likelihoods is the following.

\begin{thm}(Oracle Property)
\label{penllOP}
Assume Conditions 1--6 hold. Suppose $\sqrt{n}a_n \rightarrow 0$ in which $a_n$ 
satisfies $a_n = \frac{1}{h(n)\sqrt{n}}$ where $\frac{1}{h(n)\sqrt{n}} \rightarrow 0$ 
and $h(n) \rightarrow \infty$, and 
suppose $\sqrt{n}b_n\rightarrow \infty$ in which
$b_n = \frac{g(n)}{\sqrt{n}}$ where $\frac{g(n)}{\sqrt{n}} \rightarrow 0$ 
and $g(n) \rightarrow \infty$. Then, the estimator $\hat{\beta}=(\hat{\beta_{1}}',\hat{\beta_2}')'$
 satisfies $P(\hat{\beta_2}=0) \rightarrow 1$ and 
$\sqrt{n}\hat{I}_{1}(\beta_{1}| x^n) (\hat{\beta_{1}} - \beta_{1}) \rightarrow \mathcal{N}(0 , I_1(\beta_{1}|x^n))$, where $\beta_1$ is the component of $\beta$ containing all the nonzero elements.
\end{thm}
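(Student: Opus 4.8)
The plan is to follow the two-stage strategy of \cite{Fan:Li:2001}, adapted to the adaptive, coordinatewise-penalty setting, by first establishing a $\sqrt{n}$-consistent local minimizer, then proving sparsity, and finally deriving asymptotic normality of the active block.

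First I would prove existence of a $\sqrt{n}$-consistent local minimizer by the standard ``ball'' argument. Writing $\beta = \beta_0 + u/\sqrt{n}$, I would analyze the increment $Q(\beta_0 + u/\sqrt{n}) - Q(\beta_0)$ and show that for every $\varepsilon > 0$ there is a constant $C$ for which this increment has, with probability at least $1-\varepsilon$, the sign forcing a local extremum inside $\{\|u\| \le C\}$. For the likelihood part, Condition 2 supplies the second-order expansion, Condition 3 lets me replace $\tfrac{1}{n}\nabla^2 L(\beta_0)$ by the positive-definite $I(\beta_0|x^\infty)$, and Conditions 1 and 3 give $\nabla L(\beta_0)/\sqrt{n} = O_p(1)$ through a Lindeberg CLT built on the uniform bounds $bI \le I(\beta|x_i) \le BI$. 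Thus the likelihood contributes $W'u - \tfrac12 u' I u + o_p(1)$ with $W = O_p(1)$, whose positive-definite quadratic term dominates for large $\|u\|$. For the penalty I would split the sum: on the active block $j \le p_0$, Condition 6 (uniform Taylor expandability) and Condition 4 (uniformly bounded $f_j'$ on compacta) bound the contribution by $\sqrt{n}\,a_n\,M_n \sum_{j\le p_0}|u_j| = o(1)$ since $\sqrt{n}a_n \to 0$; on the inactive block the increment $f_j(u_j/\sqrt{n}) - f_j(0)$ carries the favorable sign and can simply be bounded below, so it does not obstruct the inequality. This yields $\|\hat\beta - \beta_0\| = O_p(1/\sqrt{n})$.

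Next I would establish sparsity, $P(\hat\beta_2 = 0) \to 1$. Working on the event $\|\hat\beta - \beta_0\| \le C/\sqrt{n}$, for each inactive coordinate $j > p_0$ I would examine $\partial Q/\partial\beta_j = \partial L/\partial\beta_j + n\lambda_j f_j'(\beta_j)$ on $0 < |\beta_j| < C/\sqrt{n}$. The likelihood derivative is $O_p(\sqrt{n})$ by Condition 1, whereas the penalty term is of order $n b_n f_j'(\beta_j)$; since $\sqrt{n}b_n \to \infty$ and, by Condition 5, $f_j'$ is controlled uniformly near the origin and shares the sign of $\beta_j$, the penalty term dominates in absolute value uniformly in $j$. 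Hence $\partial Q/\partial\beta_j$ is sign-definite on each side of zero, $Q$ is monotone in $\beta_j$ there, and the minimizer must satisfy $\hat\beta_j = 0$. I expect this step to be the main obstacle: it is where the rate $\sqrt{n}b_n\to\infty$ must be made to overcome the near-origin behavior encoded in Condition 5 \emph{uniformly} over the inactive coordinates, and the uniformity built into the $\sup_j$ of Conditions 4--6 is precisely what keeps the argument intact when the number of penalized coordinates is large.

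Finally, for asymptotic normality I would condition on the event $\hat\beta_2 = 0$, whose probability tends to one, reducing the problem to the $p_0$-dimensional active block. The stationarity equation $\nabla_1 Q(\hat\beta_1, 0) = 0$ expands, via Condition 2, as $\nabla_1 L(\beta_{01},0) + \nabla_{11}^2 L(\beta_{01},0)(\hat\beta_1 - \beta_{01}) + n\,r_n = 0$, where the penalty remainder $r_n$ on the active block is $O(a_n)$, so that $n r_n = o(\sqrt{n})$ after CLT scaling, because $\sqrt{n}a_n \to 0$, and is asymptotically negligible. Replacing $-\tfrac1n\nabla_{11}^2 L$ by $I_1(\beta_1|x^\infty)$ through Condition 3 and applying the Lindeberg CLT to $\nabla_1 L(\beta_{01},0)/\sqrt{n} \to \mathcal{N}(0, I_1)$ gives $\sqrt{n}(\hat\beta_1 - \beta_1) \to \mathcal{N}(0, I_1^{-1})$. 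Premultiplying by the consistent estimate $\hat I_1(\beta_1|x^n) \to I_1(\beta_1|x^n)$, justified by Condition 3 and continuity, and invoking Slutsky then yields $\sqrt{n}\,\hat I_1(\beta_1|x^n)(\hat\beta_1 - \beta_1) \to \mathcal{N}(0, I_1(\beta_1|x^n))$, the claimed conclusion.
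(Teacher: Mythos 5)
Your proposal follows the same three-stage Fan--Li architecture as the paper's own proof: a ball argument at rate $\alpha_n = n^{-1/2}+a_n$ for existence of a $\sqrt{n}$-consistent local minimizer (the paper's Theorem 3 in the supplement), a sign analysis of $\partial Q/\partial \beta_j$ on the inactive block driven by $\sqrt{n}b_n\to\infty$ and Condition 5 (the paper's Lemma 2), and a Taylor expansion of the score equation on the active block with the penalty term killed by $\sqrt{n}a_n\to 0$ (the paper's final argument). The proposal is correct and matches the paper's route in all essentials.
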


For proof see Supplement A, Sec. \ref{penlltheory}.

Therefore, we can construct an oracle procedure by choosing any $f_j(\beta_j)$ that satisfy Conditions 4 and 5.  We can use the same function $f_j(\beta_j)$ for each $j$ , if we want, or we can have a different penalty for each parameter. Thus, we are also not restricted, at least theoretically, in our choice of the shrinkage methods except perhaps by requiring the OP.

\subsection{Penalized Empirical Risk}
\label{penemprisk}

We now extend the OP results for penalized likelihoods to penalized empirical risks. This is arguably a more general setting because we allow a wider range for the function of the data as well as an arbitrary penalty function. 

Consider the same regression scenario as Section \ref{genloglike} and a distance
$d(y_i-x_i'\beta)$. 
Let $R(\beta | x^n)= \frac{1}{n} \sum^n_{i=1} d(y_i -x_i ' \beta)$ be 
the empirical risk of the observations $(x_1, Y_1), \ldots , (x_n, Y_n)$  where 
$x^n$ is shorthand notation for $x_1, \ldots, x_n$ and denote the penalized 
empirical risk objective function as
$$
Q(\beta) = R(\beta | x^n) + n \sum^p_{j=1} \lambda_j f_j(\beta_j).
$$
We specify some similar but slightly different conditions before getting into the proofs. These conditions are similar to those in Section \ref{genloglike}, but they are different and necessary for the cases we discuss in this section. 

\begin{description}[resume]

\item[Condition 7] The empirical risk  $\frac{1}{n} \sum^n_{i=1} d(y_i-x_i'\beta )$ has a convergent second order Taylor expansion. That is, for all $j, \ell = 1, \ldots, p$ we have that $\forall \beta \in \Omega$ and as $\eta \rightarrow 0$,
$$ 
E\left[\sup_{\beta \in B(\beta_0,\eta)} \left |\frac{\partial^2}{\partial \beta_j \partial \beta_{\ell}} \frac{1}{n} \sum^n_{i=1}d(y_i-x_i'\beta )\right | \right] \rightarrow E_L\left[R'' (\beta | x^{\infty}) \right] = I^*(\beta | x^{\infty})
$$
where $$R''(\beta | x^n)= \frac{\partial ^2}{\partial^2 \beta} R(\beta | x^n) =  \frac{\partial ^2}{\partial^2 \beta}\left( \frac{1}{n} \sum^n_{i=1} d(y_i-x_i'\beta )\right) 
$$
and for some $B, b > 0$,  we have $BI_{p\times p} \geq I^*(\beta |x^{\infty} ) \geq bI_{p\times p}$. In other words, $I^*(\beta | x^{\infty}) $ is positive definite and we abbreviated it to $I^*(\beta)$. 
\end{description}

\noindent We see that $I^*(\beta )$ is a sort of analog to the Fisher information matrix in the context of empirical risks. It is not necessarily the Fisher information matrix; however, if our empirical risk happens to correspond to a log-likelihood, then it is possible that $I^*(\beta ) = I(\beta)$.
\begin{description}[resume]

\item[Condition 8]
There exists an increasing sequence of compact sets $C=C_n$ in the parameter space and constants $M=M_n \in \mathbb{R}$ such that for all $n$, $  \sup_{\beta_j \in C_n} |f_j'(\beta_j)| \leq M_{n}$. That is, the penalty term has a uniformly bounded first derivative in $j$. 

\item[Condition 9]
For $\epsilon $ in a neighborhood around zero, $ \sup_j \sup_{|\epsilon| \rightarrow 0} f_j'(\epsilon) =0$. Consequently, if $\beta_j=0$, then we have $f_j'(0)=  \sup_{|\epsilon| \rightarrow 0} f_j'(\epsilon) =0$ uniformly in $j$. 

\item[Condition 10]
There exists uniform Taylor expandability for $f_j(\beta_j)$. That is, for $h\in \mathbb{R}$, $f_j(\beta_0 +h) - f_j(\beta_0) = f_j'(\beta_0)h +o_j(1)$ uniformly in $j$ where $\sup_j o_j(1) \rightarrow 0$.

\end{description}

\noindent Again the notion of uniform Taylor expandability in $j$ here requires the error terms $o_j(1)$ in Condition 10 go to zero at the same time.

The following lemma identifies the class of distance functions we consider.

\begin{lemma}
\label{exp_dprime}
Let $d(u)$ be an even distance function with a unique minimum at 0. If $u$ comes from some distribution with pdf $f_U(u)$ that is symmetric about zero with support $[-a, a]$ for $a \in \mathbb{R}$, then $E_U[d'(u)] = 0$.  \end{lemma}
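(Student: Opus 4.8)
The plan is to prove the identity by a pure parity/symmetry argument: I will show that $d'$ is odd, note that $f_U$ is even, conclude that the integrand $d'(u)f_U(u)$ is odd, and then invoke the elementary fact that an odd function integrates to zero over an interval symmetric about the origin. No probabilistic machinery beyond writing the expectation as an integral against $f_U$ is needed.

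First I would differentiate the evenness relation. Since $d$ is even we have $d(u)=d(-u)$ for all $u$; the lemma statement already presupposes that $d'$ exists, so differentiating both sides and applying the chain rule on the right gives $d'(u)=-d'(-u)$, i.e. $d'(-u)=-d'(u)$. Hence $d'$ is odd. As a sanity check, the hypothesis that $d$ has its unique minimum at $0$ forces $d'(0)=0$, exactly what oddness requires at the origin. Next, since $f_U$ is symmetric about zero it is even, so $d'(u)f_U(u)$ is a product of an odd and an even function and is therefore itself odd. Writing the expectation over the symmetric support as
\[
E_U[d'(u)] = \int_{-a}^{a} d'(u)\,f_U(u)\,du,
\]
I would split the range at $0$ and apply the substitution $v=-u$ on $[-a,0]$; oddness of the integrand turns that contribution into the negative of the integral over $[0,a]$, so the two halves cancel and the integral equals $0$.

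The argument is short enough that there is no serious obstacle, only a regularity point worth recording. Because the support $[-a,a]$ is compact and $d$ is a well-behaved distance function on it, $d'$ is bounded there, so $d'(u)f_U(u)$ is integrable and the change of variables is legitimate. The one place meriting slight care is distance functions that fail to be differentiable at isolated points, such as $d(u)=|u|$ with derivative $\mathrm{sgn}(u)$; there the same cancellation still holds because the exceptional set has Lebesgue measure zero and does not affect the integral. Under the stated hypotheses, however, the symmetry cancellation is immediate and is the entire content of the proof.
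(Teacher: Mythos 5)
Your proof is correct and follows essentially the same route as the paper's: $d'$ is odd because $d$ is even, $f_U$ is even by symmetry, so the product is odd and integrates to zero over the symmetric support $[-a,a]$. Your version is in fact slightly more careful than the paper's (you justify integrability and handle isolated non-differentiability, and you avoid the paper's spurious $\lim_{a\rightarrow 0}$ in its final display), but the underlying argument is identical.
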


\begin{proof}
Since $d(u)$ is an even function, we know $d'(u)$ is an odd function. By definition of expectation, 
$$ E_U(d'(u)) = \int^a_{-a} d'(u) f_U(u) du.  $$

\noindent Note that since $f_U(u)$ is a symmetric distribution, $f_U(u) = f_U(-u)$, so $f_U(u)$ is an even function. Let $g(u) =  d'(u) f_U(u) $. Then $g(u)$ is an odd function, and we have 
$ \int^a_{-a} g(u) du = 0,$ so 
$$ E_U(d'(u)) = \lim_{a \rightarrow 0} \int^a_{-a} g(u) du = 0 .$$
\end{proof}

Note that Lemma \ref{exp_dprime} is true for any even distance function $d(u)$, and this is useful for us because we focus on the distance function $d(y_i - x_i'\beta)$ which is even due to the symmetry of $(y_i -x_i'\beta)$.

Our main result for penalized empirical risks is the following.

\begin{thm}(Oracle Property)
\label{empriskOP}
Assume Conditions 7--10 and Lemma \ref{sparse} hold. Suppose $\sqrt{n}a_n \rightarrow 0$ where $a_n$ satisfies $a_n = \frac{1}{h(n)\sqrt{n}}$ such that $\frac{1}{h(n)\sqrt{n}} \rightarrow 0$ and $h(n) \rightarrow \infty$, and suppose $\sqrt{n}b_n\rightarrow \infty$ where $b_n$ satisfies $b_n = \frac{g(n)}{\sqrt{n}}$ such that $\frac{g(n)}{\sqrt{n}} \rightarrow 0$ and $g(n) \rightarrow \infty$. Then the estimator $\hat{\beta}=(\hat{\beta_{1}}',\hat{\beta_2}')'$ satisfies $P(\hat{\beta_2}=0) \rightarrow 1$ and $\sqrt{n}\hat{I^*}_{1}(\tilde{\beta}_{10}| x^n) (\hat{\beta_{1}} - \beta_{10}) \rightarrow \mathcal{N}(0 , I_1(\beta_{10}|x^n))$ .
\end{thm}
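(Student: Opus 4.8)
The plan is to follow the two-part template already used for Theorem~\ref{penllOP}, making three substitutions: the log-likelihood $L(\beta\mid x^n)$ is replaced by the (summed) empirical risk, the Fisher information $I(\beta)$ by its risk analog $I^{*}(\beta)$ from Condition 7, and the mean-zero score identity---automatic in the likelihood setting---by Lemma~\ref{exp_dprime}. Concretely, I would reparametrize locally as $\beta=\beta_{0}+u/\sqrt{n}$ and analyze $Q(\beta_{0}+u/\sqrt{n})-Q(\beta_{0})$: first to extract a $\sqrt{n}$-consistent local minimizer, then to pin its second block at zero, and finally to derive the limiting law of its first block.

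For $\sqrt{n}$-consistency I would expand the risk part. Condition 7 supplies a convergent second-order Taylor expansion, so the quadratic term contributes $\tfrac12 u'I^{*}(\beta_{0})u$, bounded below by $\tfrac{b}{2}\lVert u\rVert^{2}$ by positive definiteness. The linear term is $n^{-1/2}u'\sum_{i}d'(\epsilon_{i})x_{i}$, and since $\epsilon_{i}=y_{i}-x_{i}'\beta_{0}$ is symmetric, Lemma~\ref{exp_dprime} gives $E[d'(\epsilon_{i})]=0$, so this is a normalized sum of centered variables, hence $O_{p}(\lVert u\rVert)$ by the CLT. On the penalty side Condition 10 writes each increment as $\lambda_{j}f_{j}'(\beta_{0j})u_{j}/\sqrt{n}+o_{j}(1)$, and on the nonzero block the relevant factor is controlled by $a_{n}$ with $\sqrt{n}a_{n}\to0$, so the penalty is negligible against the quadratic. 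Thus for $\lVert u\rVert$ large the quadratic dominates with probability approaching one, forcing a minimizer $\hat\beta$ with $\lVert\hat\beta-\beta_{0}\rVert=O_{p}(1/\sqrt{n})$.

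For the second block I would invoke Lemma~\ref{sparse}, which under these conditions and $\sqrt{n}b_{n}\to\infty$ yields $P(\hat\beta_{2}=0)\to1$. The mechanism is that for $j>p_{0}$ the penalty gradient, whose strength is calibrated by $b_{n}$ with $\sqrt{n}b_{n}\to\infty$, dominates the risk gradient, which is only $O_{p}(\sqrt{n})$ because Lemma~\ref{exp_dprime} centers $d'$; Condition 9 ($f_{j}'(0)=0$ uniformly) makes this domination compatible with a minimizer held at the origin. This is where adaptivity enters, as $\lambda_{j}$ is large precisely on the truly-zero coordinates.

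Finally, on the event $\{\hat\beta_{2}=0\}$ I would write the stationarity condition for $\hat\beta_{1}$ and Taylor-expand the risk gradient about $\beta_{10}$. The leading stochastic term $n^{-1/2}\sum_{i}d'(\epsilon_{i})x_{i,1}$ is asymptotically normal by a Lindeberg CLT (using the uniform bounds on $I^{*}$ in Condition 7), the Hessian term converges to $I^{*}_{1}(\beta_{10})$, and the penalty term vanishes because $\sqrt{n}a_{n}\to0$ with $f_{j}'$ bounded (Condition 8) and remainders uniform (Condition 10). Solving and inserting the consistent plug-in $\hat{I^{*}}_{1}(\tilde\beta_{10})$ gives the stated limit. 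The main obstacle is the covariance: unlike the likelihood case, the expected risk Hessian $I^{*}_{1}$ need not equal the score-surrogate variance $\Sigma=\lim n^{-1}\sum_{i}\mathrm{Var}(d'(\epsilon_{i}))x_{i,1}x_{i,1}'$, so the expansion generically yields a sandwich $(I^{*}_{1})^{-1}\Sigma(I^{*}_{1})^{-1}$ and, after the normalization by $\hat{I^{*}}_{1}$, the limiting covariance is $\Sigma$. Verifying that this equals the announced $I_{1}(\beta_{10})$---i.e.\ reading $I_{1}$ as $\Sigma$, or checking the information-identity analog $I^{*}_{1}=\Sigma$ that collapses the sandwich and holds automatically only when $d$ is a genuine log-likelihood---is the delicate point I would need to pin down, using Lemma~\ref{exp_dprime} together with the limit structure of Condition 7.
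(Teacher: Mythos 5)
Your proposal follows essentially the same route as the paper's proof: a consistency step showing the quadratic term $\tfrac{n}{2}u'I^{*}(\beta_{0})u\,\alpha_{n}^{2}$ dominates the linear risk term (centered via Lemma~\ref{exp_dprime}) and the penalty terms (controlled by $\sqrt{n}a_{n}\to 0$), then Lemma~\ref{sparse} for $P(\hat\beta_{2}=0)\to 1$ via the sign of $f_{j}'$ when $\sqrt{n}b_{n}\to\infty$, then a Taylor expansion of the stationarity condition giving
$$
\frac{1}{\sqrt{n}}\frac{\partial R(\beta_{0}\mid x^{n})}{\partial\beta_{j}}
=\sqrt{n}\,I^{*}_{1}(\beta_{10}\mid x^{n})(\hat\beta_{1}-\beta_{10})-\sqrt{n}\lambda_{j}f_{j}'(\hat\beta_{j}),
$$
with the penalty term killed by $\sqrt{n}a_{n}\to 0$ and boundedness of $f_{j}'$.

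The one place you go beyond the paper is the covariance question in your final paragraph, and you are right to flag it. The paper's proof simply asserts that the CLT applied to $n^{-1/2}\,\partial R(\beta_{0}\mid x^{n})/\partial\beta_{j}$ yields a limit with variance $I^{*}_{1}(\beta_{10})$, i.e., it implicitly invokes the information-identity analog equating the variance of the score surrogate, $\Sigma=\lim n^{-1}\sum_{i}\mathrm{Var}\bigl(d'(\epsilon_{i})\bigr)x_{i,1}x_{i,1}'$, with the expected Hessian $I^{*}_{1}=E_{L}[R'']$. Lemma~\ref{exp_dprime} only centers $d'$; it says nothing about its second moment, and for a general distance $d$ that is not a negative log-likelihood the two matrices need not coincide, so the honest conclusion absent an extra identity is the sandwich form you describe. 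Your proposal is therefore not missing anything relative to the paper; you have correctly located the step where the paper's own argument is thinnest, and a complete proof would need either to add $\Sigma=I^{*}_{1}$ as a hypothesis or to restate the limiting covariance as $\Sigma$.
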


For proof see Supplement A, Sec.  \ref{penrisktheory}.

Taken together, Theorems \ref{penllOP} and \ref{empriskOP} give us insight into how large is the class of oracle procedures. Previously it seemed oracle procedures were rare, isolated choices of priors. Now, even though we have not characterized the class of all oracle procedures, we can see that the conditions for a procedure to have the oracle property are quite general, allowing a large range of likelihoods, distances, and priors. 

Since there are obviously many oracle procedures, the key question becomes which one to choose in a given context. We propose choosing a 'best' oracle procedure by optimizing a stability criterion over both a class of penalty functions and a class of distance functions. Allowing for $f_j(\beta_j)$ to be a different penalty for each parameter, as long as they are uniformly Taylor expandable and have similar properties, is powerful because we can choose variable-dependent penalties. This is a more general sense of adaptivity than each $\beta_j$ merely having its own shrinkage parameter $\lambda_j$. We believe a natural choice for $f_j(\beta_j)$ might be something like $|\beta_j|^{1-|corr(y_i,x_i)|}$ because it is data driven. We have not done this here, and we leave it for future work.  Here we focus on comparing existing methods and then using genetic algorithms to find optimal choices for the $f_j(\beta_j)$'s.

\section{Computational Comparisons}
\label{results1}

Every shrinkage method for linear models generates a predictor of the form
\eqref{predictorlm} ,i.e., $\hat{Y}(x_{n+1}) = x_{n+1}^T \hat{\beta}$ where the
estimate $\hat{\beta}$ of $\beta$ comes from choosing $\lambda$ and the 
$w_j$'s (or other data-driven parameters).
It is well-known that many shrinkage methods (LASSO, EN, etc)
zero-out coefficients $\beta_j$ 
and so do variable selection as well as estimation.  Here, we
look only at the instability of predictive error and the accuracy of variable selection.

Following \cite{Luo:etal:2006} we add random
$N(0, \tau^2)$ noise to the $y_i$'s in ${\cal{D}}$ and
denote the partition of the perturbed data by 
${\cal{D}}_n(\tau) = {\cal{D}}_{train}(\tau) \cup {\cal{D}}_{test}(\tau)$.
For any predictor $\hat{Y}$, we define its instability to be
$$
S(\hat{Y})_\tau = \sqrt{\frac{1}{n_{test}} \sum_{i \in {\cal{D}}_{test}(\tau)} (y_i - \hat{Y}_\tau (x_i) )^2 }
$$
where $\hat{Y}_\tau$ means we have formed a predictor using ${\cal{D}}_{train}(\tau)$.
In the computations we present in this section we used $\tau_k = .2 k$, $k=1, \ldots, 8$,
to generate instability curves of the form $(k, S(\hat{Y})_{\tau_k})$,  and looked for patterns.

Intuitively, perturbing the $Y$'s by adding normal noise should only
increase $S(\hat{Y})_{\tau_k}$, i.e., in curves should generally increase with $\tau$.
Of course, we prefer instability curves that are small -- less instability upon
perturbation suggests a better predictor.  However,
if a instability curve decreases with $\tau$ then perturbation of $Y$
is making the predictor more stable.   We take this to mean the predictor
is discredited for some reason.   We suggest this behavior arises when
the predictor has omitted or included terms incorrectly
or has poorly chosen coefficients.
We prefer predictors with instability curves that are 
lower than the instability curves of competing predictors and smoothly increase slowly
with $\tau$.  

Our basic computational procedure is as follows.   Fix a number $K$ of values of $\tau$
to form the points on the instability curve and a (large) number $L$ for the number
of iterations to be averaged.   For each $k=1, \ldots , K$, let 
 $\ell = 1, \ldots, L$.  Instability curves for a given predictor can generically be
formed by the following steps.

\begin{enumerate}

\item  For each $\ell$, randomly split ${\cal{D}}_n$ in to ${\cal{D}}_{train}$ and ${\cal{D}}_{test}$.

\item Perturb the $y$-values in ${\cal{D}}_{train}$ and ${\cal{D}}_{test}$
using $N(0, \tau^2_k )$ noise. Call the results ${\cal{D}}_{train,  \tau_k}$ and
${\cal{D}}_{test,  \tau_k}$, respectively.

\item  Using ${\cal{D}}_{train,  \tau_k}$ form the competing predictors denoted $\hat{Y}_{\tau_k}$.

\item Using ${\cal{D}}_{test,  \tau_k}$ obtain $S(\hat{Y}_{\tau_k})$ for each predictor.

\item Let
$S(\hat{Y}_{\tau_k , \ell})$ be the $\ell$-th value of $S(\hat{Y}_{\tau_k})$.

\item For each predictor and $k$, find the sample mean:
$$
S_{\tau_k}(\hat{Y}) = \frac{1}{L} \sum_{\ell=1}^L S(\hat{Y}_{\tau_k, \ell}).
$$

\item Plot $S_{\tau_k}(\hat{Y})$ as a function of $\tau_k$ for each predictor.

\end{enumerate}

In this section all simulated data come from 
$$
Y = X\beta + \epsilon
$$
where the rows $X_i=x_i$ of $X$, for $i=1, \ldots , n$, are $MVN_p(0, M)$ for various choices of variance matrix 
$M$  or are IID $\sim t_3$ to see the effect of heavier tails.  The parameter values for
$\beta \in \mathbb{R}^p$ are IID $N_p(4,1)$.  So,
$\dim(Y) = n$, $\dim(X) = n\times p$,  and $\dim(\epsilon) = n$.
We use two choices for the distribution of $\epsilon_i$,  $N(0, 1)$ and $t_3$, to represent light and
heavy tails in the error, respectively.  
We are concerned mainly with the case $p<n$, but include cases $p>n$ for completeness.

We compared predictors from ten different shrinkage methods as well as a full linear model.  
Seven of the shrinkage methods
had the OP, namely,  LAD-LASSO (uses $L^1$, in place of $L^2$ in ALASSO, Wang et al. 2007), 
ALASSO
(Zou 2006), SCAD1 ($L^1$ with a truncated $L^1$ penalty, Fan and Li 2001), 
AEN (Zou and Zhang 2009),  ASCAD1 
(Dustin 2020: $\|Y- X\beta\|_1  + \sum p_{\lambda_j}(|\beta_j|)$) i.e., an adaptive SCAD penalty
on each $\beta_j$), SCAD2 (replace $L^1$ with $L^2$ in SCAD1) ], and MCP (Zhang 2010).
Three of them did not have the OP, namely ridge regression (RR),  LASSO and EN.  

Finding the instability curves for the ten shrinkage methods
required three different pieces of software.  First,  for LASSO, RR, EN, ALASSO, AEN, 
we used the {\sf glmnet} package (see \cite{Friedman:etal:2010}) in RStudio  Ver.  1.2.5033. 
Second,  for LAD-LASSO, SCAD1,  and ASCAD1 we used the
{\sf rqPen} package in R, see \cite{Sherwood:etal:2020}.   
The specific function used ({\sf cv.rq.pen()} ) 
requires $X$ to be nonsingular so we only implement these methods 
(as well as LM) in the $n>p$ case.
Third, for MCP and SCAD2 we used code that implemented a local linear approximation (LLA).  
\footnotemark

\footnotetext{Development of the LLA code was by L. Xue and reported in \cite{Fan:etal:2014}.     
We are grateful to these authors for giving us their code.}

Because we use different packages,  we had to
split the data differently for different methods.    
The LLA code requires an explicit separation of ${\cal{D}}_{train}$ into two
sets say ${\cal{D}}_{train} = {\cal{D}}_{train, \lambda} \cup {\cal{D}}_{train, \beta}$ so that
a small portion of the data is set aside for estimating $\lambda$.   However,
both {\sf glmnet} and {\sf rqPen} approximate $\lambda$ internally and we use the default 
methods of the procedures, so the splitting of ${\cal{D}}_{train}$ is done
internally to the programs.  Below we list the choices we made for the explicit splitting.

As noted in Sec. \ref{intro}, we followed \cite{Zou:2006} for the adaptive methods
and chose $\hat{w}_j = 1/ |\hat{\beta}_{j, OLS}|$ for $p < n$.
When we do not have enough data to implement OLS, i.e. $p > n$, we used $1/|\hat{\beta}_{j, SCAD2}|$.  In fact, the computations can be done, at least in principle, using any $\sqrt{n}$-consistent estimator 
of the $\beta_j$'s for the $w_j$'s. The reasoning behind this choice comes 
from the assumptions on $a_n$ and $b_n$ in Theorems \ref{penllOP} and \ref{empriskOP}.

We focus here on a high dimensional setting of $p$ relative to $n$. 
Thus, we set $p=100$ and consider three sparsity levels,10\%, 50\%, and 90\%,
and four sample sizes $n=40, 75, 150, 500$. For each $n$ we let $L=100$ for the instability computations. That is we average over 100 datasets to get a instability value for each perturbation level. 
We use a training data set to form the predictor and a testing data set to evaluate its performance, and for the various sample sizes  we set 
$({\cal{D}}_{train},  {\cal{D}}_{test})$ = $(28 , 12)$, $(60, 15)$,  $(120, 30)$,  and $(350, 150)$,
respectively.  For the $LLA$ methods we allocate 10\% of the ${\cal{D}}_{train}$ data to ${\cal{D}}_{train, \lambda}$ and the remaining $90\%$ to  ${\cal{D}}_{train, \beta}$.   Our choices
respect the fact that for the methods to be comparable, they must be trained with exactly 
the same data, and evaluated on a hold out set that has not been used in the training process. 

The next four subsections present our simulation results for the four sample sizes
and three sparsity levels.
The final subsection summarizes our recommendations before we turn to a
benchmark data example in Sec. \ref{realdata}.

\subsection{Sample Size $n=40$}
\label{n=40}

Our first example uses the value $n=40$ that is relatively small compared to $p=100$.
Here, we examine how good the variable
selection, estimates, and thus predictions are for various sparsity levels
where sparsity means the percentage of zero coefficients in the model
used to generate the data. 

We use two assessments for this.  First, we generate instability curves.  Then we also
look `inside' the predictor to see which variables were included correctly or
incorrectly.  These two assessments are roughly analogous to variance and bias.  
However,  as noted in Sec. \ref{intro},we think
of instability curves as more important because they reflect `variability' and
partially reflect bias:
If a predictor includes a variable
incorrectly it can still be downweighted by small parameter estimates and if the
predictor fails to include a correct variable (that is important) the instability curve
can increase to detect it.

Fig. \ref{Fig_n=40} shows that there is overall more instability with heavy tails, and, independent of heaviness 
of the tails, as sparsity increases the methods become more stable.  We observe two clusters of methods. Namely, the non-adaptive methods in the lower cluster (L,EN,RR) and the adaptive methods in the upper cluster(AL,AEN,SCAD2, MCP). This is consistent with the fact that we have far fewer observations than predictors, so estimating the extra hyperparameters accurately is difficult.  Overall, this figure suggests 
that RR and EN perform best for sparsity levels 0.1 and 0.5 while EN is best for sparsity 0.9, although RR is not discredited, at least for light tails.

 This is seen for both light and heavy tails. A possible explanation for the
good performance of RR is that there is not enough information in the small sample 
 relative to the number of predictors to set any coefficients to zero. Thus RR performs well because it does not exclude any variables. EN, a mixture of LASSO and RR,
also performs well because it may set fewer coefficients to zero than other non-adaptive methods. 

It may seem odd to say RR is not discredited in the upper right panel of
Fig. \ref{Fig_n=40}.  However, all the other curves decrease markedly as the perturbation
increases suggesting they have chosen poor models.   Also, RR is smallest at perturbation zero.

The curves increase after the
perturbation is large enough that it outweighs the poor selection of the models.
Also, it is seen that there are some jumps in the curves.  These are usually at the beginning
when the perturbation changes from zero to a positive number.  We suggest these jumps
indicate a lack of stability under perturbation meaning that the shrinkage method they
represent is unstable in terms of choosing a good predictor.

\begin{figure}[htp!]
\begin{center}
\begin{tabular}{ccc}
\includegraphics[width=.35\columnwidth]{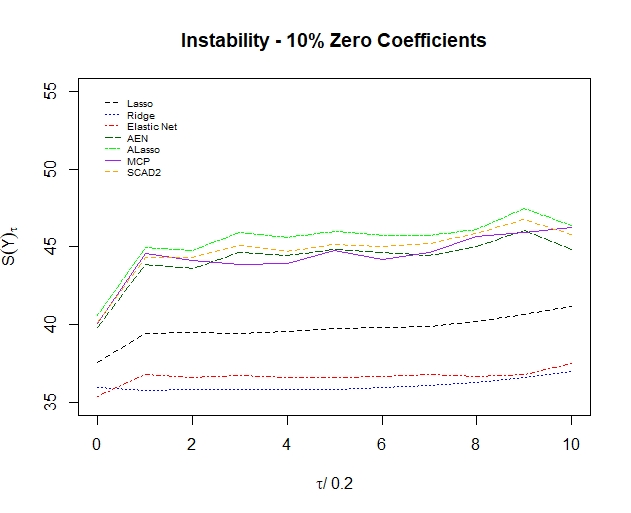}
\includegraphics[width=.35\columnwidth]{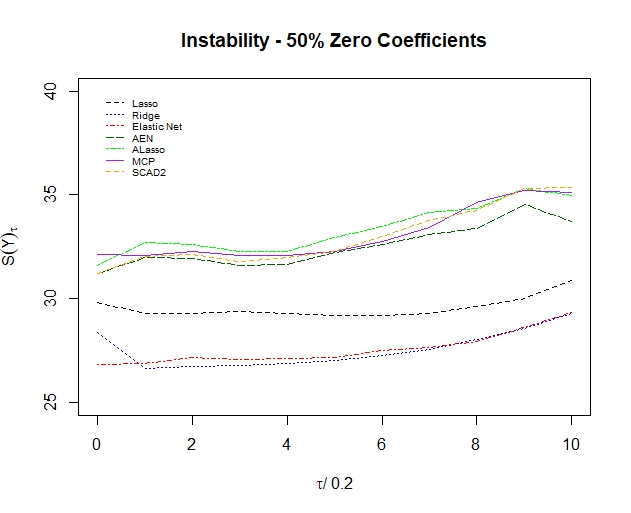}
\includegraphics[width=.35\columnwidth]{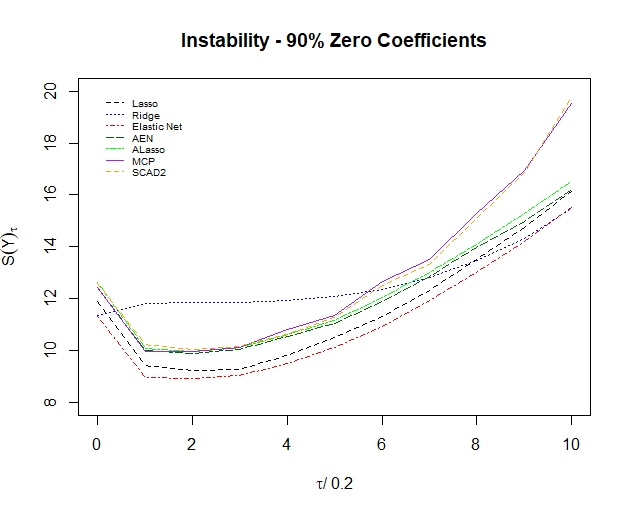}
\\
\includegraphics[width=.35\columnwidth]{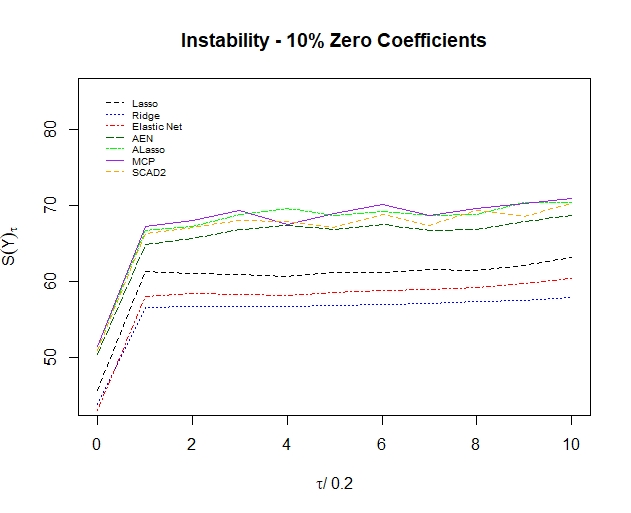}
\includegraphics[width=.35\columnwidth]{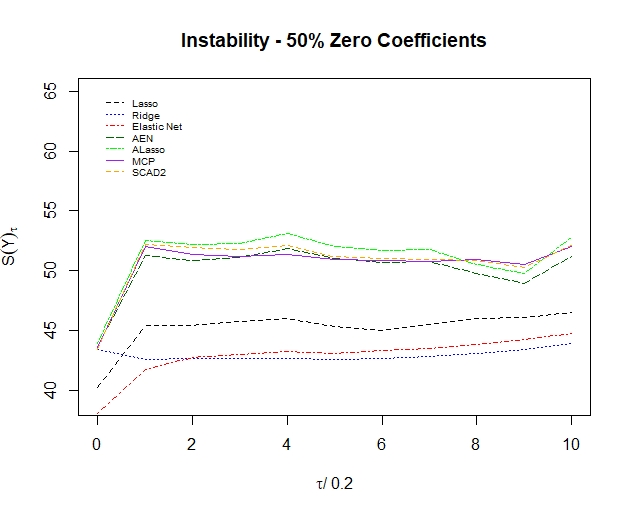}
\includegraphics[width=.35\columnwidth]{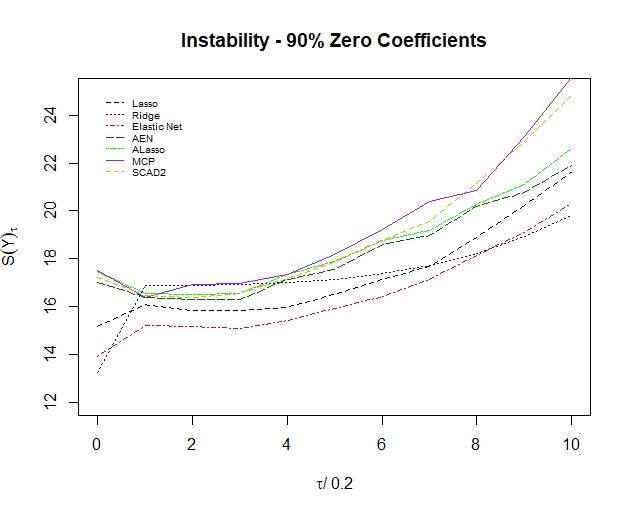}
\end{tabular}
\end{center}
\caption{$n=40$ From left to right the sparsity increases from 10\%, to 50\%, to 90\%.  
From top to bottom
the heaviness of the tails of $X$ and $\epsilon$ increases from normal to $t_3$.  The distribution
of the 100 IID outcomes for the parameter $\beta$ is $N(4, 1)$.  In this simple case,
we used the identity matrix in the normal to generate the matrix $X$.}
\label{Fig_n=40}
\end{figure}

Next,  we look at the 
predictors themselves, on average, to help us interpret the instability curves. 
Table \ref{Table_n=40} compares 
each of the methods for both light (Li) and heavy (H) tails based on the total percentage of coefficients 
the methods sets to 0 on average (Tot; ideally equal to the sparsity level),
what percentage of those set to 0 were truly 0 (Tr; ideally 100\%) and what 
percentage of those set to 0 were truly non-zero (Fa; ideally zero). 
Note that we are focused on the 
sparsity in this setting, so we frame these tables in terms of the percentage of variables excluded by a method. Instead, if we wish to describe the variables included, we simply subtract each entry in the table from 1.

 \begin{table}
 \centering
\begin{tabular}{llllllllllll}
\hline
  \multirow{1}{*}  Sparsity & 
\multicolumn{3}{c}{.1}&
\multicolumn{3}{c}{.5}&
\multicolumn{3}{c}{.9}& \\
      &       & Tot & Tr & Fa & Tot & Tr & Fa & Tot & Tr & Fa \\
    \hline
    Li & L  &  .95   & .95  &  .95    & .92 &  .93 & .90 &  .95  &  .97 &  .76 \\
    \hline
    Li & AL &   .88  & .85 & .88 &.89  &  .90   &  .87  &  .91 & .95    &  .56  \\
    \hline
    Li & EN &  .87  & .86     & .87 &  .82 &   .84 & .80 &  .91  & .94   &  .56  \\
    \hline
Li &AEN &   .88 & .84   & .88 & .88  & .90&  .80  &  .90  &  .94   & .55   \\
\hline 
Li &SCAD2 & .82&  .78 & .82  & .82 & .84   &  .80  &  .87  &  .91 &  .49  \\
\hline 
Li & MCP & .83 &  .78  & .83& .84 & .85 &  .82  & .88    &  .92  &  .50   \\
\hline 
\hline
H & L & .97   &  .98    &   .97    &  .93    & .95  &  .90      & .96  & .97  &  .88\\
\hline 
H & AL &.89   &  .92     &   .89    & .90  &   .93 &  .88       & .91 & .99   &  .69  \\
\hline 
H & EN &   .93   &  .96     &   .93      &  .88  & .91     &  .85  &   .88  & .93    &  .72     \\
\hline
H & AEN &   .87 &    .90      &  .88       &  .90 & .93     &  .87        &     .90  & 1    &  .66      \\
\hline
H & SCAD2& .83   &     .85       &    .82  & .84&  .86 &  .81  &  .87 & .86   &  .59  \\
\hline
H &MCP &    .84  &     .86   &   .84      &  .84  &  .88  & .82    &  .88& .90   &  .63    \\
\hline 
\end{tabular}
\caption{ Variable selection performance for $n=40$}
\label{Table_n=40}
\end{table}

From Table \ref{Table_n=40}, it is seen that none of the methods performs well.    
The values in the Tot columns for low and medium sparsity are much too high which leads to high values in the Tr columns, trivially.  The high values in the Fa columns indicate that all methods are
typically excluding far too many terms.  
There are some very mild exceptions -- SCAD2 or MCP -- that perform noticeably
better than other methods but even so their performance is poor.  For high sparsity
the methods do exclude fewer correct variables than for low sparsity.
Table \ref{Table_n=40} agrees with Fig. \ref{Fig_n=40} in that variable selection is generally 
worse for heavier tails.   Curiously, the better performing methods in terms of
variable selection are different from the better performing methods in terms of
instability curves.  We attribute the difference to how well coefficients are estimated
rather than to how well variables are selected.  Indeed,  Table \ref{Table_n=40}
shows that all methods usually selected far fewer variables than necessary,
e.g., for L with .1 sparsity, five variables were included on average
but 90 should have been.

We also considered two cases where the $x_j$'s have a nontrivial dependence structure.
Specifically, we set $X \sim MVN_{100}(0, M)$ where $M$ is tridiagonal or Toeplitz,
both with light tailed errors.    We do not show the plots for these cases, howevver we give
our recommendations in Subsec.  \ref{Summarytables}.

As in the independence case, the methods performed generally better as sparsity increased.
For figures analogous to those in Fig. \ref{Fig_n=40} but for tridiagonal variance
matrices, we found that for low and medium sparsity,  EN and RR were
typically best and close to each other.   For high sparsity we got the same results
as in the upper right panel of Fig. \ref{Fig_n=40}.   When the variance matrix of
$X$ was Toeplitz, we found that EN is performing best for all low and medium sparsity levels.
For high sparsity SCAD2 and MCP performed nearly the same as EN.  Interestingly,
SCAD2 and MCP were amongst the worst performing methods in the
independence case but with the Toeplitz structure they were only slightly worse than EN.
We suggest that the degree of dependence in the tridiagonal case is not much
more than in the independence case.  
We also suggest that when there is
high enough dependence e.g., the Toeplitz case,  penalties that have some data dependence
(but not too much) such as EN, SCAD2, and MCP are best able
to generate predictions.

In terms of variable selection as indicated in Table \ref{Table_n=40}, 
we found that the results for the tridiagonal and Toeplitz were similar
to the independence case across all methods and sparsity levels
except for EN in the Toeplitz case where EN performed noticeably better than the other methods across all sparsity levels.  In addition,
L does well in the high sparsity case.   We suggest
that being able to choose what the penalty looks like as in EN gives
some advantage over the other methods.  We return to this point in
Sec. \ref{GAstuff}.

\subsection{Sample Size $n=75$}
\label{n=75}
 
Next we consider a second example where $p>n$. That is, we still have fewer observations 
than explanatory variables, but $n$ is much closer to $p$ than in Subsec.  \ref{n=40}.
Examining the instability plots in Fig. \ref{Fig_n=75}, we see that as in Fig. \ref{Fig_n=40}, 
the methods become more stable as sparsity increases and
less stable as the tails become heavier.

Similar to the $n=40$ case, we have the same clusters of methods, namely
EN, RR, and L as the lower (better) cluster with AEN, AL, MCP and SCAD2 
as the worse (upper) cluster.   EN outperforms all other methods with
low and medium sparsity regardless of the heaviness of the tails.  For high sparsity all
methods are roughly comparable except for RR which performs poorly.
That is the $n=40$ and $n=75$ cases are qualitatively similar apart from the light tail
high sparsity case with $n=40$ where model instability seems to dominate.
Specifically, for $n=40$ EN was the `best' although there was good reason to prefer RR, 
but for $n=75$ all methods but RR are essentially indistinguishable and RR performs poorly.

It is seen that the error decreases as both the sparsity and $n$
increase.  Accordingly, as both increase, the predictive error may
converge to $\sigma$, the SD of $\epsilon$.

\begin{figure}[htp!]
\begin{center}
\begin{tabular}{ccc}
\includegraphics[width=.35\columnwidth]{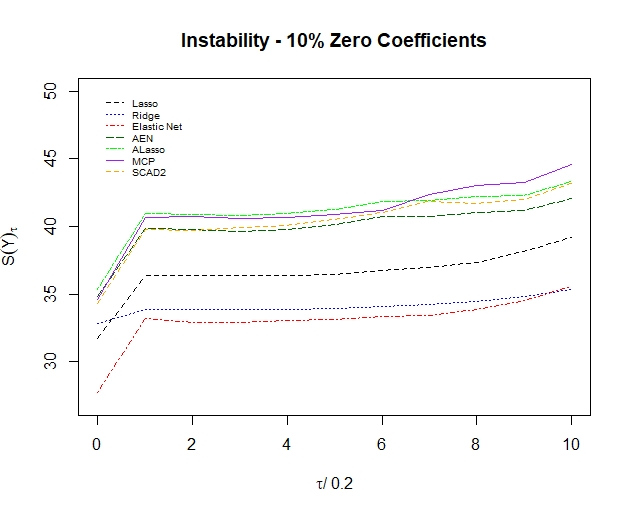}
\includegraphics[width=.35\columnwidth]{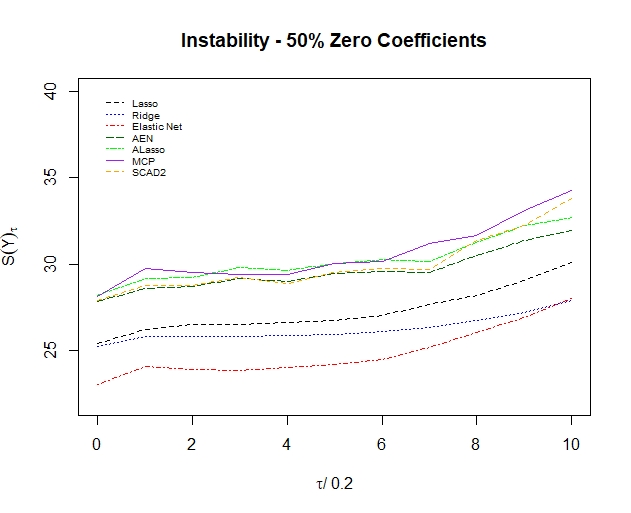}
\includegraphics[width=.35\columnwidth]{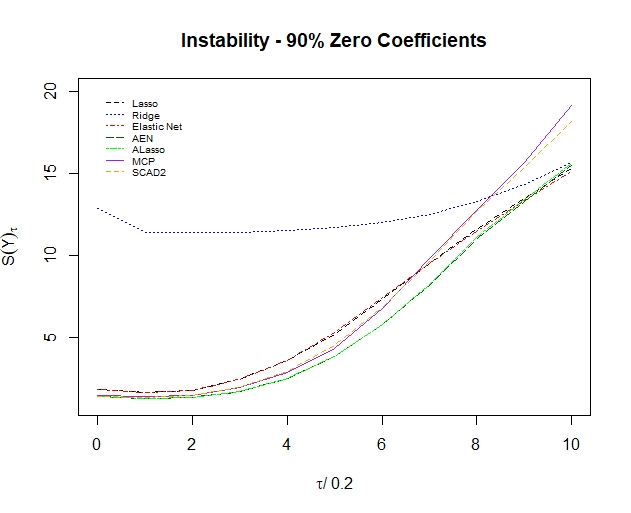} 
\\
\includegraphics[width=.35\columnwidth]{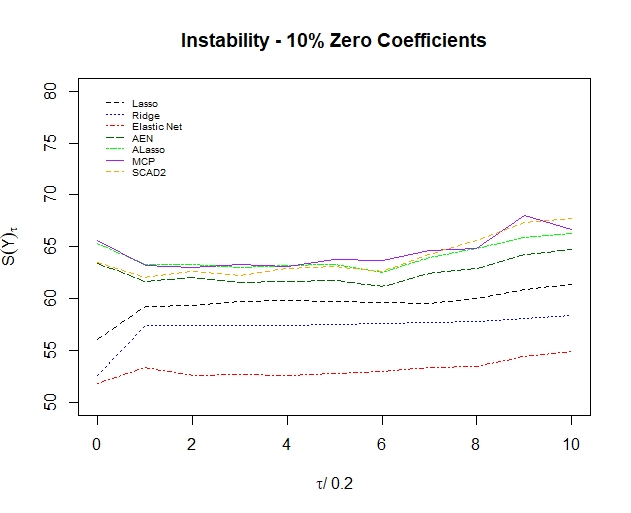}
\includegraphics[width=.35\columnwidth]{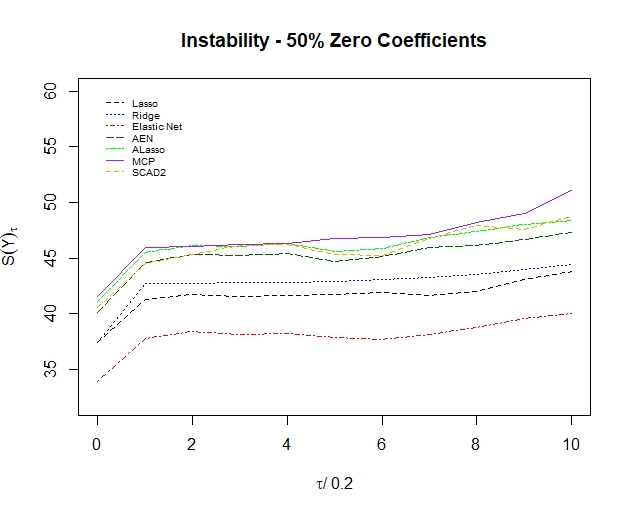}
\includegraphics[width=.35\columnwidth]{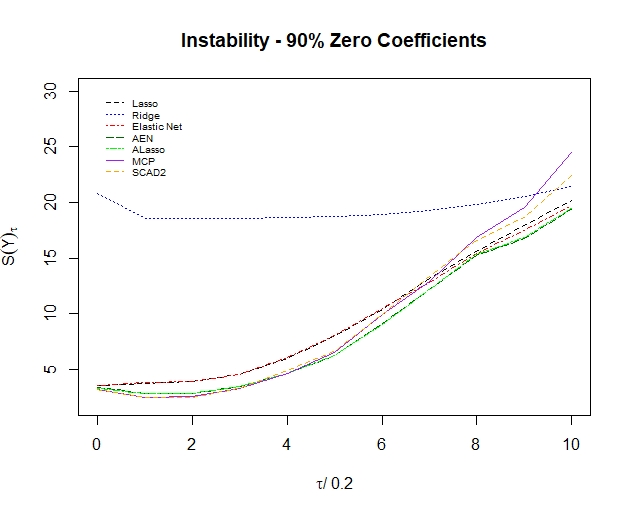}
\end{tabular}
\end{center}
\caption{$n=75$: From left to right the sparsity increases from .1, to .5, to .9.  From top to bottom
the heaviness of the tails of $X$ and $\epsilon$ increases from normal to $t_3$.  The distribution
of the 100 IID outcomes for the parameter $\beta$ is $N(4,1)$.  Again, we used
the identity matrix in the normal distribution for $X$.}
\label{Fig_n=75}
\end{figure}

 \begin{table}
 \centering
\begin{tabular}{llllllllllll}
\hline
  \multirow{1}{*}  Sparsity & 
\multicolumn{3}{c}{.1}&
\multicolumn{3}{c}{.5}&
\multicolumn{3}{c}{.9}& \\
      &       & Tot & Tr & Fa & Tot & Tr & Fa & Tot & Tr & Fa \\
    \hline
    Li & L  &  .75 & .89 &  .73   & .83  &  .92 & .75  & .76   &.84    &  0  \\
    \hline
    Li & AL & .77 &  .88 & .76   &  .79 &  .88 & .71  & .90   &  1  & 0 \\
    \hline
    Li & EN & .60 &  .78& .58  &  .70 &  .81 &   .58  &  .76 &  .85  & 0\\
    \hline
Li &AEN &   .76  &.87  & .74   &  .79 & .88 & .70  &  .90  & 1   & 0 \\
\hline 
Li &SCAD2 & .67 & .78 & .66  &  .71 & .81  & .61  &.90    &  1  & 0 \\
\hline 
Li & MCP &  .69 & .80 & .68    &  .72&  .82 &  .63  &  .89  &  .99  & 0 \\
\hline 
\hline
H & L & .90 & .92&  .90&  .85& .91  & .90  & .74  &   .83 &  0   \\
\hline 
H & AL &.83 &.99 & .82 & .83 & .90  &  .76 &.90   &  1  &   0  \\
\hline 
H & EN & .85  & .81&.85  & .77& .84  & .71  &  .74 &  .82  & 0     \\
\hline
H & AEN &  .82   &1 & .81 &.82  &.89  &  .74 & .90  &  1  &   0   \\
\hline
H & SCAD2&  .72 & .79 &.71  &.74  &.82   & .66  &.89   & .99   &  .01  \\
\hline
H &MCP &   .74 &.80 & .73 &  .75 &.82  & .68  &  .89 &  .98  &  .01   \\
\hline 
\end{tabular}
\caption{ Variable selection performance for $n=75$}
\label{Table_n=75}
\end{table}

Table \ref{Table_n=75} shows that, as with Table \ref{Table_n=40}, 
no method performs variable selection well for low to medium sparsity.   
For high sparsity, all methods were very much improved.
However, L and EN were the worst -- the only methods not having the OP.
The other methods, AL, AEN, SCAD2, and MCP,  have the OP and
perform roughly equally well.  As a generality, the methods performed
better for light tailed than for heavy tailed distributions.
As before, the better performing methods in the instability curves (EN, RR, L)
tended to perform worse in terms of variable selection and we attribute this
to better parameter estimation in the instability curves since the methods generally
included more variables than necessary.  That is, a variable may be included
with an estimated coefficient near zero.

When we included dependence via tridiagonal matrices, we found results similar to the
independent case.     Namely, the instability curves show that
EN performed best at all sparsity levels, roughly tying with most
other methods for high sparsity.  The key difference was that RR tended to perform
poorly overall.  When we switched to Toeplitz matrices, across all sparsity
levels SCAD2 was best with MCP being a close second and virtually tying
with SCAD2 with high sparsity.  EN, which was overall best for $n=40$,
was noticeably worse than both SCAD2 and MCP across all sparsity levels.
We suggest that since we have more data, the optimality of the methods
that have optimality properties (other than the OP) is effective.

In terms of variable selection, the tridiagonal case was similar to the independence
case but slightly better.   This was surprising and difficult to interpret.
In the Toeplitz case, EN, SCAD2 and MCP are noticeably better than the
other methods for low and medium sparsity.  For high sparsity, L also performs well.
This is the same as the Toeplitz case with $n=40$.  Thus, overall, the results for
dependence cases with $n=75$ are very close to the corresponding results for $n=40$.

\subsection{Sample Size $n=150$}
\label{n=150}

This is our first case where $n > p$,  making it qualitatively different from the earlier 
two subsections.    Here  we implement LM's as well as shrinkage techniques. 
Note that the ``penalty'' associated with LM's is identically zero and 
corresponds to a uniform prior. 

Fig. \ref{Fig_n=150} parallels Figs.  \ref{Fig_n=75} and \ref{Fig_n=40},
but introduces four extra methods, LM's, SCAD1, ASCAD1, and L-L.
For low sparsity, LM's start out giving the best results in predictive instability.  
However, the instability curve for LM's rises faster than for its competitors
and arguably SCAD1 is equivalent.
For medium sparsity, LM's is roughly in the middle and SCAD1 is the best.
For high sparsity, SCAD1 ties with the other top methods.  LM's
perform worse as sparsity increases.     As in earlier cases, all methods
that have nontrivial penalties
perform better for light tails than heavy tails and as sparsity increases.
Also, as the sparsity level increases, the variability amongst the methods 
decreases.  Note that RR tends to perform poorly since there is enough
data that the sparsity has an effect.

A key difference between the $n=150$ case and the $n=75, 40$ cases
is that the adaptive methods are generally performing better than the
nonadaptive methods. For instance, AEN and AL are performing better
for small perturbations than EN or L, respectively.   When the perturbations
are too high, it makes sense that the non-adaptive version of a penalty will perform
better that their adaptive versions
because they are less affected by the noise; they use fewer estimators.
One can argue that the perturbation level at which the curves for
non-adaptive penalties and their adaptive versions cross represents the
largest reasonable perturbation that should be considered for that penalty.
Moreover, the OP is not a determining factor for performance:  Some methods
with the OP perform well and some do not.   Some methods that do not have the OP
perform better than other methods that do.

\begin{figure}[htp!]
\begin{center}
\begin{tabular}{ccc}
\includegraphics[width=.35\columnwidth]{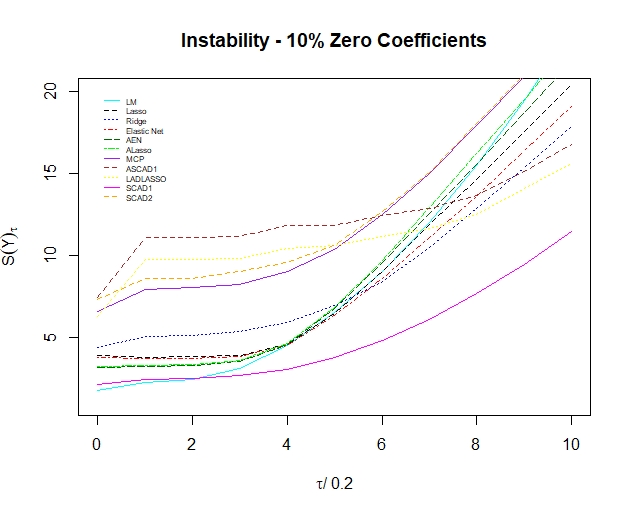}
\includegraphics[width=.35\columnwidth]{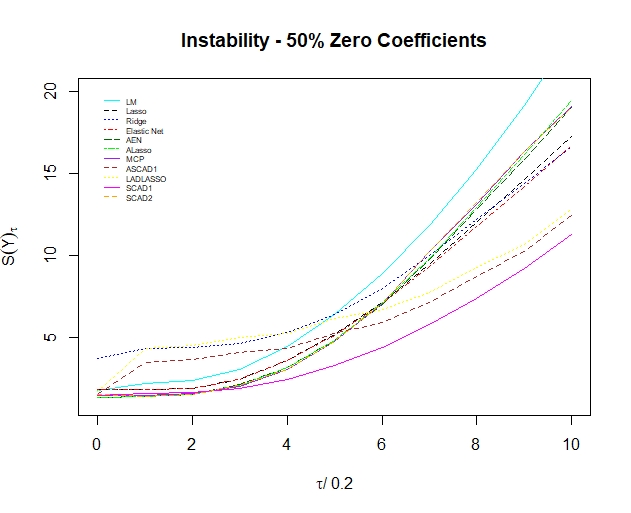}
\includegraphics[width=.35\columnwidth]{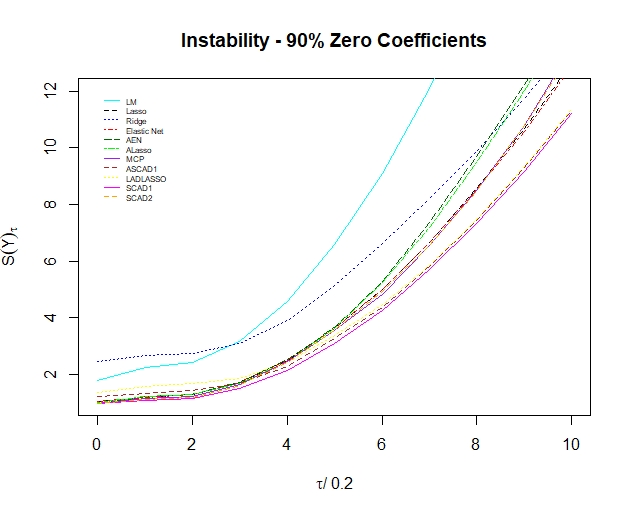} 
\\
\includegraphics[width=.35\columnwidth]{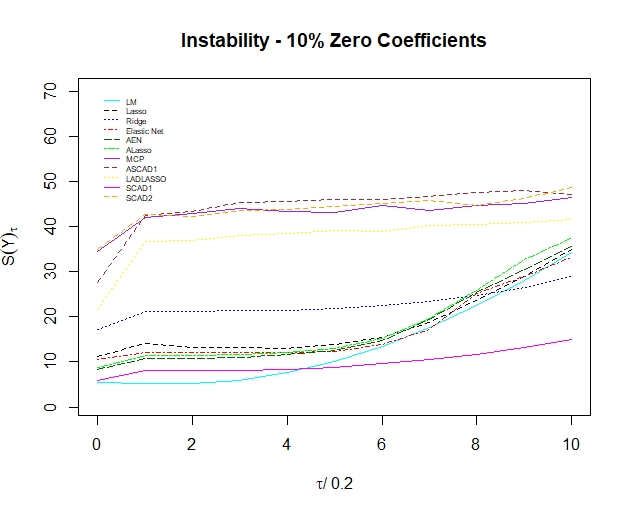}
\includegraphics[width=.35\columnwidth]{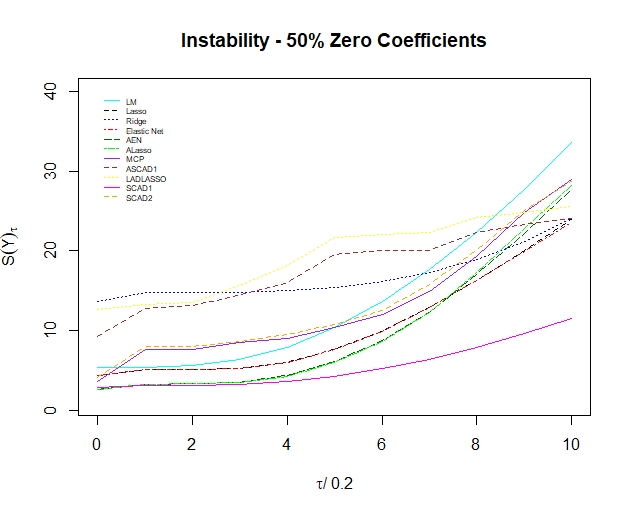}
\includegraphics[width=.35\columnwidth]{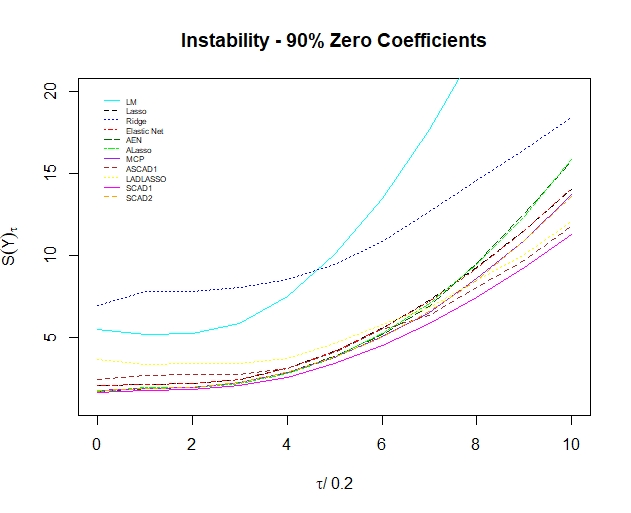}
\end{tabular}
\end{center}
\caption{$n=150$: From left to right the sparsity increases from .1, to .5, to .9.  From top to bottom
the heaviness of the tails of $X$ and $\epsilon$ increases from normal to $t_3$.  The distribution
of the 100 IID outcomes for the parameter $\beta$ is $N(4, 1)$.  We used the identity matrix
in the normal distribution for $X$.}
\label{Fig_n=150}
\end{figure}

 \begin{table}
 \centering
\begin{tabular}{llllllllllll}
\hline
  \multirow{1}{*}  Sparsity & 
\multicolumn{3}{c}{.1}&
\multicolumn{3}{c}{.5}&
\multicolumn{3}{c}{.9}& \\
      &       & Tot & Tr & Fa & Tot & Tr & Fa & Tot & Tr & Fa \\
    \hline
    Li & L  &  .04 & .38 & 0 & .30 &.60 & 0 & .81&.90&	0 \\
    \hline
    Li & AL & .11	&1&	.01 &.50&1&0 &.89&.99&0 \\
    \hline
    Li & EN &.03 &.30&0&.29 &.59 & 0&.81&.90&0 \\
    \hline
Li &AEN &  .11&1&.01 &.50&1 & 0&.88&.98&0 \\
\hline
Li & ASCAD1 & .11&.82&.03 & .48 &.95 &0 &.89&.99&0\\
\hline
Li & L-L &.01& 0 & 0 & .04 & .07 & 0 & .83 & .92 & 0 \\
\hline 
Li & SCAD1 & .03 & .30 &0 & .42 & .85 & 0 & .89 & .99 & 0\\
\hline
Li &SCAD2 & .12 & .74 &.07 & .51 & 1 & .02 & .90 & 1 & 0 \\
\hline 
Li & MCP &  .11& .67 & .05 & .50 & .99 & .01 &.89&.99&0 \\
\hline 
\hline
H & L & .04 &.33&0 & .24&.49&0 &.83&.92&0\\
\hline 
H & AL & .10&1&0 &.50&.99&0 &.90&1&0 \\
\hline 
H & EN &  .03& .28 &.43&  .24&.49&0& .82&.92&0 \\
\hline
H & AEN &  .10&1&.50 &  .50&.99&0&.90&1&0 \\
\hline
H & ASCAD1 &.13& .87 & .05 &.49&.97&.01&.90&1&.01 \\
\hline
H & L-L &.01 & .11 & 0 &.10 & .19&.01 & .87&.96&.01  \\
\hline
H & SCAD1 & .07 &.72& 0 &.48&.95&0 &.90&1&0\\
\hline
H & SCAD2&  .35&.72& .30 &.49&98&.01 &.90&1&0 \\
\hline
H &MCP & .33&.70&.29   &.48&.96&.01 & .90&1&0  \\
\hline 
\end{tabular}
\caption{ Variable selection performance for $n=150$}
\label{Table_n=150}
\end{table}

For light tails and low sparsity, Table \ref{Table_n=150} shows that AL and AEN 
are the best methods in terms of variable selection.   For medium sparsity they
remain the best but MCP and SCAD2 are almost as good.  For high sparsity,
AL, AEN, ASCAD1, SCAD1, SCAD2, and MCP are essentially the same.  Note that
all methods have zero in the Fa column meaning they never exclude variables
that are important  For heavy tails,  the results for low sparsity are the
same but for medium sparsity more methods are nearly as good as the AL and AEN.
For high sparsity,  the only methods that are lagging in performance
are L, EN, and L-L; the others are essentially equivalent.

Compared to Table \ref{Table_n=75} we see that all methods improved in
variable selection, which is not surprising, but that the adaptive
methods improved more.   This is true for the instability curves as well.
LM's and RR  are not included in Table \ref{Table_n=150} because they don't 
do variable selection.

Comparing the conclusions from Fig \ref{Fig_n=150} and Table \ref{Table_n=150},
we see that SCAD1 performs best predictively although not necessarily in variable selection.
In Table \ref{Table_n=150}, SCAD1 never excludes a variable that should be included.
Thus, its lesser performance in terms of variable selection may be ameliorated
by its parameter estimation.  

Again, we considered two dependence cases with light tails, the tridiagonal and the
Toeplitz.  For the tridiagonal case, the instability curves and the variable selection table
are qualitatively the same as for the independence case.  For the Toeplitz, 
LM's, SCAD1, SCAD2, and MCP are the only methods that perform well in terms of
instability curves.  In the corresponding table,  overall SCAD2 and MCP did best, but
sometimes another method does best (but then does poorly in terms of instability).
Overall, variable selection in this case is worse than in the independence case.
This is virtually the opposite of Toeplitz in the $n<p$ case where variable selection
was improved.  This case is more in line with intuition because intuition corresponds 
to $n>p$.

\subsection{Sample Size $n=500$}
\label{n=500}
For completeness we also consider the case $n=500$ to identify the limiting behavior
of the methods.  Fig. \ref{Fig_n=500} and Table \ref{Table_n=500} have the same
general properties as the earlier Figures and Tables, namely, as sparsity increases instability decreases.   The methods improve
as sparsity increases although the improvement is not as 
dramatic as in the smaller sample cases. In the heavy tailed cases, there is more variability.   

In fact, many of the methods at this point are indistinguishable via Fig. \ref{Fig_n=500}
or Table \ref{Table_n=500}.    So for descriptive purposes it is easier to identify the
methods that perform poorly rather than the ones that perform well.
From the top row in Fig. \ref{Fig_n=500}, the only poor methods are RR, ASCAD1,L-L
for low and medium sparsity.  For high sparsity RR is clearly worst.
From the bottom row in Fig. \ref{Fig_n=500}, RR is clearly the worst in all cases.
For low sparsity LM and SCAD1 (and nearly SCAD2) are best.  For medium sparsity
only L-L (and RR) performs poorly.   For high sparsity the worst performers are
RR and LM's.    Note that L and EN still perform well even though they don't have
the OP.  However, EN is a generalization of L and L has some consistency properties,
see \cite{Zhao:Yu:2006}, so the good performance of L and EN is not surprising.

\begin{figure}[htp!]
\begin{center}
\begin{tabular}{ccc}
\includegraphics[width=.35\columnwidth]{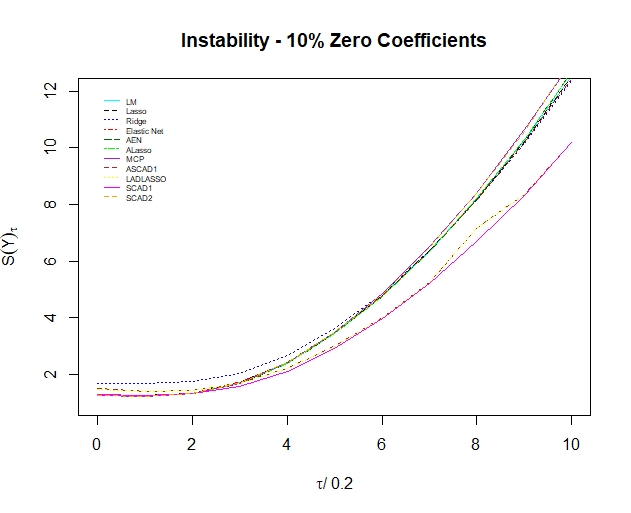}
\includegraphics[width=.35\columnwidth]{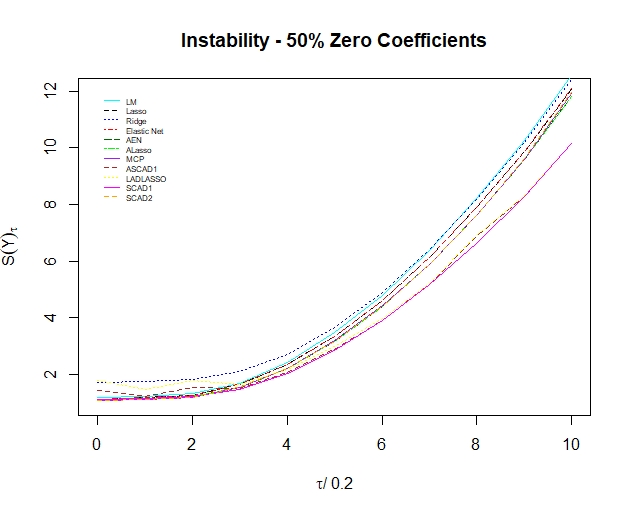}
\includegraphics[width=.35\columnwidth]{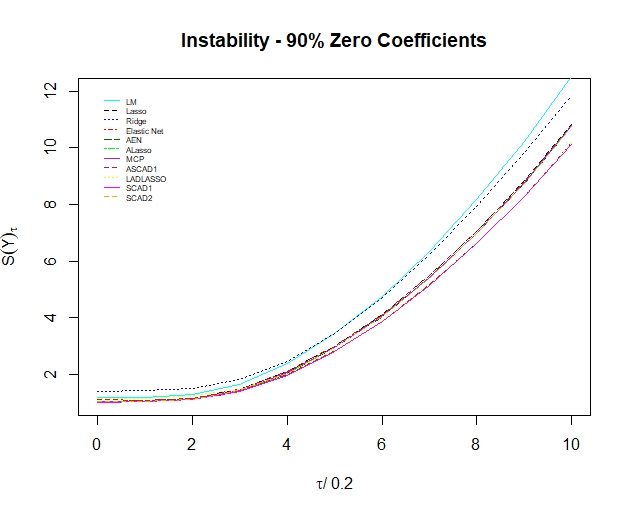} 
\\
\includegraphics[width=.35\columnwidth]{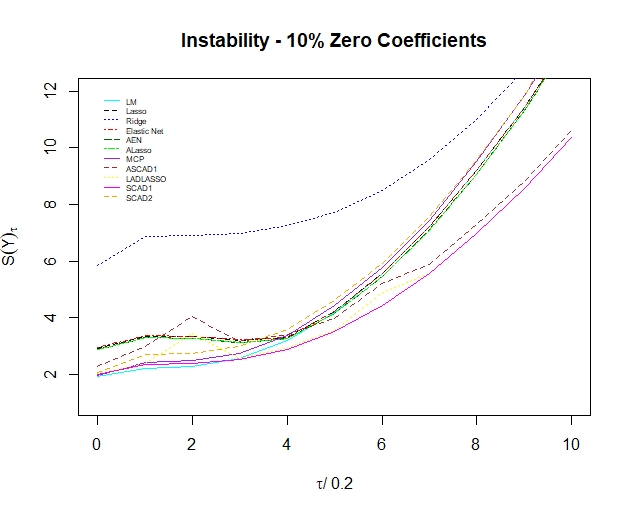}
\includegraphics[width=.35\columnwidth]{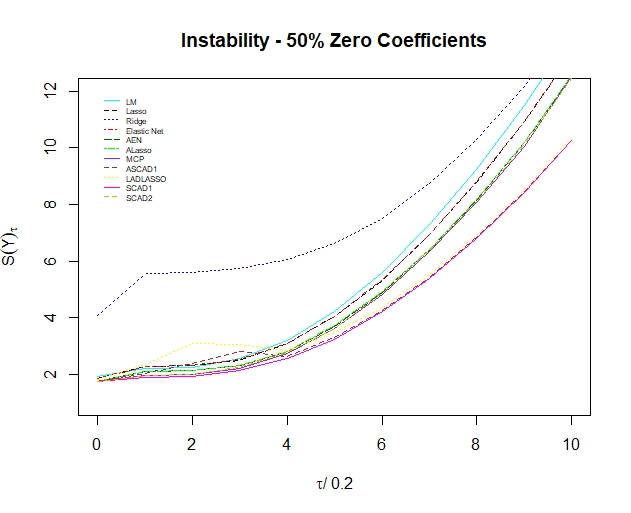}
\includegraphics[width=.35\columnwidth]{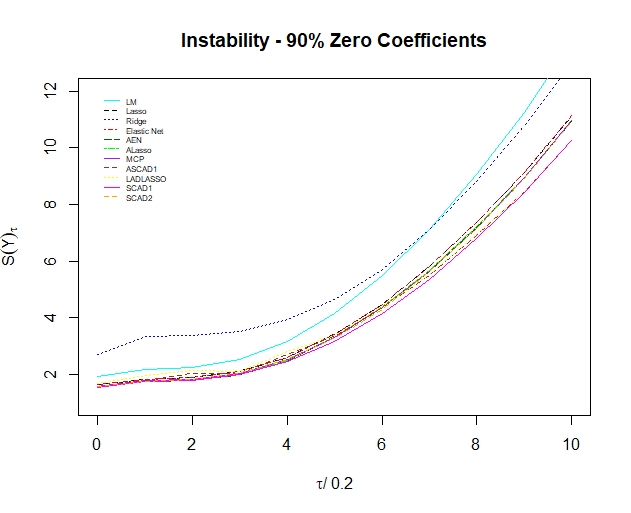}
\end{tabular}
\end{center}
\caption{$n=500$: From left to right the sparsity increases from .1, to .5, to .9.  From top to bottom
the heaviness of the tails of $X$ and $\epsilon$ increases from normal to $t_3$.  The distribution
of the 100 IID outcomes for the parameter $\beta$ is $N(4,1)$.  We used the identity matrix
in the normal distribution for $X$.}
\label{Fig_n=500}
\end{figure}

Table \ref{Table_n=500} shows that at $n=500$, most methods are performing variable selection quite well. In fact, almost all the methods used that have the OP are nearly perfect, on average, in performing variable selection. The two exceptions are LL and ASCAD1 which tend to 
include too many variables. That said, these two methods do not often exclude variables that are important, which allows them to perform comparably well predictively.  Even the methods that look worse in terms of variable selection (L, EN) predict well because they retain all the important variables. 
The only methods this table suggest could be ruled out are in the low or medium
sparsity case:  L, EN,  and L-L. have low entries in the Tot and Tr columns.    (They also have
near zero entries in the Fa column.)
However,  our comparison is predictive not based
on variable selection and these methods perform generally well in a predictive
stability sense.    It is likely that the coefficients of the incorrectly included variables are
quite small.

Note that in this case, the errors at zero of the light tailed methods approach the theoretical lower limit of the predictive error ($\sigma=1$
).  For the heavier tailed methods,
the smallest errors are below two.  Since the error terms here are $t_3$ with 
$\sigma =\sqrt{3} \approx 1.73$,
again the smallest errors are approaching their theoretical lower limit.   (This is seen in
earlier cases but usually only for the highest sparsity.)

 \begin{table}
 \centering
\begin{tabular}{llllllllllll}
\hline
  \multirow{1}{*}  Sparsity & 
\multicolumn{3}{c}{.1}&
\multicolumn{3}{c}{.5}&
\multicolumn{3}{c}{.9}& \\
      &       & Tot & Tr & Fa & Tot & Tr & Fa & Tot & Tr & Fa \\
    \hline
    Li & L  &   .04 & .41 & 0 & .34 & .68 & 0  & .83 & .93 & 0\\
    \hline
    Li & AL & .10 & 1 & 0 & .50 & 1 & 0  & .90 & 1 & 0\\
    \hline
    Li & EN &.04 & .36 & 0 & .34 & .67 &0 & .83 & .92 & 0 \\
    \hline
Li &AEN & .10 & 1 & 0  & .50 & 1 & 0 & .90 & 1 & 0 \\
\hline
Li & ASCAD1 & .09 & .94 & .05 & .50 & 1 & 0 & .91& 1 & .06\\
\hline
Li & L-L & .02 & .15 & .1  & .36 & .72 & .01  &.90 & 1 & .06\\
\hline 
Li & SCAD1 & .09 & .93 & 0 & .48 & .97 & 0 & .90 & 1 & 0 \\
\hline
Li &SCAD2 &  .10 & 1 &0  & .50 & 1 & 0 & .90 & 1 & 0 \\
\hline 
Li & MCP &  .10 & 1 & 0  & .50 & 1 & 0  & .90 & 1 & 0\\
\hline 
\hline
H & L &  .06 & .60 & 0 & .28 & .56 & 0 &  .86 & .95 & 0\\
\hline 
H & AL &  .10 & 1 & 0 & .50 & 1 & 0 & .90 & 1 & 0 \\
\hline 
H & EN & .06 & .56 & 0 & .27 & .55 & 0 & .86 & .95 & 0    \\
\hline
H & AEN &  .10 & 1 & 0  & .50 & 1 &0  & .90 & 1 & 0\\
\hline
H & ASCAD1 & .09 & .96 & 0 & .49 & .97 & 0 & .90 &1 & 0\\
\hline
H & L-L & .01 & .11 & 0 & .06 & .11 & 0  & .86 & .96 & 0 \\
\hline
H & SCAD1 & .09 & .91 & 0 & .49 & .99  & 0 & .90 & 1 & 0\\
\hline
H & SCAD2& .10 & 1& 0 & .50 & 1 & 0  & .90 & 1 & 0\\
\hline
H &MCP &  .10 & 1 & 0  & .50 & 1 & 0 & .90 & 1 & 0 \\
\hline 
\end{tabular}
\caption{ Variable selection performance for $n=500$}
\label{Table_n=500}
\end{table}

\subsection{Summary Tables and Recommendations}
\label{Summarytables}

To specify our recommendations we assume the following, elaborated from
Subec. \ref{n=40}. First, when $p>n$, the instability graphs are more important that the tables because the variable selection is so poor for all methods that we must 
rely solely on choosing the method that predicts the best (or least bad). Thus, for $n=40$ we just consider the figures in choosing the best methods.  For $n=75$, unless one specifies a relative weighting among Tot, Tr, and Fa, there is usually no clearly preferred method.
In these cases, we have used the instability curves to select from amongst the numerically best shrinkage methods.

Second, for larger sample sizes, and low to medium sparsity cases, we use 
both the instability curves and variable selection tables to make recommendations. That is, we choose the methods that are both selecting variables most appropriately and have the lowest instability generally.  Again, there are 
cases where we have to weight Tot, Tr, and Fa.  We continue to believe that Fa is
relatively more important than Tot and Tr and that Tot is relatively more important than Tr.

Two limitations of our recommendations are
that in practice i) the level of sparsity and ii) the variable
selection tables cannot be known.  Accordingly, it is only
the instability curve that is available.    Thus, our recommendations are
based primarily on the instability curves but taking into account the variable
selection tables. Moreover, there are other limitations that a potential user should take into
account e.g.,  our restriction to linear models.  

We begin our recommendations for small sample sizes with Table \ref{All cases-n=40} (based on our computations from Subsec. \ref{n=40}).  Recall that for $n=40$ no method worked particularly well.  
However, the overall best performing of these poor methods was RR.

 \begin{table}
 \centering
\begin{tabular}[t]{llll}
\hline
 Sparsity & .1 & .5  &  .9  \\
\hline
Ind, Li       & RR/EN  & RR/EN  &   RR/EN      \\
\hline
Ind, H         & RR/EN&  RR/EN &    RR/EN         \\
\hline
Tri  & RR/EN &  RR/EN & RR/EN  \\
\hline 
Toe &RR & RR & RR/SCAD2/MCP\\
\hline 
\end{tabular}
\caption{ Best performing shrinkage methods for $n=40$.}
\label{All cases-n=40}
\end{table}

Recommendations for the next sample size, $n=75$,  are given in Table \ref{All cases-n=75} (based on the computations in Subsec. \ref{n=75}).
EN was typically the preferred method for low to medium sparsity. For high sparsity, all methods besides RR performed well. Interestingly, we found SCAD2 or MCP performed best
when there is 
strong dependence as in the Toeplitz structure.

 \begin{table}
 \centering

\begin{tabular}[t]{llll}
\hline
 Sparsity & .1 & .5  &  .9  \\
\hline
Ind, Li    &   EN &  EN &  Not RR   \\
\hline
Ind, H   & EN      &   EN & Not: RR         \\
\hline
Tri  & EN &  EN & Not: RR \\
\hline 
Toe &SCAD2/MCP & SCAD2/MCP & SCAD2/MCP \\
\hline 
\end{tabular}
\caption{ Best performing shrinkage methods for $n=75$.}
\label{All cases-n=75}
\end{table}

Table \ref{All cases-n=150} summarizes our recommendations for $n=150$ (based on the computations in Subsec. \ref{n=150}).
For low sparsity, LM's were generally good.  For medium sparsity,  the conclusions
did not follow a strong pattern.   Several methods performed roughly equally well
and tridiagonal was, as in the earlier cases, similar to the independence cases,
except here for medium sparsity.
With stronger dependence,  the biggest change from the other cases was that SCAD1/SCAD2/MCP
performed best like the medium sparsity case for tridiagonal.
We also see that generally, as sparsity increases, LM's do relatively worse.

 \begin{table}
\centering
\begin{tabular}[t]{llll}
\hline
 Sparsity & .1 & .5  &  .9  \\
\hline
Ind, Li       &  LM  &  Not: RR/LL/ASCAD1 &   Not: RR/LM   \\
\hline
Ind, H   &   LM   &  SCAD1/AL/AEN  &   Not :  RR/LM/LL/ASCAD1     \\
\hline
Tri  & LM/SCAD1& SCAD1/SCAD2/ MCP & Not RR/LM  \\
\hline
Toe &LM  & LM/SCAD1 & SCAD1/ SCAD2/ MCP \\
\hline
\end{tabular}
\caption{Best performing shrinkage methods for $n=150$.}
\label{All cases-n=150}
\end{table}

Table \ref{All cases-n=500} summarizes our recommendations for $n=500$ (based on the computations in Subsec. \ref{n=500}) where it is safe to
assume the asymptotic properties have kicked in.
Thus, we see several methods performing well regardless of whether they have the OP.
Indeed, it is very hard to chose a single best method.
There are, however, several methods that essentially never do well, namely, RR, LL, and ASCAD1. 
Further LM also does relatively poorly as sparsity increases.  As such, we recommend 
not using these methods.  Finally, we see from the two dependence cases
that dependence has a very small effect at most when the sample size is this large. 
Indeed, it is the low sparsity with heavy tails that stands out somewhat from the rest.

 \begin{table}
\centering
\begin{tabular}[t]{llll}
\hline
 Sparsity & .1 & .5  &  .9  \\
\hline
Ind, Li       & Not: RR/LL/ASCAD1  &Not: RR/LL/ASCAD1   &  Not: RR/LM    \\
\hline
Ind, H   &   LM/SCAD1/SCAD2/MCP   & Not: RR/LL/ASCAD1   & Not: RR/LM        \\
\hline
Tri  & Not: RR/LL/ASCAD1  &Not: RR/LL/ASCAD1   &  Not: RR/LM   \\
\hline
Toe &Not: RR/LL/ASCAD1  &Not: RR/LL/ASCAD1   &  Not: RR/LM   \\
\hline
\end{tabular}
\caption{Best performing of shrinkage methods for $n=500$.}
\label{All cases-n=500}
\end{table}

We conclude with some general observations. In the  $n>p$ case, we observe LM's often perform best except for high sparsity.  
Fortunately, at this sparsity level all shrinkage methods except for RR are essentially equivalent.  LM's are 
sometimes are not very good in the heavy tailed case but, again,  this mainly
occurs  for high sparsity as shrinkage methods set zero coefficients
to zero faster than LM does.  

We observe that the higher the sparsity, the better the methods exclude variables
that are not relevant and include only the relevant variables. 
This is generally true regardless of the sample size.   That is,
we are observing a sort of consistency under increasing sparsity that 
occurs even for sample sizes that are not asymptotic.   

That said, as sample size increases, methods with the oracle property do indeed 
emerge as best, if not uniquely so, regardless of the heaviness of the tails -- as long 
regularity conditions e.g., conditions 2, 3 and 4 in Subsec. \ref{cond2} on moments, are satisfied.

\section{Corroboration on Real Data}
\label{realdata}

As a test of our recommendations in Subsec \ref{Summarytables}, we used the same
shrinkage methods on the data set {\sf Superconductivity} presented in \cite{Hamidieh:2018}.    
This data set has
81 explanatory variables of a physical or chemical nature to explain a response $Y$
representing temperature measurements (in degrees K) for when a compound 
begins to exhibit superconductivity.
Initial data analyses suggested the data were sparse,  but it was unclear how sparse.
\cite{Hamidieh:2018} suggests a sparsity level of about 90\% and our techniques here confirm this in the sense that we find, if a linear model is fit, around 90\% of the coefficients will be zero and this will be nearly best possible from a predictive standpoint.
In addition, \cite{Hamidieh:2018} implicitly used light tails in the error term, $\epsilon$, and
did not comment on the distribution of the explanatory variables apart from effectively
taking them as independent and not requiring any special treatment to account
for spread.  Accordingly, we treated these as coming from a light tailed distribution.

Furthermore, \cite{Hamidieh:2018} identified 20 variables of potential
importance.  Of those 20, we identify only seven of them being important
because the variable importance factors decreased suddenly at the eighth 
most important variable.
This gives 7/81 $<$ 10\%, confirming this case corresponds to the high sparsity setting. 
Histograms of the residuals from the full LM suggest this falls into the light tail case as well. 
Thus, we compare our computed results in this section
to the recommendations for the light tailed high sparsity cases treated
in Subsec.  \ref{Summarytables}.

In fact,  the full {\sf Superconductivity} data set had $n= 21263$, so \cite{Hamidieh:2018} 
was able to use a standard (unpenalized) LM as a `benchmark model' and then 
improved on it by developing an XGBoosting model -- a boosted, penalized tree model
in which the penalty was carefully constructed to be appropriate for trees.

Here, as is common in pratice, especially where a more justifiable methodology is infeasible, 
we have used LM's for their interpretability,  Also, when $n<<p$, XGBoosting 
often does not perform well.   So, it may sometimes be reasonable to use shrinkage 
techniques in mis-specified model situations with small sample sizes.

Since {\sf Superconductivity} is so much larger than the data sets used in our
simulations,  we drew 40, 75, 150,  and 500 data points at random so comparisons with
our recommendations here would be fair.
We note that many data sets are much smaller than {\sf Superconductivity}
so our example here is intended to be suggestive for them too.

We repeated the analyses presented in Sec. \ref{results1} for the independent cases
with light tails but replaced the simulated data with the randomly chosen subsets of {\sf Superconductivity.}  We were able to generate instability curves but not the
variable selection accuracy tables because the true model
is unknown.   The instability curves for {\sf Superconductivity} are given in
Fig. \ref{superconductivity-instability}.

\begin{figure}[h!]
\begin{center}
\begin{tabular}{cc}
\includegraphics[width=.5\columnwidth]{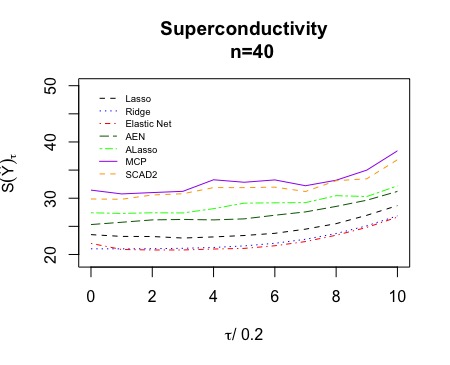}
&
\includegraphics[width=.5\columnwidth]{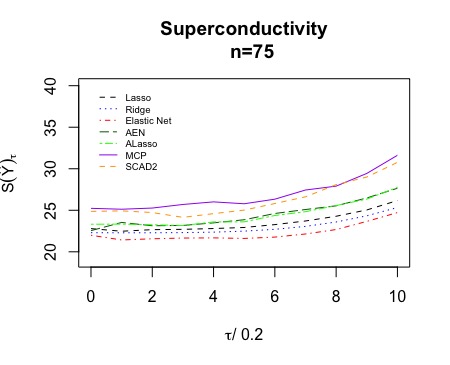}
\\
\includegraphics[width=.5\columnwidth]{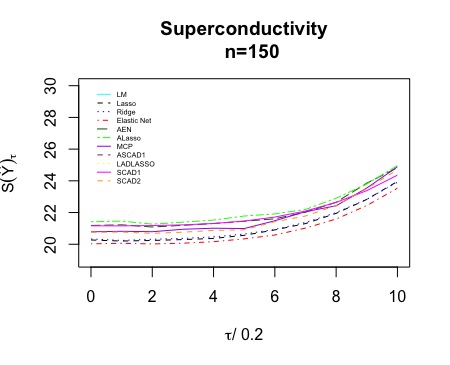} 
&
\includegraphics[width=.5\columnwidth]{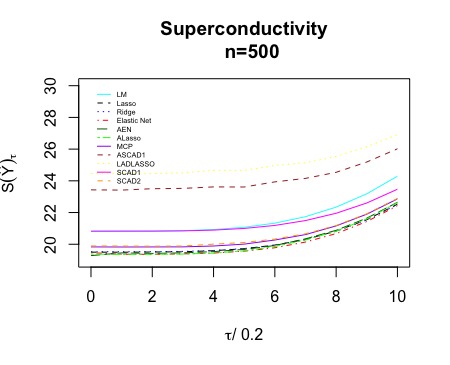} 
\end{tabular}
\end{center}
\caption{Instability curves for the Superconductivity data for $n= 40, 75$ (top) and $n= 150, 500$
(bottom)}
\label{superconductivity-instability}
\end{figure}

For each sample size, we compare the best methods from 
Fig.  \ref{superconductivity-instability} to the corresponding 
recommendations in Subsec.  \ref{Summarytables}.
For $n=40$,  the upper left panel in Fig. \ref{superconductivity-instability} 
shows that RR and EN are the best shrinkage methods.   This is the same as
recommended in Table \ref{All cases-n=40} for sparsity .9 and light, independent tails.
For $n=75$,  the upper right panel in Fig. \ref{superconductivity-instability} shows
EN is the best, followed by RR and L which are noticeably worse.   
Table \ref{All cases-n=75} shows only that RR should not be used with 
 light, independent tails. So, again we see agreement even if the recommendation
was not specific.  

By contrast,  for $n=150$, the lower left panel in Fig. \ref{superconductivity-instability} 
shows that EN is best, closely followed by L and RR.   Table \ref{All cases-n=150} 
indicates that RR and LM's are to be avoided (for light independent tails).
So,the good performance of RR disagrees our recommendations.
Finally, for $n=500$,  the lower right panel in Fig. \ref{superconductivity-instability} 
shows that five methods form a cluster of the best of the 11 methods.  
The cluster of top methods is ALASSO, L,  AEN,  and RR.  The
recommendation from Table \ref{All cases-n=500} is not to use RR or LM's.
Again, we have a disgreement on the use of RR.

We explain these findings by model mis-specification.  First, the true  model is
almost certainly not a LM.  Indeed, \cite{Hamidieh:2018} ends up proposing
a model based on trees.   The agreement between our recommendations and
the data analysis for small values of $n$ probably means that the sample size is too small to
detect the difference between the true model and a LM.  However,
when the sample size increases, the model mis-specification matters.
RR normally performs well for non-sparse cases but here is performing well
when the true model is sparse.   

We conjecture this occurs because a non-sparse linear 
model may provide a better approximation to a sparse non-linear model than a sparse
linear model does.  The analogy is to imagine representing a single true
tree model with a single linear model.  The linear model would have to have many
terms to approximate a tree even one with relatively few nodes.  That is,  a large enough LM 
might provide a good approximation.

As a final point for this section, if we were to redo our simulations using a different model class,
i.e., not LM's, we could end up with different recommendations and if the model class 
contained the true model for the {\sf Superconductivity} data we would expect our
recommendations to match the data analyes.

\section{Optimizing Over the Shrinkage Method}
\label{GAstuff}

Although previous sections used existing well-studied shrinkage methods, the results of Sec. \ref{theory1}
show that there are infinitely many other penalties that could be used to get shrinkage methods
with the OP.   Recalling that penalties are special cases of priors, 
it is clear that the recommendations for choice of shrinkage method given in Sec. \ref{results1}
are limited.   Here we propose that, rather than choosing a shrinkage method from a list
of options, one should find a prior by optimizing a predictive optimality criterion using an adaptive search technique such as a genetic
algorithm (GA) or, more exactly, find a best posterior based on a portion 
of the data that can be used as a prior for remaining and forthcoming data.    

We begin by extending the results of Sec. \ref{theory1} to allow for
data-dependent shifts in parameter locations. The main benefit of shifting the location of the penalty is that it reduces prior-data
conflict.  That way,  when we find an optimal penalty in the next subsection, it will
correspond to putting priors on the $\beta_j$'s that have more of their mass close 
to the true values of the parameters, thereby improving inference.
If the location shift is not used in the penalty, our method below still can be used
but is not as effective. especially in small samples.  It just makes sense that if
the true value of a $\beta_j$ is not zero, then we should not use a prior centered at zero. Then 
we present our GA methodology and verify that our methodology seems to achieve optimality in simulations.  This shows that
a GA approach is a viable alternative to pre-selecting a shrinkage method.

\subsection{Extending the Theory of Sec. \ref{theory1}}
\label{theory2}

Let $\hat{\beta}^*$  be a $\sqrt{n}-$consistent estimator of $\beta$.   
To take advantage of the fact that shrinkage methods can set $\hat{\beta}_j$'s to zero, 
it is natural to choose $\hat{\beta}^*$ to be from a specific shrinkage method such as
SCAD2 that only requires the estimation of one extra parameter.  Adaptive methods 
such as ALASSO, AEN, etc., are also viable.
The idea is to use the $\hat{\beta}^*_j$'s in $\hat{\beta}^*$ 
to adjust the location of the penalty function. 
This overuse of the data is standard in shrinkage methods (in the James-Stein sense)
where it is used to improve the risk performance of decisions.

We state our extensions to Theorems \ref{penllOP} and \ref{empriskOP} as corollaries
since we assume the same hypotheses.
For penalized log likelihoods with location shifted penalties we have the following
sufficient conditions for the OP to hold.

\begin{cor}
\label{cor_ll_OP}
Redefine the objective function in Subsec. \ref{genloglike} to be
$$
Q(\beta) = L(\beta | x^n) + n \sum^p_{j=1} \lambda_j f_j(\beta_j -\hat{\beta}^*_j) .
$$  
Then, under the same conditions as in Theorem \ref{penllOP},  
the estimator $\hat{\beta}=(\hat{\beta_{1}}',\hat{\beta_2}')'$ 
that minimizes $Q(\beta)$ has the OP, i.e.,
$$
P(\hat{\beta_2}=0) \rightarrow 1 \quad {\rm and} \quad \sqrt{n}\hat{I}_{1}(\beta_{1}| x^n) (\hat{\beta_{1}} - \beta_{1}) \rightarrow \mathcal{N}(0 , I_1(\beta_{1}|x^n)).
$$
\end{cor}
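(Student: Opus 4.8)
The plan is to follow the proof of Theorem \ref{penllOP} essentially verbatim, treating the location shift $\hat\beta^*$ as a perturbation that $\sqrt{n}$-consistency renders asymptotically negligible. First I would fix $C>0$ and restrict attention to the event $A_n = \{\|\hat\beta^* - \beta_0\| \le C/\sqrt{n}\}$; by the assumed $\sqrt{n}$-consistency of $\hat\beta^*$, we have $P(A_n)\to 1$ as $C\to\infty$, so it suffices to establish the three conclusions on $A_n$, where $\hat\beta^*$ may be treated as a deterministic shift of size $O(1/\sqrt{n})$. Since the objective differs from the one in Theorem \ref{penllOP} only in that $f_j(\beta_j)$ is replaced by $f_j(\beta_j - \hat\beta_j^*)$, the whole argument reduces to showing that the shift contributes terms of the same order as those already controlled by the rate hypotheses $\sqrt{n}a_n\to 0$ and $\sqrt{n}b_n\to\infty$.

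For the root-$n$ consistency step I would expand $Q(\beta_0 + u/\sqrt{n}) - Q(\beta_0)$ for $\|u\| = C$. The log-likelihood part produces the usual quadratic form in $u$ governed by the positive-definite information of Conditions 1--3, dominating the $O_p(1)$ score term. For the penalty part I would invoke the uniform Taylor expandability of Condition 6 at the shifted argument: for the nonzero block ($j \le p_0$) the increment is $\sqrt{n}\lambda_j f_j'(\beta_{0j} - \hat\beta_j^*) u_j + o(1)$, which is bounded by $\sqrt{n}a_n \sup_j |f_j'|\cdot|u_j| \to 0$ using Condition 4 and $\sqrt{n}a_n \to 0$. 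Crucially, the shift only moves the evaluation point of $f_j'$ by $O(1/\sqrt{n})$ and so does not affect this bound, and the quadratic likelihood term dominates on the sphere $\|u\|=C$ for $C$ large, giving a minimizer within $O_p(1/\sqrt{n})$ of $\beta_0$.

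For the sparsity step I would examine $\partial Q / \partial \beta_j$ for $j > p_0$ with $\beta_j$ in the root-$n$ neighborhood of $0$. The score contribution is $O_p(\sqrt{n})$, while the penalty contribution is $n\lambda_j f_j'(\beta_j - \hat\beta_j^*)$ with $\lambda_j \ge b_n$. On $A_n$ the shifted argument $\beta_j - \hat\beta_j^*$ is itself $O_p(1/\sqrt{n})$, so Condition 5 (the uniform vanishing of $f_j'$ in a neighborhood of the origin) together with Condition 6 controls $f_j'(\beta_j - \hat\beta_j^*)$ and lets the sign of $\partial Q/\partial \beta_j$ be determined exactly as in Theorem \ref{penllOP}, forcing $\hat\beta_j = 0$ with probability tending to $1$. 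The asymptotic normality of $\hat\beta_1$ then follows on the event $\{\hat\beta_2 = 0\}$ from the same expansion of the oracle score, with the shift entering the nonzero block only through an $O(\sqrt{n}a_n)=o(1)$ bias that vanishes in the limit.

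The main obstacle I anticipate is the sparsity step: one must verify that replacing $f_j'(0)=0$ by $f_j'(-\hat\beta_j^*)$ does not destroy the domination of the penalty derivative over the score, and that this holds simultaneously over all $p - p_0$ zero components. This is exactly where the uniformity built into Conditions 5 and 6 is essential — the error terms $o_j(1)$ must vanish uniformly in $j$ so that the shifted derivatives can be bounded at once — and where the interplay with $\sqrt{n}b_n \to \infty$ must be checked with care. Once these uniform bounds are in hand, every other estimate is identical to the unshifted proof, and the corollary follows.
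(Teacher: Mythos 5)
Your consistency and asymptotic-normality steps are fine and match the paper's reasoning: since $f_j'$ is uniformly bounded (Condition 4) and the shift only changes the evaluation point of $f_j'$ inside the penalty, the terms $\sqrt{n}\lambda_j f_j'(\cdot)$ for $j\leq p_0$ are still killed by $\sqrt{n}a_n\rightarrow 0$, exactly as in the last term of \eqref{asymp_normal}. The genuine gap is in your sparsity step. The sign argument in Lemma \ref{sparsity} shows that $\partial Q/\partial\beta_j$ changes sign where $f_j'$ of the penalty argument changes sign; with the shifted penalty $f_j(\beta_j-\hat{\beta}^*_j)$ that sign change occurs at $\beta_j=\hat{\beta}^*_j$, not at $\beta_j=0$. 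Consequently the coordinatewise minimizer is driven to $\hat{\beta}^*_j$, and mere $\sqrt{n}$-consistency of $\hat{\beta}^*$ (your event $A_n$) only gives $\hat{\beta}_j=O_p(n^{-1/2})$ for $j>p_0$ --- it does not give the exact-zero conclusion $P(\hat{\beta}_2=0)\rightarrow 1$. Appealing to Condition 5 does not rescue this: making $f_j'$ small near the origin weakens, rather than restores, the force that sets coefficients exactly to zero.

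The paper closes this gap by using a property of $\hat{\beta}^*$ beyond $\sqrt{n}$-consistency: it takes $\hat{\beta}^*$ from a shrinkage method (e.g.\ SCAD2) with its own sparsity property, so that for every truly zero $\beta_j$ one has $P(\hat{\beta}^*_j=0)\rightarrow 1$. On that event the shifted penalty coincides with the unshifted one for all $j>p_0$, and the inequalities at the end of Lemma \ref{sparsity} apply verbatim. You should either add this hypothesis explicitly and restrict to the event $\{\hat{\beta}^*_j=0 \ \forall j>p_0\}$, or accept the weaker conclusion that $\hat{\beta}_2$ is only $O_p(n^{-1/2})$ rather than exactly zero. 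With that repair, the rest of your argument goes through as you describe.
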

\begin{proof}
The proof of Cor. \ref{cor_ll_OP} follows directly from the proof of Theorem \ref{penllOP}.
Indeed, if a true $\beta_j =0$, then for large $n$ we have $P(\hat{\beta}^*_j = 0 ) \rightarrow 1$.
So, in the proof of Lemma \ref{sparsity},the inequalities at the end are asymptotically unchanged.
Then,  since $\hat{\beta}^*_j$ only appears in the penalty, not the likelihood, and in the proof of
Theorem \ref{penllOP} the penalty only has to be controlled in the last term 
of \eqref{asymp_normal},  to get the asymptotic normality
it is enough for $\sqrt{n} \lambda_j \rightarrow 0$, as guaranteed by the hypotheses.
\end{proof}

For penalized empirical risks with location shifted penalties we have the following
analog to Theorem \ref{empriskOP}.
\begin{cor}
\label{cor_emprisk_OP}
Redefine the objective function in Subsec. \ref{penemprisk} to be
$$
Q(\beta) = R(\beta | x^n) + n \sum^p_{j=1} \lambda_j f_j(\beta_j - \hat{\beta}^*_j).
$$  
Then, under the same conditions as Theorem \ref{empriskOP}, 
the estimator $\hat{\beta}=(\hat{\beta_{1}}',\hat{\beta_2}')'$ that minimizes $Q(\beta)$ 
has the OP, i.e., 
$$
P(\hat{\beta_2}=0) \rightarrow 1 \quad {\rm and} \quad 
\sqrt{n}\hat{I^*}_{1}(\beta_{1}| x^n) (\hat{\beta_{1}} - \beta_{1}) \rightarrow \mathcal{N}(0 , I^*_1(\beta_{1}|x^n)).
$$
\end{cor}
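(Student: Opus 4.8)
The plan is to follow the proof of Corollary \ref{cor_ll_OP} essentially verbatim, substituting the empirical-risk machinery of Theorem \ref{empriskOP} for the log-likelihood machinery of Theorem \ref{penllOP}. The decisive structural fact is the same in both settings: the data-dependent shift $\hat{\beta}^*_j$ appears only inside the penalty $f_j(\beta_j - \hat{\beta}^*_j)$ and never inside the empirical risk $R(\beta\mid x^n)$. Consequently every ingredient of Theorem \ref{empriskOP} that is derived from $R$ alone --- the convergent second-order Taylor expansion and its limit $I^*(\beta)$ from Condition 7, together with the vanishing of $E_U[d'(u)]$ from Lemma \ref{exp_dprime} --- carries over unchanged. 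What remains is to re-examine only the two places where the penalty enters: the sparsity argument and the last term of the asymptotic-normality expansion.

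First I would handle the sparsity claim $P(\hat{\beta}_2 = 0)\to 1$. The key observation is that $\hat{\beta}^*$ is itself taken to be a shrinkage estimator (SCAD2, ALASSO, and similar), so for every truly zero coordinate $j > p_0$ one has $P(\hat{\beta}^*_j = 0)\to 1$. On exactly those coordinates the shifted penalty $f_j(\beta_j - \hat{\beta}^*_j)$ therefore coincides, with probability tending to one, with the unshifted penalty $f_j(\beta_j)$ used in Theorem \ref{empriskOP}. Hence the chain of inequalities in the sparsity lemma (Lemma \ref{sparse}), which forces $\hat{\beta}_j = 0$ by exploiting $\sqrt{n}b_n\to\infty$ together with Condition 9, is asymptotically unaffected, and the same conclusion holds for the shifted objective.

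Next I would establish asymptotic normality of the nonzero block. Expanding the estimating equation for $\hat{\beta}_1$, the penalty contributes only through its first derivative by Condition 10, namely a term of order $\sqrt{n}\lambda_j f_j'(\hat{\beta}_j - \hat{\beta}^*_j)$ for $j \le p_0$. Since $f_j'$ is uniformly bounded by Condition 8 and $\sqrt{n}\lambda_j \le \sqrt{n}a_n \to 0$ on the nonzero block, this term is $o_p(1)$ irrespective of the location of the shift; the $\sqrt{n}$-consistency of $\hat{\beta}^*$ guarantees that the argument $\hat{\beta}_j - \hat{\beta}^*_j$ remains in a shrinking neighborhood where the uniform Taylor expansion of Condition 10 applies. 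The leading Gaussian term, governed by $\hat{I^*}_1$, is then identical to that in Theorem \ref{empriskOP}, yielding the stated limit $\sqrt{n}\hat{I^*}_{1}(\beta_{1}\mid x^n)(\hat{\beta}_1 - \beta_1)\to\mathcal{N}(0,I^*_1(\beta_1\mid x^n))$.

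The main obstacle I anticipate is purely the bookkeeping at the boundary between the two regimes: I must verify that the shift does not spoil the sign-and-magnitude inequality in Lemma \ref{sparse} for coordinates where $\hat{\beta}^*_j$ is close to but not exactly zero in finite samples. This is resolved precisely by the probability-one vanishing of $\hat{\beta}^*_j$ on the truly-zero coordinates, so that the only coordinates on which the shift is genuinely nonzero are those in the nonzero block, where it is harmless because $\sqrt{n}a_n\to 0$. No condition beyond those already assumed in Theorem \ref{empriskOP} is needed.
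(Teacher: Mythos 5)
Your proposal is correct and follows essentially the same route as the paper: the paper proves Cor.~\ref{cor_emprisk_OP} by noting it follows from Theorem~\ref{empriskOP} exactly as Cor.~\ref{cor_ll_OP} follows from Theorem~\ref{penllOP}, i.e., by observing that $P(\hat{\beta}^*_j=0)\rightarrow 1$ on the truly-zero coordinates leaves the sparsity inequalities of Lemma~\ref{sparse} asymptotically unchanged, and that the shift enters only the penalty term, which is killed in the normality expansion by $\sqrt{n}\lambda_j\rightarrow 0$. Your added remark about the shift being harmless on the nonzero block is consistent with, and slightly more explicit than, the paper's one-line argument.
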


Cor. \ref{cor_emprisk_OP} follows from Theorem \ref{empriskOP} the same as
Cor. \ref{cor_ll_OP} follows from Theorem \ref{penllOP}.

The methods motivated by these corollaries continue to allow shrinkage via the $w_j$'s
as well as `James-Stein' type shrinkage.
So, we are introducing another $p$ hyperparameters.   For this reason we only recommend
this `double shrinkage' approach when $n$ is not too much smaller than $p$ and
preferably $n > p$.  For these cases,  we continue to set $\hat{w}_j =1/| \hat{\beta}_{j, OLS}|$.
Taken together, double shrinkage lets us set coefficients to zero from the OP on the
$\hat{\beta}^*_j$'s and from the OP on the $\beta_j$'s.  Moreover, the
weights $\hat{w}_j$ also function to give tighter intervals around $\beta_j$'s that have 
smaller $|\hat{\beta}_{j, OLS}|$'s.    Thus, overall, we do tend to get penalties/priors that are 
centered around zero when they should be and not centered around zero when they shouldn't be.

\subsection{Using GAs to Find a Shrinkage Method}
\label{usingGAs}

Our goal is to find the penalty that leads to the best predictor.   To this end,
recall a GA is a computational algorithm that tries to mimic evolution to optimize a
fitness function by using analogs of mutation, crossover, and selection. 
 Here,
we embed a GA in a two stage optimization to find an optimal penalty/prior.
We start with a `population' of penalty functions and then use gradient descent 
on the $Q$ from each of them to find a corresponding set of $\hat{\beta}$'s.  
We evaluate a fitness function for each $\hat{\beta}$ and select the vectors of $\beta$'s from the current population that correspond to the top 20\% (say) of fitness values as our `elite' set.  Now we use mutation and crossover 
on the remaining penalty functions (i.e., non-elite members of the current population).  The resulting
penalty functions along with the `elite' set then form the next population.  Iterating this process gives
a series of populations of penalty functions with a non-decreasing `best' fitness value and
the $\hat{\beta}$ from the `best' penalty optimizes the fitness function.

First, we define our initial class of penalty functions.
Cors.  \ref{cor_ll_OP} and  \ref{cor_emprisk_OP} imply we can use any penalty function
within a very large class, as long as the regularity conditions are met. 
Since our goal is to find the optimal penalty and the optimal penalty may be very
complicated, we satisfy ourselves with merely approximating it.  
Here we represent $f_j(\beta_j)$ using finitely many polynomials.  That is,
with mild abuse of notation, we set
\begin{eqnarray}
f_j(\beta_j) =  \sum^p_{j=1} \sum^6_{k=1} \alpha_k |\beta_j - \hat{\beta}^*_j|^k.
\label{pen1}
\end{eqnarray}
Obviously, we would get a better approximation to an optimal penalty if we
used more terms but for present purposes sixth order polynomials turned out
to be sufficient.  Our initial population of penalty functions is generated from
\eqref{pen1} by selecting $M$ values of $\alpha = (\alpha_1, \ldots , \alpha_6)$
IID from a Unif[0,20], say $\alpha_m = (\alpha_{1,m}, \ldots , \alpha_{6,m})$ for
$m=1, \ldots , M$.   The GA will update this initial population denoted $A^0=\{ \alpha_1^0, \ldots, 
\alpha_M^0\}$ of size $M$ over $F$ iterations to a final
population $A^F =\{ \alpha_1^F, \ldots, \alpha_M^F\}$
also of size $M$ in which we expect essentially all members to be the same. 
(\cite{Givens:Hoeting:2013} p.  75 states that the algorithm often stops
when there is little diversity in the population, as we detected.)

We start by showing how the typical iteration from $A^0$ to $A^1$ proceeds.
Assume we have data ${\cal{D}} = {\cal{D}}_n  = \{(y_i, x_i) | i=1, \ldots, n \}$ 
and $\dim(x_i) = p$ and the empirical risk
$$
R(\beta | {\cal{D}}_n)  = \frac{1}{n}\sum^n_{i=1}(y_i - x'_i\beta)^2.
$$
In view of Cor. \ref{cor_emprisk_OP} we seek
\begin{equation}
\label{eq_GA1}
\hat{\beta}_{\alpha^0_m} =
\arg \inf_{\beta}  \left( 
\frac{1}{n}\sum^n_{i=1}(y_i - x'_i\beta)^2
+ 
\lambda \sum^p_{j=1} w_j \sum^6_{k=1} \alpha_{m,k}^0 |\beta_j - \hat{\beta}^*_j|^k 
\right)
 \end{equation}
for each $\alpha_m^0 \in A^0$.     
We find suitable values of the decay parameter $\lambda \in \mathbb{R}^+$ and the $\hat{\beta}_j^*$'s based on the data
as described shortly.    We will use two versions of \eqref{eq_GA1} depending on 
the relative sizes of $n$ and $p$.  Specifically,  if $p \geq n$ or not too much smaller than
$n$,  we set all $w_j = 1$ and all $\hat{\beta}^*_j=0$.  If $p < n$, we set 
$w_j = 1/ |\hat{\beta}_{j, OLS}|$
as  noted in Subsec.  \ref{theory2}.
We make this choice because when $n<p$ typically our asymptotic results do not apply.
In the case that $p \geq n$ \eqref{eq_GA1} reduces to
\begin{equation}
\label{eq_GA2}
\hat{\beta}_{\alpha^0_m} =
\arg \inf_{\beta}  \left( 
\frac{1}{n}\sum^n_{i=1}(y_i - x'_i\beta)^2
+ 
\lambda \sum^p_{j=1} \sum^6_{k=1} \alpha_{m,k}^0 |\beta_j|^k 
\right).
 \end{equation}

To solve \eqref{eq_GA1} or \eqref{eq_GA2}, we randomly split the data to estimate the
various parameters.
We begin by writing ${\cal{D}} = {\cal{D}}_{train} \cup {\cal{D}}_{test}$. 
We reserve ${\cal{D}}_{test}$ for comapring predictors after the entire GA process is
completed.
Next we split the training data 
$$
{\cal{D}}_{train} = {\cal{D}}_{train, \lambda}  \cup {\cal{D}}_{train, \beta} \cup {\cal{D}}_{train, \alpha}.
$$ 
We use ${\cal{D}}_{train, \lambda}$ to find $\lambda$ and ${\cal{D}}_{train, \beta}$ to find
the $\hat{\beta}_j^*$'s and the $\hat{w}_j$'s ($n>p$).   Since $\hat{\beta}_{\alpha_m^0}$ depends
on $\lambda$ we begin by searching over a list of values $\Lambda$ equally spaced from
$\lambda_{max} = (1/n_{train, \beta}) \max | Y_{train,\beta}^T X_{train,\beta}|$ to $\lambda_{min} = \gamma \lambda_{max}$ for some $0<\gamma<1$.  
(Here, $n_{train, \beta} = \#{\cal{D}}_{train, \beta}$ with corresponding data indicated by
$Y_{train,\beta}$ and $X_{train,\beta}$.)
For each fixed $\alpha^0_m$ and each choice of $\lambda \in \Lambda$, 
we find $\hat{\beta}_{\alpha_m^0 , \lambda}$ from ${\cal{D}}_{train, \beta}$ and choose the 
$\hat{\lambda}^0_m$ that minimizes
 $R( \hat{\beta}_{\alpha_m^0, \lambda} | {\cal{D}}_{train, \lambda})$. 

We find $\hat{\beta}_{\alpha^0_m}$ for each $m$ in \eqref{eq_GA1} by sub-gradient 
descent since $\alpha^0_m$, $\lambda = \hat{\lambda}^0_m$, 
$w_j = \hat{w}_j$ and $\hat{\beta}^*_j$ can be taken as given.
(The $\hat{w}_j$ and $\hat{\beta}^*_j$ should also have sub- and super-scripts $m$ and 0;
we omit these for convenience.)
Recall, the sub-gradient descent algorithm allows for us to have points of 
non-differentiability in the penalty (e.g., a corner as in L or SCAD2), and in cases where 
the penalty is differentiable, the sub-gradient is uniquely defined by the gradient.  
Note that the objective function is constructed to be convex, so we do indeed 
have a minimum. We initialize the gradient descent algorithm at the LASSO solution 
for $n>p$ and at the RR solution for $n<p$.

Now define the fitness function for the GA to be  
\begin{eqnarray}
\label{fitness}
f = \sum_{i \in {\cal{D}}_{train, \alpha} } (y_i - x'_i \hat{\beta}_{\alpha^0_m,  \hat{\lambda}^0_m})^2.
\end{eqnarray}
We evaluate the fitness for each $\alpha_m^0$ in $A^0$. 
Note that for each $\alpha_m^0$ for $m =1, \ldots, M$ we get a single best choice for 
$\hat{\lambda}_m^0$ and $\hat{\beta}_{\alpha_m^0, \hat{\lambda}_m^0}$ and hence a
single fitness value.
However, it is possible for different $\alpha_m^0$'s to give exactly the same
$f$-value because its possible $\hat{\beta}_{\alpha_i^0, \hat{\lambda}_m^0} = \hat{\beta}_{\alpha_j^0, \hat{\lambda}_m^0}$ for some $i\neq j$.  Although this would appear to happen with
probability zero,  it is observed on a regular basis.  This arises because different but
similar penalties may lead to the same solution and because computing only
has limited precision.

Next, by elitism we select off the top 20\% of members of $A^0$.   We fill in the `missing' 80\% 
by applying crossover and mutation to the bottom 80\% of fitness values
to obtain a new generation of size $M$ from the algorithm
to go into the second iteration.  Crossing means switching some entries of a genome
$\alpha^\prime_m$ with entries from another $\alpha^\dagger_m$ to generate 
a `new genome'.   This
is done at random keeping only the `child' until the population size $M$ is achieved.
Mutation means adding a perturbation to all members of $\alpha$ (here a random number 
between the user specified maximum and minimum values for each component in $\alpha$). 
Mutation does not change the size of the population, only the specific genomes
already in it.  In this way we get a new population $A^1$ to which we can apply the same procedure.
Then, we can iterate to get $A^2$, $A^3$ and so on until $A^F$ contains little
diversity on the population. 

To see that this is the typical behavior of this sort of GA, 
we use the framework of
\cite{Rudolph:1996}.  First, it is easy to see that as we have set it up here,
the GA is a Markov process.  That is, the probabilistic behavior in moving from time $t$ to 
time $t+1$ depends only on the state at time $t$.  Moreover, this Markov process is
homogeneous in the sense that the transition from time step to time step is the
same for any two adjacent time steps.  Note that the Markov process is `discrete time'
and has a discrete population (leading to distinct crosses)
but the mutation is continuous because of the uniform distribution.  Thus, there is no
transition matrix.  Instead, there is a transition kernel, $K(x, S)$, where $x$ is a
population member at time $t$ and $S$ is a set of possible states to which $x$
may be transformed and $K$ is independent of $t$.    In fact, $K(\cdot, \cdot)$ can be
partitioned into a $K_m$ and $K_c$, a mutation and crossover kernel.
The crossover kernel is a transition matrix since crossover is discrete.
The mutation kernel includes the continuous mutation phase based on the
uniform distribution.  So,  let $x$ be any state at time $t$ and
suppose an optimum 
$f^*$ exists and the Markov process has state space $E$.  Then,  there will be elements of 
$E$ arbitrarily close to $f^*$.   Let $b(x_t)$ be the best fitness value within 
the $t$-th population and let $d(x) = b(x) - f^*$.    As long as the population is large enough,
$B_\epsilon = \{ d(x) < \epsilon\}$ will have nonzero probability for $\epsilon >0$
and hence $K_m(x,  B_\epsilon)$ will be bounded away from zero.
Now,  given that we have used elitism, Theorem 2 in \cite{Rudolph:1996} applies to give 
convergence of the GA to the global minimum of $f$ within the class of priors that
have the OP as in Cors.  \ref{cor_ll_OP} and \ref{cor_emprisk_OP}.

The behavior of the GA depends on $M$,  the elements of $A^0$, the size of $F$, 
the choice of $f$,  the data, etc.   Indeed, a pragmatic check on the behavior of a GA
would be to run it with different initial populations to see if the GA outputs approximately
the same minimum.   To ensure convergence of the GA one should set a large 
population as well as a large number of generations.   We comment that in the
sub-gradient descent phase of our procedure, we have limited ourselves to
convex objective functions.  For more general results we would have to ensure
convergence of the gradient based optimization to ensure convergence of the 
GA-based optimization.   We implemented our GA computations using {\sf genalg},
see \cite{Willighagen:etal:2015}.

Our intuition tells us that this method will be beneficial in low to medium 
sparsity cases, as well as heavy dependence or non-asymptotic cases.
This is due to the fact that asymptotically the OP methods are equivalent, and thus a GA can do no better. Further, in the smaller sample cases with high sparsity we observe most methods performing roughly the same. In low to medium sparsity cases, there is more variability between the methods and thus, we should be able to optimize to find a penalty that in fact does perform better than the common shrinkage methods.  We acknowledge that only using standard basis expansions may not allow us to approximate some penalties well. 

\subsection{Simulations}
\label{results2}

Here we present two simulations, one for $ p >n$ and one for $n>p$
to show how implementing the GA performs relative to other shrinkage methods in a  predictive setting.  We simulate IID observations from 
$$ 
Y = X\beta + \epsilon
$$
where $X \sim MVN_p(0,I_{100})$, $\epsilon= (\epsilon_1, \ldots, \epsilon_n )^T$
with $\epsilon_i \sim N(0,1)$, and $\beta = (\beta_1,  \beta_2)^T$
and we set $p=100$, as before.   We assume 50\%  sparsity, so the dimension of both
$\beta_1 $ and $\beta_2$ are 50.   We take $\beta_1 \sim MVN_{50}(4, I_{50})$
and set $\beta_2 = 0$.  We consider $n=40$ and $n=150$ and
we split the data as described in Sec. \ref{usingGAs}.  

The GA will find an optimal penalty as defined by an optimal vector
$\alpha_{opt} = (\alpha_{1, opt}, \ldots, \alpha_{6, opt})^T$.  The entries
$\alpha_{j, opt} = \alpha_{j, opt}({\cal{D}}_{train, \alpha})$ so we are treating
the penalty as a hyperparameter in the prior that would be mathematically equivalent
to it.   The difference from actually estimating a hyperparameter comes from the
fact we are only using ${\cal{D}}_{train, \alpha} \subsetneqq {\cal{D}}_{train}$.   Given
the penalty, we have a potentially new shrinkage method, dependent on a 
proper subset of ${\cal{D}}_{train}$.
So, we can form a posterior using the prior determined from the penalty 
given the rest of the data.   This posterior can be used
to generate predictions for ${\cal{D}}_{test}$ that can be compared with the predictions
from the other shrinkage methods used in Sec. \ref{results1}.

\subsubsection{GA example $n=40$}
\label{40}

Here we split the data so that 
$\#({\cal{D}}_{train}) = \#({\cal{D}}_{train, \lambda} \cup{\cal{D}}_{train, \beta} \cup {\cal{D}}_{train, \alpha}) = 36$  
with corresponding sample sizes $(2,30,4)$ and $\#({\cal{D}}_{test}) = 4$. 
All other methods we compare use all of the training data to find estimates of 
$\beta$ and $\lambda$.   Note for comparisons with other methods, those that
use {\sf glmnet} use all 36 observations in the training data to form the predictor 
and  for the methods that we implemented with LLA we combined  
${\cal{D}}_{train, \lambda} $ and ${\cal{D}}_{train, \alpha} $ to estimate $\lambda$. 
Thus, we ensured that each method used all the training data, providing a fair comparison.  

Interestingly, but perhaps not surprisingly, we find 
$\hat{\alpha} = (0, 1,0,0,0,0)$ which corresponds exactly to RR. 
This is consistent with the methods that performed the best in 
the analogous cases in Subsec. \ref{n=40}. 
This suggests that when we have few data points relative to 
explanatory variables, we do not have enough information to obtain an informative 
prior (in terms of its location and variance) so we default to the prior 
that makes us retain all the explanatory variables.

The predictive errors for $\#({\cal{D}}_{test}) $ are given in Table \ref{Tab_GA2}. 
Note that GA, RR,EN, and AEN are the same here, because we set $w_j =1, \forall j$. 
We comment that because GA's require a lot of computing 
time, we have not averaged over many data sets to get the prediction errors
reported in this table.  However, we believe we have used a large enough population
and large enough number of generations that our results are accurate.

\begin{table}
\centering
\resizebox{10cm}{!}{
\begin{tabular}[t]{llllllllllll}
 GA &   L &RR&EN&AEN&AL&SCAD2 &MCP\\
\hline
  22.75 & 45.37    & 22.75  &22.75 &22.75 & 46.62 & 36.18 &36.18    \\
\end{tabular}
}
\caption{MSPE for our new GA method and seven other methods.}
\label{Tab_GA2}
\end{table}

This example illustrates that by optimizing over the choice of 
penalties, we are not guaranteed to find a penalty that is different 
from an established method (although we argue this is the typical case). 
The guarantee is only that we will find an optimal penalty for prediction
and it is no surprise if there are settings where a well known technique
is optimal.   The novelty in our GA approach is that it can be used in any
linear regression problem and, if properly implemented, will always
give the best predictions.

\subsubsection{GA example $n=150$}
\label{150}

When splitting the data in this situation, we must keep more than 100 observations 
in ${\cal{D}}_{train, \beta}$ to ensure $n>p$.  Accordingly, we set
$$
\#({\cal{D}}_{train}) = \#({\cal{D}}_{train, \lambda} \cup{\cal{D}}_{train, \beta} \cup {\cal{D}}_{train, \alpha}) = 135
$$  
with corresponding sample sizes of  $(9,113,13)$ respectively,  and  $\#({\cal{D}}_{test}) = 15$. 
Parallel to our methodology in Subsec. \ref{40},  the methods implemented using
{\sf glmnet} and {\sf rqPen} used all 135 observations in the training data to form 
the predictor.  Also,  as before, for the methods that we implemented with LLA we combined  
${\cal{D}}_{train, \lambda} $ and ${\cal{D}}_{train, \alpha} $ to estimate $\lambda$. 
Again, this ensured all methods were being treated fairly.

After running the GA we found $\hat{\alpha} = (2, 0,0,20,0,0)$.   The associated prediction 
error on ${\cal{D}}_{test}$ for each method is given in Table \ref{Tab_GA1}. 
We observe the penalty selected through GA achieves the best predictive error among 
all methods considered.  As in Subsec. \ref{40},  we comment that because 
GA's require a lot of computing time, we have not averaged over many data sets to 
get the prediction errors reported in this table.  However, we believe we have 
used a large enough population and large enough number of generations that our 
results are accurate.

Since we found a new (and better) penalty, we have graphed it in Fig.  \ref{Fig_GApen}.
Recall that half the parameter values are zero, half are non-zero, and the penalty
term has $f_j$'s, i.e., the terms depend on the index of the parameter.  This means that
we allow different penalties on different parameters.
The left hand panel shows a plot of the optimal penalty for
one of the $\beta_j$'s that is known to be zero.  It is compared with the common penalties 
RR, LASSO, and SCAD.    The right hand panel shows a plot of the optimal penalty for
one of the $\beta_j$'s that is known to be non-zero, again compared with RR, LASSO, and SCAD.
It is obvious that the GA method described in Subsec.  \ref{theory2} gives two
sorts of $f_j$'s.   The training data forces the $f_j$'s corresponding to $\beta_j = 0$
to concentrate at zero and forces the $f_j$'s that correspond to nonzero $\beta_j$'s to
concentrate away from zero.  This explains the improvement in prediction error
seen in Table \ref{Tab_GA1}.

\begin{table}
\centering
\begin{tabular}{llllllllllll}
 GA & LM&  L &RR&EN&AEN&AL&ASCAD1&SCAD1&SCAD2&LL    \\
\hline
 1.12&1.52 &  1.78 & 3.66 & 1.78 & 1.50& 1.50& 1.36& 1.53& 1.18& 1.67     \\
\end{tabular}
\caption{MSPE for our new GA method and for 10 other methods.}
\label{Tab_GA1}
\end{table}

To end this section, note that our simulations only show proof of concept;
the priors we found here may not be genuinely optimal for prediction. 
That is so because we have not run the GA for many generations with a large population size
so we cannot assume the GA has converged.
In fact, in both cases here ($n=40,150$) we only ran a single generation of the GA
and we only used a population size of $M=150$.    However, because of the elitism operation,
running the GA longer can never result in a worse predictor and our results
show that it can be relatively easy to find a penalty that is better for prediction 
than established penalties --
even if they are not optimal within the class of all penalties with the OP.

\begin{figure}[h!]
\begin{center}
\begin{tabular}{ccc}
\includegraphics[width=.45\columnwidth]{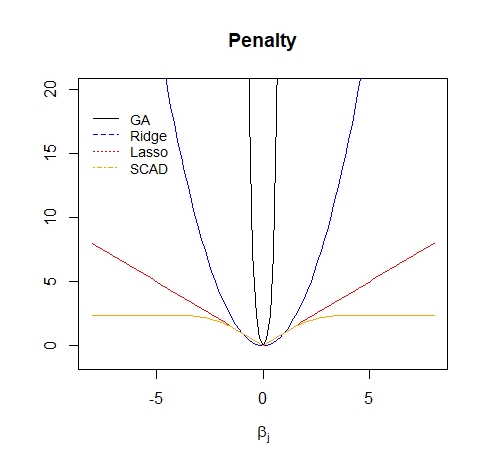}
\includegraphics[width=.45\columnwidth]{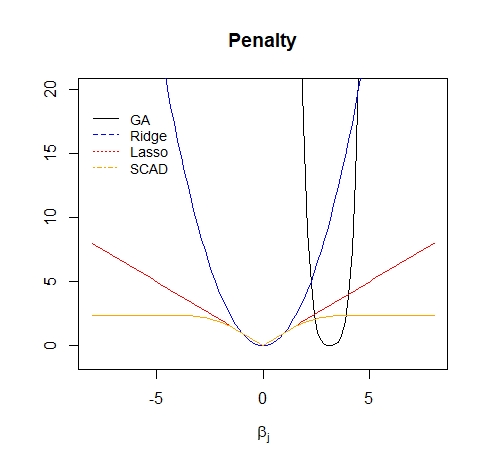}
\end{tabular}
\end{center}
\caption{GA Optimal Penalty vs Standard Penalties for Zero and Nonzero
$\beta_j$'s.}
\label{Fig_GApen}
\end{figure}

\section{Discussion}
\label{conclusions}

This paper has assumed a predictive stability perspective
and within that context shown several results that may be a bit unexpected.
First, the OP is not rare; it is actually rather common.  Its proof requires little
more than what most would regard as regularity conditions.
Second,  for small $n$ and large $p$ shrinkage methods did not perform very well
even if they have the OP and the true model has reasonable sparsity.
Third, on the other hand, if optimal or near optimal penalties are used they give
shrinkage methods that work noticeably better than the established ones.   
Our findings indicate that methods having the OP do not perform
particularly well for $n<p$ and for $n>p$ the OP is no guarantee
that they perform better than methods not having the OP.
Fourth,  our results also indicate that with increasing sparsity
the performance of shrinkage methods improves.   This intuition
needs to be developed further because the obvious limit of perfect
sparsity gives the trivial model.

So, even though the OP is important, it is not at all clear how 
important it is or when it is important in cases where sample sizes are finite. 
We still think it's better to have the OP than not if only because it gives
consistency, asymptotic normality and efficiency.   This is especially
the case with high sparsity and large $n$ relative to $p$, but in these 
cases other methods often perform comparably. 
When $n$ is small compared to $p$, the OP is not a useful property, 
and thus the adaptive methods that have the OP do not perform well. 
A possible explanation for this  is a poor bias variance trade off when $p > n$.
It does not seem to be a good idea to use methods that require estimating $w_j$ for each
$\beta_j$:  For $p>n$ we have not seen any example where the adaptive penalty
gives better results that the nonadaptive version.

Since the OP really requires $n \rightarrow \infty$ whereas $n$ often must be taken as truly
finite, we introduce the notion of instability of predictions as a criterion for selecting a 
penalty or prior.   Comparing instability curves is a finite sample check for 
good predictive performance.  
Using this approach we can easily rule out unstable predictors.
For instance, with high sparsity using a linear model by itself is often unstable.
In general, quantifying the variability of variable selection when $p$ is large is 
difficult, so defining instability in terms of the prediction errors seems reasonable.

Our simulation studies show that as a generality, shrinkage methods tend to perform better in 
terms of variable selection, and thus prediction, as sparsity increases as well as when $n$
increases.  In fact, our simulations showed that regardless of $n$, as the sparsity 
increased, the methods seemed to perform roughly equally well. For instance, recall the 
$n=75$ simulations in Subsec. \ref{n=75} . At $90\%$ sparsity, we observed what appeared to be asymptotic convergence of the method with the OP. Of course this situation is not 
asymptotic as $n < p$, but an increase in sparsity is associated with an increase in efficiency
of the methods. 

Since there are infinitely many such choices for penalties that have the OP, we take 
the subjectivity out of penalty (or prior) selection by using a GA to find an 
optimal penalty/prior for prediction.  When $n>p$ we use the GA approach 
to find a predictively optimal penalty that has the OP. When $n<<p$,
we do not search for methods with the OP because we do not benefit from 
the known asymptotic results.  Thus, we search over the class of penalties that are 
non-adaptive and do not require estimation of many hyperparameters.  
In principle, as long as we let the GA run long enough to converge, this 
approach can never do worse that simply choosing a standard shrinkage method. 
In fact, we have examples where the GA approach does better 
than others; when the GA approach selects the best among standard methods, 
we can infer the standard method was the right choice.

Another way to look at the procedure in Sec. \ref{GAstuff} is that when we find a penalty/prior
based on the data we are producing an approximation
to a predictively optimal posterior given the training data that can then be used with the
log-likelihood.
Thus, the predictive improvement comes from the efficiency of the way the posterior uses
the data with an optimal prior.  That is, we are using the data to form a pseudo-posterior based on 
a de facto optimal prior only determined by the (pseudo-)posterior it forms.

We close with another heuristic that seems to be borne out by our results.
Namely, we associate corners and other points of non-differentiability in the penalty
with setting parameter values equal to zero in finite samples.
Recall that minimizing 
$$ 
\sum^n_{i=1}L(y_i - x'_i \beta) + n \lambda \sum^p_{j=1} w_j f_j(\beta_j) 
$$ 
is equivalent to minimizing $Q =  \sum^n_{i=1}L(y_i - x'_i \beta)$ subject to the constraint 
$\sum^p_{j=1} w_j f_j(\beta_j)   \leq R \in \mathbb{R}$ where $R$ typically decreases
as $\lambda$ increases.
 Denote the constraint region by
$$ 
D =  \left\{ \sum^p_{j=1} w_j f_j(\beta_j)  \leq R \right\}.
$$ 
Since $D$ is closed and compact,
the Krein-Milman theorem (see \cite{Royden:FitzPatrick:2010}) 
gives that $D$ is the closed convex hull of its
extreme points,  i.e., $D = CCH(D_{ext})$.
For reasonable choices of $f_j$, $D$ is defined by the intersection of 
regions of the form 
$$
U_k (\beta_1, \ldots, \beta_p) \leq \alpha_k , k=1, \ldots , p,
$$
where the $U_k$'s are defined from the $f_j$'s and $w_j$'s.
When our goal is optimizing $Q$ over $D$,  it is often the case that
the optima occur at extreme points of $D$.  When
the extreme points of $D$ are on the coordinate axes we will find
at least some of $\beta_j$'s are zero.  Indeed, if $Q$ is convex and 
continuous on an open set containing $D$, then $Q$ often attains its 
minimum over $D$ on a `face 'of $D$
and the exact point where the optima occur may lie at the intersections 
of some or all of the $U_k =\alpha_k$;
this defines a subset of the extreme points of $D$.
This is well-established for the case of linear optimization with linear constraints.
Indeed, if $Q$ is minimized for at least one extreme point of $D$ 
that lies on a coordinate axis then at least some $\beta_j$'s will be set to zero.
This means that any locally convex penalty with a `corner' on a coordinate axis
will perform nontrivial variable selection if 
$R$ is small enough.  We conjecture a converse to this statement will hold, too.


\newpage


\section{ Supplement : Proofs from Sec. \ref{theory1}}
\subsection{Penalized Log-Likelihood Case, Sec. \ref{genloglike}}
\label{penlltheory}

\begin{thm}
\label{consistency}
 Suppose Conditions 1--6 are satisfied and suppose $\sqrt{n}a_n \rightarrow 0$ where $a_n$ satisfies  $a_n = \frac{1}{h(n)\sqrt{n}}$ in which $\frac{1}{h(n)\sqrt{n}} \rightarrow 0$ and $h(n) \rightarrow \infty$. Then,  there exists a local minimizer $\hat{\beta}$ of $Q(\beta)$ such that $|| \hat{\beta} - \beta_0|| = O_p(n^{-\frac{1}{2}} + a_n)$.
\end{thm}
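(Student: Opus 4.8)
The plan is to follow the perturbation device of \cite{Fan:Li:2001}: instead of analysing $\hat\beta$ directly, I would exhibit a local minimizer of $Q$ inside a shrinking ball about the true value $\beta_0$. Write $\alpha_n = n^{-1/2}+a_n$ and, for a scaled perturbation $u\in\mathbb{R}^p$, set
\[
D_n(u) = Q(\beta_0+\alpha_n u) - Q(\beta_0).
\]
The target is to show that for every $\epsilon>0$ there is a constant $C>0$, independent of $n$, with $P\big(\inf_{\|u\|=C} D_n(u)>0\big)\ge 1-\epsilon$. Since $\Omega$ is open, $\beta_0+\alpha_n u\in\Omega$ for $n$ large, and because $Q$ is continuous, positivity of $D_n$ on the sphere $\|u\|=C$ traps a local minimizer $\hat\beta$ in the interior of $\{\beta_0+\alpha_n u:\|u\|\le C\}$. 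Such a $\hat\beta$ then obeys $\|\hat\beta-\beta_0\|=O_p(\alpha_n)=O_p(n^{-1/2}+a_n)$, which is exactly the assertion.

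Next I would split $D_n(u)$ into a likelihood increment and a penalty increment and control each. For the likelihood part, Condition 2 supplies a convergent second-order Taylor expansion,
\[
L(\beta_0+\alpha_n u)-L(\beta_0) = \alpha_n\,\nabla L(\beta_0)^{\top}u + \tfrac12\,\alpha_n^2\,u^{\top}\big[\nabla^2 L(\beta_0)\big]u\,(1+o_p(1)).
\]
The score $\nabla L(\beta_0)$ is a sum of $n$ mean-zero contributions, hence $O_p(\sqrt n)$, so the linear term is $O_p(\alpha_n\sqrt n)\,\|u\| = O_p(1)\,C$ once we use the rate identity $\alpha_n\sqrt n = 1+\sqrt n\,a_n\to 1$. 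By Conditions 1 and 3, $n^{-1}\nabla^2 L(\beta_0)\to I(\beta_0\mid x^\infty)$ with smallest eigenvalue bounded below by $b>0$ uniformly, so the curvature term is at least $\tfrac12\,(\alpha_n^2 n)\,b\,\|u\|^2(1+o_p(1))$, where $\alpha_n^2 n=(\alpha_n\sqrt n)^2\to1$. Thus the likelihood increment is bounded below by a term of order $C^2$ minus a term of order $C$.

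For the penalty increment $n\sum_{j}\lambda_j[f_j(\beta_{0j}+\alpha_n u_j)-f_j(\beta_{0j})]$ I would treat the two coordinate blocks separately. On the nonzero block $j\le p_0$, the first-order expansion of Condition 6 together with the uniform derivative bound $|f_j'|\le M_n$ of Condition 4 gives
\[
\Big| n\!\sum_{j\le p_0}\!\lambda_j\big[f_j(\beta_{0j}+\alpha_n u_j)-f_j(\beta_{0j})\big]\Big| \le n\,\alpha_n\,a_n\,M_n\sqrt{p_0}\,\|u\|\,(1+o(1)),
\]
and the leading factor satisfies $n\alpha_n a_n = \sqrt n\,a_n + n a_n^2 = h(n)^{-1}+h(n)^{-2}\to 0$, so this block is $o(1)\,C$. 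On the zero block $j>p_0$ we have $\beta_{0j}=0$, and since each $f_j$ attains its minimum at $0$ the increment $f_j(\alpha_n u_j)-f_j(0)\ge 0$ is nonnegative and may be discarded in a lower bound. Combining the three pieces, $D_n(u)\ge \tfrac12\,b\,C^2(1+o_p(1)) - O_p(1)\,C - o(1)\,C$, and taking $C$ large makes the quadratic term dominate, so $D_n(u)>0$ uniformly on $\|u\|=C$ with probability at least $1-\epsilon$.

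The hard part will be the uniform-in-$j$ handling of the penalty when $p=p_n$ is allowed to grow: one must verify both that $n\alpha_n a_n\to0$ and that the Taylor remainders of Condition 6, after multiplication by $n\lambda_j$ and summation over the nonzero coordinates, stay negligible. This is precisely where the assumed rate $a_n = 1/(h(n)\sqrt n)$ with $h(n)\to\infty$ does the work, ensuring the whole penalty increment is $o(1)$ relative to the $\Theta(1)$ quadratic curvature. A secondary and more routine point is to make the $o_p(1)$ factor in the Hessian expansion uniform over the compact sphere $\|u\|=C$, so that the lower bound $\tfrac12 b\,C^2$ holds simultaneously in all directions; this follows from the uniform positive-definiteness in Condition 1 with the averaging in Condition 3. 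Once these two uniformities are secured, the quadratic-dominates-linear comparison is immediate and the existence of an $O_p(\alpha_n)$ local minimizer follows.
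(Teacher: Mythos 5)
Your proposal is correct and follows essentially the same route as the paper: the Fan--Li device of bounding $D_n(u)=Q(\beta_0+\alpha_n u)-Q(\beta_0)$ below on the sphere $\|u\|=C$, the same likelihood/penalty decomposition with the quadratic curvature term of order $\alpha_n^2 n\, b\,C^2$ dominating the $O_p(1)C$ score term and the $o(1)C$ penalty term via $\sqrt n\,a_n\to 0$, followed by the standard trapping argument giving $\|\hat\beta-\beta_0\|=O_p(\alpha_n)$. The only (cosmetic) difference is that you discard the zero-block penalty increment as nonnegative, whereas the paper shows it vanishes using Condition 5 ($f_j'(\tilde\beta_j)\to 0$ as $\tilde\beta_j\to 0$); both are fine, though the paper's version avoids the unstated assumption that each $f_j$ is minimized at $0$.
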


\begin{proof}
\textbf{Step 1.}
We want to show for any $\epsilon > 0$ there exists a large constant $C$ such that 
\begin{equation}
\label{showthis} 
P\left\{ \inf_{||u|| = C} Q(\beta_0 +\alpha_n u ) > Q(\beta_0)\right \}\geq 1-\epsilon ,
\end{equation}
where $\alpha_n = n^{-\frac{1}{2}} + a_n$.
Denote
\begin{align*}
D_n(u) & = Q(\beta_0 +\alpha_n u ) - Q(\beta_0) \\
 & = L(\beta_0 +\alpha_n u |x^n) + \sum_{j=1}^p \lambda_j f_j( \beta_{j0} + \alpha_n u_j) - L(\beta_0|x^n) - n \sum_{j=1}^p \lambda_j( f_j(\beta_{j0}))\\
 & = L(\beta_0 +\alpha_n u |x^n) - L(\beta_0|x^n) +  n  \sum_{j=1}^p \lambda_j [f_j( \beta_{j0} +\alpha_n u_j) - f_j( \beta_{j0})]  \\
  & = L(\beta_0 +\alpha_n u |x^n) - L(\beta_0|x^n) +  n  \sum_{j=1}^{p} \lambda_j   f_j'(\tilde{\beta_j}) \alpha_n u_j\\
& = L(\beta_0 +\alpha_n u |x^n) - L(\beta_0|x^n) +   n  \sum_{j=1}^{p_0} \lambda_j f_j'(\tilde{\beta_j}) \alpha_n u_j +  n \sum_{j=p_0 +1}^{p} \lambda_j f_j'(\tilde{\beta_j}) \alpha_n u_j 
 \end{align*}
 where $ f_j'(\tilde{\beta_j}) \alpha_n u_j$ is the Taylor expansion of the penalty term by Conditions 4 and 6 and where $\tilde{\beta_j}$ is on the line joining $\beta_j$ to $\beta_j + \alpha_n u$. More formally, $\tilde{\beta_j} \in <\beta_j , \beta_j +\alpha_n u > $. Thus we have,
\begin{align}
\nonumber D_n(u) & = L(\beta_0 +\alpha_n u |x^n) - L(\beta_0|x^n) +  n \alpha_n \left(  \sum_{j=1}^{p_0} \lambda_j f_j'(\tilde{\beta_j})  u_j  +n  \sum_{j=p_0 +1}^{p} \lambda_j f_j'(\tilde{\beta_j})  u_j \right)\\
 \nonumber  &= L(\beta_0 +\alpha_n u |x^n) - L(\beta_0|x^n) +  n (n^{-\frac{1}{2}} + a_n ) \left( \sum_{j=1}^{p_0} \lambda_j f_j'(\tilde{\beta_j})  u_j  +n \sum_{j=p_0 +1}^{p} \lambda_j f_j'(\tilde{\beta_j}) u_j \right) \\
 \nonumber &= L(\beta_0 +\alpha_n u |x^n) - L(\beta_0|x^n) + ( \sqrt{n}  + n a_n ) \left(  \sum_{j=1}^{p_0} \lambda_j f_j'(\tilde{\beta_j})  u_j  +n \sum_{j=p_0 +1}^{p} \lambda_j f_j'(\tilde{\beta_j}) u_j \right)\\
 \nonumber & \geq L(\beta_0 +\alpha_n u |x^n) - L(\beta_0|x^n) + ( \sqrt{n}  + n a_n )a_n \left( \sum_{j=1}^{p_0} f_j'(\tilde{\beta_j})  u_j  +n  \sum_{j=p_0 +1}^{p} f_j'(\tilde{\beta_j})  u_j \right) \\
 \nonumber & = L(\beta_0 +\alpha_n u |x^n) - L(\beta_0|x^n) + ( \sqrt{n} a_n  + n a_n^2 ) \left( \sum_{j=1}^{p_0} f_j'(\tilde{\beta_j})  u_j  +n \sum_{j=p_0 +1}^{p} f_j'(\tilde{\beta_j}) u_j \right) \\
 &= L(\beta_0 +\alpha_n u |x^n) - L(\beta_0|x^n) + \sqrt{n} a_n(1  + \sqrt{n} a_n ) \left(  \sum_{j=1}^{p_0}f_j'(\tilde{\beta_j})  u_j  +n  \sum_{j=p_0 +1}^{p} f_j'(\tilde{\beta_j}) u_j \right)  \label{dnu}
 \end{align}
where $p_0$ is the number of components in $\beta_{10}$.

Now by Taylor expansion of the log likelihood at $\beta_0$, 
\begin{align}
\nonumber L(\beta_0) + \alpha_nu) - L(\beta_0|x^n) & = \alpha_n u L'(\beta_0|x^n) + \frac{n}{2} (\alpha_n u)' \left(I(\beta_0|x^n)\right) (\alpha_n u)  \\
\nonumber &+  \frac{n}{2} (\alpha_n u)'\left(\frac{L''(\tilde{\beta}|x^n}{n} - I(\beta_0|x^n) \pm \frac{1}{n}\sum^n_{i=1} I(\beta_0|x^n) \pm I(\beta_0 )\right) (\alpha_n u) \\ 
&= \alpha_n L'(\beta_0|x^n) u  - \frac{n}{2} u' I_n(\beta_0) u  \alpha^2_n \{1 +o_p(1) \} \label{lnu}
\end{align}
where $I_n(\beta_0)=\frac{1}{n}\sum^n_{i=1} I(\beta_0|x^n)$ by Condition 4 and we have 
$$ 
E\left[\sup_{\beta \in B(\beta_0,\eta)} \left|\frac{L''(\tilde{\beta}|x^n)}{n} - I(\beta_0|x^n)\right| \right]\rightarrow 0 \text{ as } \eta \rightarrow 0 
$$ 
which implies that $I_n(\beta_0) \rightarrow I(\beta_0)$ by Condition 3. Using $(\ref{dnu})$ and 
$(\ref{lnu})$ , 
\begin{eqnarray}
\label{A2}
 \nonumber 
D_n(u) &\geq& \alpha_n L'(\beta_0|x^n)' u  + \frac{n}{2} u' I(\beta_0|x^n) u  \alpha^2_n \{1 +o_p(1) \} + \sqrt{n} a_n  ( 1  + \sqrt{n} a_n)  \sum_{j=1}^{p_0}f_j'(\tilde{\beta_j})  u_j \\
&& +  \sqrt{n} a_n  ( 1  + \sqrt{n} a_n)  \sum_{j=p_0+1}^{p}f_j'(\tilde{\beta_j})  u_j .
\end{eqnarray}

We argue the second term on the RHS of (\ref{A2}) is dominant. To see this consider the first term on the RHS of (\ref{A2}) using Conditions 1 and 3 We multiply by $\frac{\sqrt{n}}{\sqrt{n}}$ and obtain $\sqrt{n} \alpha_n \frac{L'(\beta_0|x^n)}{\sqrt{n}}u$. Let $\bar{L} = \frac{L'(\beta_0|x^n)}{\sqrt{n}}$, and observe $\sqrt{n} \alpha_n \rightarrow 1$. So,  $C^2 = ||u||^2 = u'u > \bar{L} u$ as long as $0 < \frac{1}{B} < \bar{L} <  B$  with high probability for $C >B$.  Then with high probability  $C^2 \geq B \underline{1}(u) \geq B \bar{L} $ for $||u|| =C$ and $\underline{1} = (1,1, \ldots,1)$.  Thus we have that by choosing sufficiently large $C$, the second term of (\ref{A2}) is larger in absolute value, with high probability than the first term uniformly in $||u|| =C$.

Also, by hypothesis $\sqrt{n} a_n \rightarrow 0 $, so the whole third term in (\ref{A2}) goes to zero and the second term of (\ref{A2}) is also larger than the third term. The fourth term of (\ref{A2}) also goes to zero because for $p_0 \leq j \leq p$, we know that $\beta_j =0$, so $\tilde{\beta_j} \rightarrow 0$ as $ n \rightarrow 0 $.Therefore  $  \sum_{j=p_0+1}^{p}f_j'(\tilde{\beta_j})  u_j \rightarrow 0$ by Condition 5 and since $\sqrt{n} a_n \rightarrow 0$, the fourth term on the RHS of (\ref{A2}) goes to 0. Thus since the third term third and fourths term on the RHS of (\ref{A2}) both go to 0, we have that the second term on the RHS of (\ref{A2}) is larger than all of the other terms. 

Therefore,  by choosing large enough $C$, 
$$ 
P\left\{\inf_{||u||=C} D_n(u) \geq  \frac{n}{2} u' I(\beta_0|x^n) u  \alpha^2_n \{1 +o_p(1) \} \right \} \geq 1-\epsilon .
$$ 
Thus, (\ref{showthis}) is true for sufficiently large $C$.

\textbf{Step 2.}
We now show the conclusion of Step 1 holds for any $C^* > C$.
Let $C^\star > C$. Then,
\begin{align*}
\hat{\beta} & \in \{\beta_0 + \alpha_n u : ||u|| =C^* \} \\
&\iff \hat{\beta}  \in \cal{B}\left(\beta_0 , \alpha_n C^*\right) \\
&\iff \hat{\beta} - \beta_0  \in B\left(0, \alpha_n C^* \right) \\
&\iff \alpha_n^{-1}(\hat{\beta} - \beta_0)  \in B(0, C^*) \\
&\iff ||\alpha_n^{-1}(\hat{\beta} - \beta_0)|| < C^*\\
&\iff||\hat{\beta} - \beta_0||^2 < \alpha_n^2{C^*}^2
\end{align*}

It follows that  $ \sum^{p}_{j=1}(\hat{\beta}_j - \beta_{0j})^2 \leq \alpha_n^2{C^*}^2$  and $\forall j$ , $(\hat{\beta}_j - \beta_{0j})^2 \leq \alpha_n^2{C^*}^2$.   So,  
$$
 -\alpha_n C^* \leq \hat{\beta}_j - \beta_{0j} \leq \alpha_n C^*
$$ and we can absorb $C^*$ into $\alpha_n$ because $C^*$ is a constant. Thus we have $\hat{\beta} - \beta_0  = O_p(\alpha_n )$. Note that 
$$
\alpha_n = \frac{1}{\sqrt{n}} + a_n = \frac{1+\sqrt{n}a_n}{\sqrt{n}}
$$
and as $n \rightarrow \infty$ we have $\sqrt{n} a_n \rightarrow 0$, so $\hat{\beta} - \beta_0  = O_p\left(\frac{1}{\sqrt{n}}\right)$. 
\end{proof}

\noindent
{\bf Remark:}
The argument for Theorem 1 is true for differentiable $f_j(\cdot)$, but not non-differentiable $f_j(\cdot)$. In the case of non-differentiable $f_j(\cdot)$, consider a smooth function that approximates $f_j(\cdot)$ well. Say, $f_j^{\star}(\cdot)$, which differs from $f_j(\cdot)$ by a margin of $\epsilon$ where $\epsilon_N \rightarrow 0$ as $n \rightarrow \infty$. As long as $ \lim_{ \beta \rightarrow \beta^{* -}} f_j^{\star}(\cdot) =  \lim_{ \beta \rightarrow \beta^{* +}} f_j^{\star}(\cdot)$, then the above argument holds. 

\begin{lemma}
\label{sparsity}
Assume conditions 1- 6, and the result from Theorem \ref{consistency} holds.  If $\sqrt{n}b_n \rightarrow \infty$ and $b_n \geq \frac{g(n)}{\sqrt{n}}$ as $n, g(n) \rightarrow \infty$, then with probability tending to 1, for any given $\beta_1$ satisfying $|| \beta_1 - \beta_{10}|| = O_p(n^{-\frac{1}{2}})$ and any constant $C$, 
\begin{equation}\label{min_beta}
 Q \left \{ \left(\begin{matrix} \beta_1 \\0\end{matrix}\right) \right \} =  \min_{||\beta_2 ||\leq Cn^{-\frac{1}{2}}} Q \left \{ \left(\begin{matrix} \beta_1 \\ \beta_2\end{matrix}\right) \right \}.
 \end{equation}
\end{lemma}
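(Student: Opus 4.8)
The plan is to follow the sparsity argument of \cite{Fan:Li:2001} and \cite{Wang:etal:2007}: fix $\beta_1$ with $\|\beta_1 - \beta_{10}\| = O_p(n^{-1/2})$ and show that, viewed as a function of $\beta_2$ on the ball $\{\|\beta_2\| \le C n^{-1/2}\}$, the objective $Q(\beta_1,\beta_2)$ is minimized at $\beta_2 = 0$ with probability tending to $1$. Because the ball shrinks at rate $n^{-1/2}$ in each coordinate, it suffices to control each $\beta_j$ with $j > p_0$ separately: I would show that for every such $j$ the partial derivative $\partial Q/\partial\beta_j$ has the same sign as $\beta_j$ throughout $0 < |\beta_j| \le C n^{-1/2}$, so that $Q$ is decreasing for $\beta_j < 0$ and increasing for $\beta_j > 0$, forcing the coordinatewise --- and hence the joint --- minimum to sit at $\beta_j = 0$, which is exactly \eqref{min_beta}.

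The key computation is the decomposition
\[
\frac{\partial Q(\beta)}{\partial \beta_j} = \frac{\partial L(\beta \mid x^n)}{\partial \beta_j} + n \lambda_j f_j'(\beta_j), \qquad j > p_0 .
\]
First I would bound the likelihood gradient. Taylor expanding $\partial L/\partial\beta_j$ about $\beta_0$ (Condition 2) and using that the score at $\beta_0$ is $O_p(\sqrt n)$ while the Hessian entries are $O_p(n)$ by Conditions 1 and 3, together with the consistency rate $\|\beta - \beta_0\| = O_p(n^{-1/2})$ from Theorem \ref{consistency}, gives $\partial L(\beta\mid x^n)/\partial\beta_j = O_p(\sqrt n)$ uniformly over the shrinking ball. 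Factoring out the penalty scale,
\[
\frac{\partial Q(\beta)}{\partial \beta_j} = n\lambda_j\left[\, f_j'(\beta_j) + \frac{O_p(\sqrt n)}{n\lambda_j}\,\right],
\]
and since $\lambda_j \ge b_n$ with $\sqrt n\, b_n \to \infty$, the remainder obeys $O_p(\sqrt n)/(n\lambda_j) \le O_p(1)/(\sqrt n\, b_n) = o_p(1)$. Hence the sign of $\partial Q/\partial\beta_j$ agrees with that of $f_j'(\beta_j)$, which for the penalties under consideration matches $\mathrm{sign}(\beta_j)$ near the origin; this is precisely the step where the behaviour of $f_j'$ near zero and its uniform-in-$j$ expansion (Conditions 5 and 6) are invoked, yielding the claimed sign and therefore the coordinate minimum at zero.

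I expect the main obstacle to be uniformity rather than the pointwise sign calculation. Two uniformities must be secured at once: uniformity over $\beta$ in the $O_p(n^{-1/2})$ ball, so that the $O_p(\sqrt n)$ bound on the likelihood gradient is genuine (this is where the uniform second-order expansion of Condition 2 and the convergence in Condition 3 do the work), and uniformity over all indices $j > p_0$, so that the $o_p(1)$ remainder is negligible for every zero coordinate simultaneously (the role of the uniform-in-$j$ statements in Conditions 5 and 6 together with $b_n$ being the \emph{minimum} of the relevant $\lambda_j$). A secondary technical point is non-differentiability of $f_j$ at $0$: the sparsity-inducing penalties have a corner there, so the derivative argument should be run on each side $\beta_j > 0$ and $\beta_j < 0$ where $f_j$ is smooth (or via the smoothing device in the Remark following Theorem \ref{consistency}), after which the one-sided signs pin the minimum to $\beta_j = 0$. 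Assembling the coordinatewise conclusions then delivers \eqref{min_beta} on an event of probability tending to $1$.
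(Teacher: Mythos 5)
Your proposal follows essentially the same route as the paper's own proof: the identical decomposition of $\partial Q/\partial\beta_j$ into the likelihood gradient plus $n\lambda_j f_j'(\beta_j)$, the same Taylor expansion showing the likelihood gradient is $O_p(\sqrt n)$ over the $O_p(n^{-1/2})$ ball, and the same conclusion that $\sqrt n\,b_n\to\infty$ forces the sign of the derivative to match that of $f_j'(\beta_j)$, pinning the minimum to $\beta_j=0$. Your added remarks on uniformity in $j$ and on handling the corner at the origin via one-sided derivatives or the smoothing device are consistent with (and slightly more explicit than) what the paper does.
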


\begin{proof}
Consider the objective function $Q(\beta) = L(\beta|x^n) + n \sum^p_{j=1} \lambda_j f_j(\beta_{j})$. Note that 
\begin{equation}\label{A3}
 \frac{\partial Q(\beta)}{\partial \beta_j} = \frac{\partial L(\beta|x^n)}{\partial \beta_j} + n\lambda_j f_j'(\beta_j) .
\end{equation}

Then by Taylor expanding at $\beta_0 = \left(\begin{matrix} \beta_{10} \\0 \end{matrix}\right) $ we have 
\begin{align*}
 \frac{\partial Q(\beta)}{\partial \beta_j} & = \frac{\partial L(\beta_0|x^n)}{\partial \beta_j} + \sum^p_{\ell =1} \frac{\partial^2 L(\beta_0|x^n)}{\partial \beta_j \partial \beta_{\ell}}(\beta_{\ell} - \beta_{\ell 0}) +  \sum^p_{\ell = 1}  \sum^p_{k =1} \frac{\partial^3 L(\beta^* | x^n)}{\partial \beta_j \partial \beta_{\ell} \partial \beta_k}(\beta_{\ell} - \beta_{\ell 0})(\beta_k - \beta_{k0})  \\
 &+ n \lambda_j f_j'(\beta_j)  
\end{align*}
where $\beta^*$ lies between $\hat{\beta}$ and $\beta_0$. Note that 
$$ 
\frac{1}{n} \frac{\partial L(\beta_0|x^n)}{\partial \beta_j} = O_p(n^{-\frac{1}{2}})
$$
and due to \cite{Hoadley:1971} and by the Law of Large Numbers for non-identically distributed random variables we have 
$$ 
\frac{1}{n} \frac{\partial^2 L(\beta_0|x^n)}{\partial \beta_j \partial \beta_{\ell}} =  E\left[ \frac{\partial^2 L(\beta_0|x^n)}{\partial \beta_{j} \partial \beta_{\ell}}  \right]+ o_p(1).
$$

Note that due to Conditions 1--6, and the result in Theorem 1, $\hat{\beta} - \beta_0 = O_p\left(\frac{1}{\sqrt{n}}\right)$.  So, we have

\begin{align*}
 \frac{\partial Q(\beta)}{\partial \beta_j} & =   n \frac{1}{n}\frac{\partial L(\beta_0|x^n)}{\partial \beta_j} + n\frac{1}{n} \sum^p_{\ell =1} \frac{\partial^2 L(\beta_0|x^n)}{\partial \beta_j \partial \beta_{\ell}}(\hat{\beta_{\ell}} - \beta_{\ell 0}) \\
 &+n\frac{1}{n}  \sum^p_{\ell = 1}  \sum^p_{k =1} \frac{\partial^3 L(\beta^* | x^n)}{\partial \beta_j \partial \beta_{\ell} \partial \beta_k}(\hat{\beta_{\ell}} - \beta_{\ell 0})(\hat{\beta_k} - \beta_{k0}) + n \lambda_j f_j'(\beta_j)  \\
 &=n  O_p\left(\frac{1}{\sqrt{n}}\right) + n  \sum^p_{\ell =1}\left( E\left[ \frac{\partial^2 L(\beta_0|x^n)}{\partial \beta_j \partial \beta_{\ell}}\right] +o_p(1)\right)O_p\left(\frac{1}{\sqrt{n}}\right)\\
&  + n \sum^p_{\ell = 1}  \sum^p_{k =1}E\left[ \sup_{\beta \in N} \left| \frac{ \partial^3 L(\beta^* | x^n)}{\partial \beta_j \partial \beta_{\ell} \partial \beta_k} \right| +o_p(1) \right]O_p\left(\frac{1}{\sqrt{n}}\right)O_p\left(\frac{1}{\sqrt{n}}\right) + n \lambda_j f_j'(\beta_j)\\
& = n \left[O_p\left(\frac{1}{\sqrt{n}}\right) + O_p\left(\frac{1}{\sqrt{n}}\right) + O_p\left(\frac{1}{\sqrt{n}}\right) \right] +  n \lambda_j f_j'(\beta_j)\\
& = n\left[ O_p\left(\frac{1}{\sqrt{n}}\right)+ \lambda_j f_j'(\beta_j) \right]\\
&= \sqrt{n}\left[ O_p(1) + \sqrt{n}\lambda_j f_j'(\beta_j)\right]\\
& \geq \sqrt{n}\left[ O_p(1) + \sqrt{n}b_n f_j'(\beta_j)\right]
\end{align*}
because $p$ is finite and using Slutsky's Theorem.   Since $\sqrt{n} b_n \rightarrow \infty$, 
the sign of $$
\frac{\partial Q(\beta)}{\partial \beta_j} = \sqrt{n}\left[ O_p(1) + \sqrt{n}\lambda_j f_j'(\beta_j)\right]
$$
is determined by the sign of $f_j'(\beta_j)$. 
Thus,  since $f_j'(0)=0$, we have 

$$
\begin{cases} 	
     f_j'(\beta_j) < 0 ,& \beta_j < 0,\\
     f_j'(\beta_j)=0 ,& \beta_j = 0 \\
     f_j'(\beta_j) > 0, & \beta_j > 0
   \end{cases}
$$
which implies that we have a minimum at $\beta_j=0$ and (\ref{min_beta}) is true.  
\end{proof}

The requirement that $b_n$ not go to zero too fast, if it goes to zero at all, is consistent with the fact that as the $\lambda_j$'s increase, the $\beta_j$'s must decrease to obtain an optimal solution.

The proof of the OP for penalized log-likelihood Theorem \ref{penllOP} is as follows.

\begin{proof}
First, we observe that under Conditions 1--6 and our assumptions on $b_n$,
Lemma \ref{sparsity} holds. So, $\hat{\beta}=(\hat{\beta_{1}}',\hat{\beta_2}')'$ satisfies $P(\hat{\beta_2}=0) \rightarrow 1$.

Now to prove the asymptotic normality of $\hat{\beta}_1$, consider $\frac{\partial Q(\beta)}{\partial \beta_j} $. It can be shown that there exists a $\hat{\beta_1}$ in Theorem 1 that is a 
$\sqrt{n}$-consistent minimizer of $Q \left\{\left(\begin{matrix} \beta_{1} \\0 \end{matrix} \right) \right\}$  which is a function of $\beta_1$ and satisfies the likelihood equation
 $$
\frac{\partial Q(\beta)}{\partial \beta_j}  \bigg|_{\hat{\beta}=\left(\begin{matrix} \hat{\beta_{1}} \\0 \end{matrix} \right)} = 0,
$$
for $j = 1, \ldots, p_0$. 
 We see that Taylor expanding at $\beta_0$ gives
\begin{align*}
0= \frac{\partial Q(\beta)}{\partial \beta_j}  \bigg|_{\hat{\beta}=\left(\begin{matrix} \hat{\beta_{1}} \\0 \end{matrix} \right)} &= \frac{\partial L(\hat{\beta}|x^n)}{\partial \beta_j} + n\frac{\partial}{\partial \beta_j} \sum^{p_0}_{j=1} \lambda_j f_j(\hat{\beta_j})\\
&=  \frac{\partial L({\beta_0} | x^n)}{\partial \beta_j}  + \sum^{p_0}_{j=1} \frac{\partial^2 {\L(\tilde{\beta} | x^n)}}{\partial \beta_{\ell}\partial \beta_j} (\hat{\beta_{\ell}} - \beta_{\ell 0}) + n\lambda_j f_j'(\hat{\beta_j}) 
\end{align*}
where $\tilde{\beta} \in <\beta_0 , \hat{\beta_1} >$. Furthermore, 
\begin{align}\label{asymp_normal}
\nonumber - \frac{\partial L(\beta_0 | x^n)}{\partial \beta_j} &= \sum^{p_0}_{j=1} \frac{\partial^2 L(\tilde{\beta} | x^n)}{\partial \beta_{\ell}\partial \beta_j} (\hat{\beta_{\ell}} - \beta_{\ell 0}) + n\lambda_j f_j'(\hat{\beta_j}) \\
\nonumber- \frac{\partial L(\beta_0 | x^n)}{\partial \beta_j}  &= n  \sum^{p_0}_{j=1} \frac{1}{n}\frac{\partial^2 L(\tilde{\beta} | x^n)}{\partial \beta_{\ell}\partial \beta_j} (\hat{\beta_{\ell}} - \beta_{\ell 0}) + n\lambda_j f_j'(\hat{\beta_j})\\
\nonumber- \frac{\partial L(\beta_0 | x^n)}{\partial \beta_j} & = - n \left(\sum^{p_0}_{j=1} I_{\ell j}(\beta_{10} | x^n) \right)(\hat{\beta_{\ell}} - \beta_{\ell 0}) + n\lambda_j f_j'(\hat{\beta_j})\\
\nonumber- \frac{\partial L(\beta_0 | x^n)}{\partial \beta_j}  & =- \sqrt{n} \left[ \sqrt{n}I_{1}(\beta_{10}| x^n) (\hat{\beta_{1}} - \beta_{10}) + \sqrt{n}\lambda_j f_j'(\hat{\beta_j})\right]\\
 \frac{1}{\sqrt{n}} \frac{\partial L(\beta_0 | x^n)}{\partial \beta_j}  & =  \sqrt{n}I_{1}(\beta_{10}| x^n) (\hat{\beta_{1}} - \beta_{10}) - \sqrt{n}\lambda_j f_j'(\hat{\beta_j})
\end{align}
where the third to last line is due to Condition 2.  Here $\hat{I}_1$ is like any $\hat{I}_n$ but restricted the upper block of $\hat{I}$.

To finish the proof, we examine the terms in (\ref{asymp_normal}). By supposition, $\sqrt{n}a_n \rightarrow 0$ and $f_j'(\hat{\beta_j})$ is uniformly bounded by Condition 4, so $\sqrt{n}\lambda_j f_j'(\hat{\beta_j}) \rightarrow 0$ because $\lambda_j < a_n$ for $p_0 < j \leq p$. Then, by the Central Limit Theorem, Condition 3 and the fact that $E\left(\frac{\partial L(\beta_0 | x^n)}{\partial \beta_j}\right)=0$, we have  
$$
\sqrt{n}\left(\frac{1}{n}\frac{\partial L(\beta_0 | x^n)}{\partial \beta_j} - E\left(\frac{\partial L(\beta_0 | x^n)}{\partial \beta_j}\right)\right)\overset {D} {\longrightarrow} \mathcal{N}(0 , I_1(\beta_{10})). 
$$
Thus, 
$$ \frac{1}{\sqrt{n}}\frac{\partial L(\beta_0 | x^n)}{\partial \beta_j} \overset {D} {\longrightarrow}  \mathcal{N}(0 , I_1(\beta_{10})).
$$
By equality in (\ref{asymp_normal}), it follows that 
$$
\sqrt{n} I_{1}(\beta_{10}| x^n) (\hat{\beta_{1}} - \beta_{10})  \overset {D} {\longrightarrow}   \mathcal{N}(0 , I_1(\beta_{10})). 
$$ 
\end{proof}

\subsection{Penalized Empirical Risk Case, Sec.  \ref{penemprisk}}
\label{penrisktheory}

\begin{thm}
\label{consistency2}
 Suppose Conditions 7-10 and Lemma \ref{exp_dprime} are satisfied.  If $\sqrt{n}a_n \rightarrow 0$ where $a_n$ satisfies s $a_n = \frac{1}{h(n)\sqrt{n}}$ where $\frac{1}{h(n)\sqrt{n}} \rightarrow 0$ and $h(n) \rightarrow \infty$ , then there exists a local minimizer $\hat{\beta}$ of $Q(\beta)$ such that $|| \hat{\beta} - \beta_0|| = O_p(n^{-\frac{1}{2}} + a_n)$.
\end{thm}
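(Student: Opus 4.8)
The plan is to follow the two-step structure of the proof of Theorem \ref{consistency}, substituting the empirical risk $R(\beta\mid x^n)$ for the log-likelihood $L(\beta\mid x^n)$ and the matrix $I^*(\beta_0)$ for the Fisher information $I(\beta_0)$. I would set $\alpha_n = n^{-1/2} + a_n$ and define $D_n(u) = Q(\beta_0 + \alpha_n u) - Q(\beta_0)$. The goal of Step 1 is to show that for every $\epsilon > 0$ there is a constant $C$ large enough that
\begin{equation*}
P\left\{ \inf_{\|u\| = C} D_n(u) > 0 \right\} \geq 1 - \epsilon,
\end{equation*}
which forces a local minimizer to lie within $\alpha_n C$ of $\beta_0$; Step 2 then upgrades this to any $C^\ast > C$ exactly as in the likelihood case by absorbing the constant into $\alpha_n$ and noting $\sqrt{n}a_n \to 0$.

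For Step 1 I would first expand the penalty contribution to $D_n(u)$. Writing $n\sum_j \lambda_j[f_j(\beta_{j0}+\alpha_n u_j) - f_j(\beta_{j0})]$ and applying the uniform first-order Taylor expansion of Condition 10 (with the derivative bound of Condition 8) produces a term of order $\sqrt{n}\,a_n(1 + \sqrt{n}\,a_n)\sum_j f_j'(\tilde\beta_j)u_j$, split into the $j \le p_0$ and $j > p_0$ pieces just as in \eqref{dnu}. Next I would Taylor expand $R(\beta_0 + \alpha_n u\mid x^n)$ to second order using Condition 7, yielding the quadratic form $\tfrac{n}{2}\alpha_n^2\, u' I^*(\beta_0) u\,\{1+o_p(1)\}$ together with a linear term $\alpha_n R'(\beta_0\mid x^n)'u$. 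Positive definiteness of $I^*$ (Condition 7) is what makes the quadratic term the controlling, positive one.

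The crucial new ingredient, and the step I expect to be the main obstacle, is controlling the linear term $\alpha_n R'(\beta_0\mid x^n)'u$, since unlike the likelihood case there is no automatic score identity. Here I would invoke Lemma \ref{exp_dprime}: because $R'(\beta_0\mid x^n) = -\tfrac1n\sum_i d'(\epsilon_i)x_i$ with $d$ even and the errors symmetric about zero, the lemma gives $E[d'(\epsilon_i)] = 0$ and hence $E[R'(\beta_0\mid x^n)] = 0$. A central limit argument for the non-identically distributed, deterministic-design summands then yields $\bar R := n^{-1/2}R'(\beta_0\mid x^n) = O_p(1)$. Writing the linear term as $\sqrt{n}\,\alpha_n\,\bar R'u$ and using $\sqrt{n}\,\alpha_n \to 1$, it is of order $\|u\| = C$, whereas the quadratic form $\tfrac12(\sqrt{n}\,\alpha_n)^2\, u' I^*(\beta_0)u$ is of order $C^2$; choosing $C$ large therefore makes the quadratic term dominate uniformly over $\|u\| = C$ with high probability.

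Finally, the penalty terms are negligible: the $j > p_0$ sum vanishes because $\tilde\beta_j \to 0$ forces $f_j'(\tilde\beta_j) \to 0$ by Condition 9, while the entire penalty contribution carries the factor $\sqrt{n}\,a_n \to 0$. Hence the sign of $D_n(u)$ is governed by the positive quadratic term, which establishes the displayed inequality, and Step 2 follows routinely. The only genuine subtlety is verifying that $I^*(\beta_0)$ legitimately plays the role of the information matrix in the expansion — exactly the content of Condition 7 and the remark that $I^*$ is a risk-based analog of the Fisher information — and confirming that the mean-zero property of $R'(\beta_0\mid x^n)$ furnished by Lemma \ref{exp_dprime} is all that is needed to carry the argument of Theorem \ref{consistency} through essentially unchanged.
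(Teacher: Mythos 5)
Your proposal matches the paper's proof essentially step for step: the same two-step structure, the same decomposition of $D_n(u)$ with the penalty handled via Conditions 8--10, the same second-order expansion of $R(\cdot\mid x^n)$ via Condition 7 with the quadratic form in $I^*(\beta_0)$ dominating, and the same use of Lemma \ref{exp_dprime} plus a CLT for the non-identically distributed summands to show the linear term is $O_p$ of the right order. You also correctly identify the only genuinely new ingredient relative to Theorem \ref{consistency} --- replacing the score identity by the mean-zero property of $d'$ --- which is exactly how the paper's argument proceeds.
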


\begin{proof}
\textbf{Step 1.}
We want to show for any $\epsilon > 0$ there exists a large constant $C$ such that 
\begin{equation}\label{showthis2} P\left\{ \inf_{||u|| = C} Q(\beta_0 +\alpha_n u ) > Q(\beta_0)\right \}\geq 1-\epsilon .\end{equation}

Denote
\begin{align*}
D_n(u) & = Q(\beta_0 +\alpha_n u ) - Q(\beta_0) \\
 & = R(\beta_0 +\alpha_n u | x^n) + n \sum_{j=1}^p \lambda_j f( \beta_{j0} + \alpha_n u_j) - R(\beta_0| x^n) - n  \sum_{j=1}^p \lambda_j( f(\beta_{j0}))\\
 & = R(\beta_0 +\alpha_n u | x^n) - R(\beta_0| x^n) +  n  \sum_{j=1}^p \lambda_j [f_j( \beta_{j0} +\alpha_n u_j) - f_j( \beta_{j0})]  \\
  & = R(\beta_0 +\alpha_n u | x^n) - R(\beta_0| x^n) +  n \sum_{j=1}^{p} \lambda_j   f_j'(\tilde{\beta_j}) \alpha_n u_j\\
& = R(\beta_0 +\alpha_n u | x^n) - R(\beta_0| x^n) +   n \sum_{j=1}^{p_0} \lambda_j f_j'(\tilde{\beta_j}) \alpha_n u_j +  n  \sum_{j=p_0 +1}^{p} \lambda_j f_j'(\tilde{\beta_j}) \alpha_n u_j  
 \end{align*}
where $ f_j'(\tilde{\beta_j}) \alpha_n u_j$  is the Taylor expansion of the penalty term by 
Condition 10 and where $\tilde{\beta_j}$ is on the line joining $\beta_j$ to 
$\beta_j + \alpha_n u$. 
Thus we have that
the last two terms on the right are
the Taylor expansion of the penalty term by Condition 6.
Thus,
\begin{align}
\nonumber D_n(u) & = R(\beta_0 +\alpha_n u | x^n) - R(\beta_0| x^n) +  n \alpha_n \left(  \sum_{j=1}^{p_0} \lambda_j f_j'(\tilde{\beta_j})  u_j  +n  \sum_{j=p_0 +1}^{p} \lambda_j f_j'(\tilde{\beta_j})  u_j \right)\\
 \nonumber  &= R(\beta_0 +\alpha_n u | x^n) - R(\beta_0| x^n) +  n (n^{-\frac{1}{2}} + a_n ) \left(  \sum_{j=1}^{p_0} \lambda_j f_j'(\tilde{\beta_j})  u_j  +n  \sum_{j=p_0 +1}^{p} \lambda_j f_j'(\tilde{\beta_j}) u_j \right) \\
 \nonumber &= R(\beta_0 +\alpha_n u | x^n) - R(\beta_0| x^n) + ( \sqrt{n}  + n a_n ) \left(  \sum_{j=1}^{p_0} \lambda_j f_j'(\tilde{\beta_j})  u_j  +n  \sum_{j=p_0 +1}^{p} \lambda_j f_j'(\tilde{\beta_j}) u_j \right)\\
 \nonumber & \geq R(\beta_0 +\alpha_n u | x^n) - R(\beta_0| x^n) + ( \sqrt{n}  + n a_n )a_n \left(  \sum_{j=1}^{p_0} f_j'(\tilde{\beta_j})  u_j  +n \sum_{j=p_0 +1}^{p} f_j'(\tilde{\beta_j})  u_j \right) \\
 \nonumber & = R(\beta_0 +\alpha_n u | x^n) - R(\beta_0| x^n) + ( \sqrt{n} a_n  + n a_n^2 ) \left( \sum_{j=1}^{p_0} f_j'(\tilde{\beta_j})  u_j  +n  \sum_{j=p_0 +1}^{p} f_j'(\tilde{\beta_j}) u_j \right) \\
 &= R(\beta_0 +\alpha_n u | x^n) - R(\beta_0| x^n) + \sqrt{n} a_n(1  + \sqrt{n} a_n ) \left(  \sum_{j=1}^{p_0}f_j'(\tilde{\beta_j})  u_j  +n  \sum_{j=p_0 +1}^{p} f_j'(\tilde{\beta_j}) u_j \right)  \label{dnu2}
 \end{align}
where $p_0$ is the number of components in $\beta_{10}$.

Now by Taylor expansion of the empirical risk at $\beta_0$, 
\begin{align}
\nonumber R(\beta_0 + \alpha_nu | x^n ) &- R(\beta_0| x^n)  = \alpha_n u R'(\beta_0 | x^n) + \frac{n}{2} (\alpha_n u)' \left(I^*(\beta_0 | x^n)\right) (\alpha_n u) \\
\nonumber & +  \frac{n}{2} (\alpha_n u)'\left(R''(\tilde{\beta}|x^n) \pm \frac{1}{n}\sum^n_{i=1} I^*(\beta| x^n) \pm I^*(\beta)\right) (\alpha_n u) \\ 
&= \alpha_n R'(\beta_0 | x^n) u  + \frac{n}{2} u' I^*(\beta_0 ) u  \alpha^2_n \{1 +o_p(1) \} \label{lnu2}
\end{align}
because  by Condition 7 we have 
$$ 
E\left[\sup_{\beta \in B(\beta_0,\eta)} \left|R''(\tilde{\beta}| x^n) - I^*(\beta_0 | x^n)\right| \right]\rightarrow 0
$$ 
as $\eta \rightarrow 0$ and where $I^*(\beta_0 | x^n) = E_L (R''_{jk} (\beta_0 | x^n ))_{jk}$.

Using $(\ref{dnu2})$ and $(\ref{lnu2})$ , 
\begin{eqnarray} 
D_n(u) &\geq&  \alpha_n R'(\beta_0 | x^n) u  + \frac{n}{2} u' I^*(\beta_0 | x^n) u  \alpha^2_n \{1 +o_p(1) \}
\nonumber \\
&+& \sqrt{n} a_n  ( 1  + \sqrt{n} a_n)   \sum_{j=1}^{p_0}f_j'(\tilde{\beta_j})  u_j 
\nonumber \\
&+& \sqrt{n} a_n  ( 1  + \sqrt{n} a_n)   \sum_{j=p_0+1}^{p}f_j'(\tilde{\beta_j})  u_j .
\label{A22}
\end{eqnarray}

Now we argue that the second term on the RHS of (\ref{A22}) dominates the others.  To see this, consider the first term on the RHS of (\ref{A22}). We multiply by $\frac{\sqrt{n}}{\sqrt{n}}$ and we have $\frac{\sqrt{n}}{\sqrt{n}} \alpha_n R'(\beta_0 | x^n) u$. Let $\bar{R} = \sqrt{n} \frac{\sum^n_{i=1} d'(y_i - x_i'\beta)}{n}$. Thus we rewrite $\frac{\sqrt{n}}{\sqrt{n}} \alpha_n R'(\beta_0 | x^n) u$ as $\frac{\alpha_n }{\sqrt{n}} \bar{R} u$. Now by the Central Limit Theorem, $$\sqrt{n} \left[ \frac{\sum^n_{i=1} d'(y_i - x_i'\beta)}{n} - E [ d'(y_i - x_i'\beta)]\right] $$ converges in distribution to a normal distribution since $E [ d'(y_i - x_i'\beta)]=0$ due to Lemma \ref{exp_dprime}  and because $\frac{ \alpha_n}{\sqrt{n}} \rightarrow 1$. Thus,  $C^2 = ||u||^2 = u'u > \bar{R} u$ as long as $0 < \frac{1}{B} < \bar{R} <  B$ for $C > B$ with high probability.  Then with high probability  $C^2 \geq B \underline{1}(u) \geq B \underline{1} u $ for $||u|| =C$.  Thus we have that by choosing sufficiently large $C$, the second term of (\ref{A2}) is larger in absolute value, with high probability than the first term uniformly in $||u|| =C$.

Also, by hypothesis $\sqrt{n} a_n \rightarrow 1 $, so the whole third term in (\ref{A22}) goes to zero and the second term of (\ref{A22}) is also larger than the third term. The fourth term of (\ref{A22}) also goes to zero because for $p_0 \leq j \leq p$, we know that $\beta_j =0$, so $\tilde{\beta_j} \rightarrow 0$ as $ n \rightarrow 0 $.Therefore  $\sum_{j=p_0+1}^{p}f_j'(\tilde{\beta_j})  u_j \rightarrow 0$ by Condition 9. It follows that the third term of (\ref{A22}) is larger than the fourth term, which implies that the second term of (\ref{A22}) is larger than all of the other terms. 

Therefore,  by choosing large enough $C$, 
$$ P\left\{\inf_{||u||=C} D_n(u) \geq  \frac{n}{2} u' I^*(\beta_0 | x^n) u  \alpha^2_n \{1 +o_p(1) \} \right \} \geq 1-\epsilon .$$ Thus, (\ref{showthis2}) is true for sufficiently large $C$.

\textbf{Step 2.}
We now show the conclusion of Step 1 holds for any $C^* > C$.
Let $C^\star > C$. Then,
\begin{align*}
\hat{\beta} & \in \{\beta_0 + \alpha_n u : ||u|| =C^* \} \\
&\iff \hat{\beta}  \in \cal{B}\left(\beta_0 , \alpha_n C^*\right) \\
&\iff \hat{\beta} - \beta_0  \in \cal{B}\left(0, \alpha_n C^* \right) \\
&\iff \alpha_n^{-1}(\hat{\beta} - \beta_0)  \in \cal{B}(0, C^*) \\
&\iff ||\alpha_n^{-1}(\hat{\beta} - \beta_0)|| < C^*\\
&\iff||\hat{\beta} - \beta_0||^2 < \alpha_n^2{C^*}^2
\end{align*}

It follows that  $ \sum^{p}_{j=1}(\hat{\beta}_j - \beta_{0j})^2 \leq \alpha_n^2{C^*}^2$  and $\forall j$ , $(\hat{\beta}_j - \beta_{0j})^2 \leq \alpha_n^2{C^*}^2$.  So,
$$ 
-\alpha_n C^* \leq \hat{\beta}_j - \beta_{0j} \leq \alpha_n C^*
$$ 
and we can absorb $C^*$ into $\alpha_n$ because $C^*$ is a constant. Thus we have $\hat{\beta} - \beta_0  = O_p(\alpha_n )$. Note that $$\alpha_n = \frac{1}{\sqrt{n}} + a_n = \frac{1+\sqrt{n}a_n}{\sqrt{n}}$$ and as $n \rightarrow \infty$ we have $\sqrt{n} a_n \rightarrow 0$, so $\hat{\beta} - \beta_0  = O_p\left(\frac{1}{\sqrt{n}}\right)$. 
\end{proof}

Again, we discuss the way you can get this result for penalty functions that have finitely many points that are non-differentiable, particularly at the origin in the following remark.

\begin{remark}
 The argument for Theorem 1 is true for differentiable $f_j(\cdot)$, but not non-differentiable $f_j(\cdot)$. In the case of non-differentiable $f_j(\cdot)$, consider a smooth function that approximates $f_j(\cdot)$ well. Say, $f_j^{\star}(\cdot)$, which differs from $f_j(\cdot)$ by a margin of $\epsilon$ where $\epsilon_N \rightarrow 0$ as $n \rightarrow \infty$. As long as $ \lim_{ \beta \rightarrow \beta^{* -}} f_j^{\star}(\cdot) =  \lim_{ \beta \rightarrow \beta^{* +}} f_j^{\star}(\cdot)$, then the above argument holds. 
\end{remark}

\begin{lemma}
\label{sparse}
Assume conditions 7--10 hold.  If $\sqrt{n}b_n \rightarrow \infty$ as $n \rightarrow \infty$ and and $b_n$ satisfies $b_n = \frac{g(n)}{\sqrt{n}}$ where $\frac{g(n)}{\sqrt{n}} \rightarrow 0$ and $g(n) \rightarrow \infty$, then with probability tending to 1, for any given $\beta_1$ satisfying $|| \beta_1 - \beta_{10}|| = O_p(n^{-\frac{1}{2}})$ and any constant $C$, 
\begin{equation}\label{beta_min2}
 Q \left \{ \left(\begin{matrix} \beta_1 \\0\end{matrix}\right) \right \} =  \max_{||\beta_2 ||\leq Cn^{-\frac{1}{2}}} Q \left \{ \left(\begin{matrix} \beta_1 \\ \beta_2\end{matrix}\right) \right \}.
 \end{equation}
\end{lemma}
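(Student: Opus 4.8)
The plan is to run, for the empirical risk, the sign-of-the-gradient argument used for Lemma~\ref{sparsity}, with two substitutions: the log-likelihood $L(\beta\mid x^n)$ is replaced by $R(\beta\mid x^n)$, and the fact that the likelihood score is centered is replaced by Lemma~\ref{exp_dprime}. Fix an index $j>p_0$, so the true value is $\beta_{j0}=0$, and fix a point $\beta=(\beta_1',\beta_2')'$ with $\|\beta_1-\beta_{10}\|=O_p(n^{-1/2})$ and $\|\beta_2\|\le Cn^{-1/2}$; every such $\beta$ lies within $O_p(n^{-1/2})$ of $\beta_0=(\beta_{10}',0')'$, which is what justifies expanding about $\beta_0$. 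First I would differentiate the objective and record the leading piece of the risk gradient,
\[
\frac{\partial Q(\beta)}{\partial \beta_j}=\frac{\partial R(\beta\mid x^n)}{\partial \beta_j}+n\lambda_j f_j'(\beta_j),\qquad \frac{\partial R(\beta_0\mid x^n)}{\partial \beta_j}=-\frac1n\sum_{i=1}^n d'(\epsilon_i)x_{ij}.
\]

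Next I would bound $\partial R/\partial\beta_j$ by Taylor expanding it about $\beta_0$ and treating each term separately. The constant term is centered, $E[d'(\epsilon_i)]=0$ by Lemma~\ref{exp_dprime} (the errors are symmetric about $0$ and $d'$ is odd since $d$ is even), so the central limit theorem makes it $O_p(n^{-1/2})$; the linear term multiplies $\partial^2 R(\beta_0)/\partial\beta_j\partial\beta_\ell$, which is $O_p(1)$ and converges to the $(j,\ell)$ entry of $I^*(\beta_0)$ by Condition~7, against a displacement $(\beta_\ell-\beta_{\ell0})=O_p(n^{-1/2})$ furnished by Theorem~\ref{consistency2}; the quadratic remainder carries a bounded third derivative times two such displacements. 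Hence every contribution is $O_p(n^{-1/2})$ and
\[
\frac{\partial Q(\beta)}{\partial \beta_j}=O_p(n^{-1/2})+n\lambda_j f_j'(\beta_j).
\]

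I would then read off the sign from the rate on $b_n$. Because $\lambda_j\ge b_n=g(n)/\sqrt n$ for $j>p_0$, the penalty contribution $n\lambda_j f_j'(\beta_j)$ is of larger order than the $O_p(n^{-1/2})$ risk contribution once $\sqrt n\,b_n=g(n)\to\infty$, so with probability tending to one the sign of $\partial Q/\partial\beta_j$ agrees with the sign of $f_j'(\beta_j)$. By Condition~9, $f_j'(0)=0$ with $f_j'$ negative for $\beta_j<0$ and positive for $\beta_j>0$, so $Q$ is decreasing then increasing in $\beta_j$ across zero; since Conditions~9 and~10 hold uniformly in $j$, this applies simultaneously to all $j>p_0$, forcing the optimum of $Q$ over $\{\|\beta_2\|\le Cn^{-1/2}\}$ to occur at $\beta_2=0$ and establishing~\eqref{beta_min2}.

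I expect the main obstacle to be the centering and uniform control of the leading risk-gradient term. Unlike a genuine likelihood, an empirical risk has no built-in first-order optimality at $\beta_0$, so the vanishing of $E[\partial R(\beta_0)/\partial\beta_j]$ must be imported from the symmetry hypothesis via Lemma~\ref{exp_dprime}; this is exactly what pins the leading term at $O_p(n^{-1/2})$ and lets the penalty derivative win. The delicate point within this is that $f_j'(\beta_j)\to0$ as $\beta_j\to0$, so near the center of the ball both the risk and penalty gradients are small, and one must verify that the factor $\sqrt n\,b_n=g(n)\to\infty$ still makes the penalty dominate on the relevant annulus $|\beta_j|\gg 1/(n g(n))$, the complementary sliver being negligible by continuity. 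Tracking these orders carefully, and uniformly in $j$ through Conditions~9 and~10, is where the real work lies.
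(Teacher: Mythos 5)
Your proposal is correct and follows essentially the same route as the paper's proof: differentiate $Q$, Taylor expand the risk gradient about $\beta_0$, show every risk contribution is $O_p(n^{-1/2})$, and let $n\lambda_j f_j'(\beta_j)$ with $\sqrt{n}b_n\rightarrow\infty$ determine the sign so that $\beta_j=0$ is the optimum for $j>p_0$. Your explicit appeal to Lemma~\ref{exp_dprime} to center the leading risk-gradient term, and your remark about the annulus where $f_j'(\beta_j)$ is itself small, make explicit two points the paper's proof leaves implicit, but the argument is the same one.
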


\begin{proof}
Consider the objective function $Q(\beta) = R(\beta | x^n) + n \sum^p_{j=1} \lambda_j f_j(\beta_{j})$. Note that 
\begin{equation}\label{A3}
 \frac{\partial Q(\beta)}{\partial \beta_j} = \frac{\partial R(\beta | x^n)}{\partial \beta_j} + n\lambda_j f_j'(\beta_j) .
\end{equation}

Then by Taylor expanding at $\beta_0 = \left(\begin{matrix} \beta_{10} \\0 \end{matrix}\right) $ we have 
\begin{align*}
 \frac{\partial Q(\beta)}{\partial \beta_j}  = \frac{\partial R(\beta_0| x^n)}{\partial \beta_j} &+  \sum^p_{\ell =1} \frac{\partial^2 R(\beta_0| x^n)}{\partial \beta_j \partial \beta_{\ell}}(\beta_{\ell} - \beta_{\ell 0}) \\
 &+  \sum^p_{\ell = 1}  \sum^p_{k =1} \frac{\partial^3 R(\beta^* | x^n)}{\partial \beta_j \partial \beta_{\ell} \partial \beta_k}(\beta_{\ell} - \beta_{\ell 0})(\beta_k - \beta_{k0})  + n \lambda_j f_j'(\beta_j)  
\end{align*}

where $\beta^*$ lies between $\hat{\beta}$ and $\beta_0$. Note that $$ \frac{\partial R(\beta_0| x^n)}{\partial \beta_j} = O_p(n^{-\frac{1}{2}})$$
and due to \cite{Hoadley:1971} and by the Law of Large Numbers for non-identically distributed random variables we have $$   \frac{\partial^2 R(\beta_0| x^n)}{\partial \beta_j \partial \beta_{\ell}} =  E\left[ \frac{\partial^2 R(\beta_0| x^n)}{\partial \beta_{j} \partial \beta_{\ell}}  \right]+ o_p(1).$$

Also, due to Theorem 1, $\hat{\beta} - \beta = O_p\left(\frac{1}{\sqrt{n}}\right)$, so we have

\begin{align*}
 \frac{\partial Q(\beta)}{\partial \beta_j} & =    \frac{\partial R(\beta_0| x^n)}{\partial \beta_j} +  \sum^p_{\ell =1} \frac{\partial^2 R(\beta_0| x^n)}{\partial \beta_j \partial \beta_{\ell}}(\hat{\beta_{\ell}} - \beta_{\ell 0}) \\
 &+  \sum^p_{\ell = 1}  \sum^p_{k =1} \frac{\partial^3 R(\beta^*|x^n)}{\partial \beta_j \partial \beta_{\ell} \partial \beta_k}(\hat{\beta_{\ell}} - \beta_{\ell 0})(\hat{\beta_k} - \beta_{k0})  + n \lambda_j f_j'(\beta_j)  \\
 &= O_p\left(\frac{1}{\sqrt{n}}\right) +   \sum^p_{\ell =1}\left( E\left[ \frac{\partial^2 R(\beta_0| x^n)}{\partial \beta_j \partial \beta_{\ell}}\right] +o_p(1)\right)O_p\left(\frac{1}{\sqrt{n}}\right)\\
&  +  \sum^p_{\ell = 1}  \sum^p_{k =1}E\left[ \sup_{\beta \in N} \left| \frac{ \partial^3 R(\beta^* | x^n)}{\partial \beta_j \partial \beta_{\ell} \partial \beta_k} \right| +o_p(1) \right]O_p\left(\frac{1}{\sqrt{n}}\right)O_p\left(\frac{1}{\sqrt{n}}\right) + n \lambda_j f_j'(\beta_j)\\
& =  \left[O_p\left(\frac{1}{\sqrt{n}}\right) + O_p\left(\frac{1}{\sqrt{n}}\right) + O_p\left(\frac{1}{\sqrt{n}}\right) \right] +  n \lambda_j f_j'(\beta_j)\\
& =  O_p\left(\frac{1}{\sqrt{n}}\right)+ n\lambda_j f_j'(\beta_j) \\
&= \sqrt{n}\left[ O_p\left(\frac{1}{n}\right) + \sqrt{n}\lambda_j f_j'(\beta_j)\right]\\
& \geq \sqrt{n}\left[ O_p\left(\frac{1}{n}\right) + \sqrt{n}b_n f_j'(\beta_j)\right]
\end{align*}
because $p$ is finite and using Slutsky's Theorem.  Since $\sqrt{n} b_n \rightarrow \infty$,  the sign of 
$$
\frac{\partial Q(\beta)}{\partial \beta_j} = \sqrt{n}\left[ O_p\left(\frac{1}{n}\right) + \sqrt{n}\lambda_j f_j'(\beta_j)\right]
$$ 
is determined by the sign of $f_j'(\beta_j)$.  That is,  since $f_j'(0)=0$, we have 
$$
\begin{cases} 	
     f_j'(\beta_j) < 0 ,& \beta_j < 0,\\
     f_j'(\beta_j)=0 ,& \beta_j = 0 \\
     f_j'(\beta_j) > 0, & \beta_j > 0
   \end{cases}
$$
which implies that we have a minimum at $\beta_j=0$ and (\ref{beta_min2}) is true.  
\end{proof}

The proof of the OP for penalized log-likelihood Theorem \ref{empriskOP} is as follows.

\begin{proof}
First, we observe that the estimator $\hat{\beta}=(\hat{\beta_{1}}',\hat{\beta_2}')'$ satisfies $P(\hat{\beta_2}=0) \rightarrow 1$ due to Lemma \ref{sparse}.

Now to prove the asymptotic normality piece, consider $\frac{\partial Q(\beta)}{\partial \beta_j} $. It can be shown that there exists a $\hat{\beta_1}$ in Theorem 3 that is a $\sqrt{n}$ consistent minimizer of $Q \left\{\left(\begin{matrix} \beta_{1} \\0 \end{matrix} \right) \right\}$  which is a function of $\beta_1$ and satisfies the likelihood equations 
$$
\frac{\partial Q(\beta)}{\partial \beta_j}  \bigg|_{\hat{\beta}=\left(\begin{matrix} \hat{\beta_{1}} \\0 \end{matrix} \right)} = 0 
$$ 
for $j = 1, \ldots, p_0$. 

 We see that Taylor expanding at $\beta_0 $ gives
\begin{align*}
0= \frac{\partial Q(\beta)}{\partial \beta_j}\bigg|_{\hat{\beta}=\left(\begin{matrix} \hat{\beta_{1}} \\0 \end{matrix} \right)} &= \frac{\partial R(\hat{\beta} | x^n)}{\partial \beta_j}   + \frac{\partial}{\partial \beta_j} n \sum^{p_0}_{j=1} \lambda_j f_j(\hat{\beta_j})\\
&=  \frac{\partial R(\beta_0 | x^n)}{\partial \beta_j}  + \sum^{p_0}_{j=1} \frac{\partial^2 R(\tilde{\beta} | x^n)}{\partial \beta_{\ell}\partial \beta_j} (\hat{\beta_{\ell}} - \beta_{\ell 0}) + n\lambda_j f_j'(\hat{\beta_j}) \\
\end{align*}
where $\tilde{\beta} \in <\beta_0 , \hat{\beta_1} >$. Furthermore, 

\begin{align}\label{asymp_normal2}
\nonumber - \frac{\partial R(\beta_0 | x^n)}{\partial \beta_j} &= \sum^{p_0}_{j=1} \frac{\partial^2 R(\tilde{\beta} | x^n)}{\partial \beta_{\ell}\partial \beta_j} (\hat{\beta_{\ell}} - \beta_{\ell 0}) + n\lambda_j f_j'(\hat{\beta_j}) \\
\nonumber-\frac{\partial R(\beta_0 | x^n)}{\partial \beta_j}  &= n  \sum^{p_0}_{j=1} \frac{1}{n}\frac{\partial^2 R(\tilde{\beta} | x^n)}{\partial \beta_{\ell}\partial \beta_j} (\hat{\beta_{\ell}} - \beta_{\ell 0}) + n\lambda_j f_j'(\hat{\beta_j})\\
\nonumber- \frac{\partial R(\beta_0 | x^n)}{\partial \beta_j}  & = -n \sum^{p_0}_{j=1} I^*_{\ell j}(\beta_{10} | x^n) (\hat{\beta_{\ell}} - \beta_{\ell 0}) + n\lambda_j f_j'(\hat{\beta_j})\\
\nonumber- \frac{\partial R(\beta_0 | x^n)}{\partial \beta_j}  & = - \sqrt{n} \left[ \sqrt{n}I^*_{1}(\beta_{10}| x^n) (\hat{\beta_{1}} - \beta_{10}) + \sqrt{n}\lambda_j f_j'(\hat{\beta_j})\right]\\
\frac{1}{\sqrt{n}}  \frac{\partial R(\beta_0 | x^n)}{\partial \beta_j}  & =  \sqrt{n}I^*_{1}(\beta_{10}| x^n) (\hat{\beta_{1}} - \beta_{10}) - \sqrt{n}\lambda_j f_j'(\hat{\beta_j})
\end{align}
where the third to last line is due Condition 7.  Here $I^*_1$ is like any $I^*_n$ but restricted the upper block of $I^*$.

To finish the proof, we examine the terms in (\ref{asymp_normal2}).  By supposition, $\sqrt{n}a_n \rightarrow 0$ and $f_j'(\hat{\beta})$ is bounded. Thus, $\sqrt{n}\lambda_j f_j'(\hat{\beta_j}) \rightarrow 0$ for $p_0 < j \leq p$  since $\lambda_j < a_n$. Then by the Central Limit Theorem, Condition 3 and using Lemma \ref{exp_dprime} and the linearity of the expectation operator we know $E\left(\frac{\partial R(\beta_0 | x^n)}{\partial \beta_j}\right)=0$, so we have  
$$
\sqrt{n}\left(\frac{1}{n}\frac{\partial R(\beta_0 | x^n)}{\partial \beta_j} - E\left(\frac{\partial R(\beta_0 | x^n)}{\partial \beta_j}\right)\right)\overset {D} {\longrightarrow}  \mathcal{N}(0 , I^*_1(\beta_{10})). 
$$
Thus, 
$$
 \frac{1}{\sqrt{n}}\frac{\partial R(\beta_0 | x^n)}{\partial \beta_j} \overset {D} {\longrightarrow}  \mathcal{N}(0 , I^*_1(\beta_{10})).
$$ 
By equality in (\ref{asymp_normal}), it follows that 
$$ 
\sqrt{n} I^*_{1}(\beta_{10}| x^n) (\hat{\beta_{1}} - \beta_{10})  \overset {D} {\longrightarrow} \mathcal{N}(0 , I^*_1(\beta_{10})).  
$$ 
\end{proof}


\section*{References}

\bibliographystyle{elsarticle-num}
\bibliography{shrinkagerefs}


\end{document}